%TODO
%why would recovery procedure not have started?

% joachim fc24 paper?
%-EAAC => accountability, strict separation (future work)

\documentclass[format=acmsmall, review=false]{acmart}

\usepackage{acme20} % For EC submission, comment this out and uncomment the next line 

\usepackage[utf8]{inputenc}
\usepackage{amsmath}
\usepackage{array}
\usepackage{graphicx}
\usepackage[english]{babel}
\usepackage[utf8]{inputenc}
\usepackage{algorithm}
\usepackage[noend]{algpseudocode}

\usepackage{enumitem}
\usepackage{wasysym}
\usepackage{framed}
\usepackage{pgfgantt,rotating}
\usepackage{subfloat}
\usepackage{multirow}
\usepackage{blindtext}

%\newcommand{\andy}[1]{\textcolor{red}{[Andy: #1]}}
%\newcommand{\tim}[1]{\textcolor{blue}{[Tim: #1]}}
%\newcommand{\eric}[1]{\textcolor{orange}{Eric: #1}}

%\usepackage{algorithm}
%\usepackage[noend]{algpseudocode}
%\usepackage[pagebackref=true]{hyperref}
%
%\usepackage{enumitem}
%\usepackage{wasysym}
%\usepackage{framed}
%\usepackage{pgfgantt,rotating}
%\usepackage{subfloat}
%\usepackage{multirow}
%\usepackage{blindtext}
%
%\usepackage{booktabs} 
%%\usepackage[ruled]{algorithm2e} 
%%\renewcommand{\algorithmcfname}{ALGORITHM}
%\SetAlFnt{\small}
%\SetAlCapFnt{\small}
%\SetAlCapNameFnt{\small}
%\SetAlCapHSkip{0pt}
%\IncMargin{-\parindent}
%
%\usepackage{mathtools}
% \usepackage{graphicx}
% \usepackage{float}
% \usepackage{tikz, xcolor}
% \usetikzlibrary{arrows.meta}
%\usetikzlibrary {shapes}
%
%\usepackage{amsmath,amssymb,color,pdfsync,url}
%\usepackage[mathscr]{eucal}
%
%
%\usepackage{amsthm}

%
%\makeatletter
%\newtheorem*{rep@theorem}{\rep@title}
%\newcommand{\newreptheorem}[2]{%
%\newenvironment{rep#1}[1]{%
% \def\rep@title{#2 \ref{##1}}%
% \begin{rep@theorem}}%
% {\end{rep@theorem}}}
%\makeatother
%
%
%\setcitestyle{acmnumeric}

\title{The Economic Limits of Permissionless Consensus}

%\author{Submission 622}

\author{Eric Budish}
%\email{a.lewis7@lse.ac.uk}
\affiliation{%
  \institution{\department{Booth School of Business} \institution{University of Chicago} \city{Chicago} \country{USA}}
}
\email{eric.budish@chicagobooth.edu}

\author{Andrew Lewis-Pye}
%\email{a.lewis7@lse.ac.uk}
\affiliation{%
  \institution{\department{Department of Mathematics} \institution{London School of Economics} \city{London} \country{UK}}
}
\email{a.lewis7@lse.ac.uk}

\author{Tim Roughgarden}
%\email{a.lewis7@lse.ac.uk}
\affiliation{%
  \institution{\department{Computer Science Department}
    \institution{Columbia University \& a16z Crypto} \city{New York} \country{USA}}
}
\email{tim.roughgarden@gmail.com}

\date{Feb 2024}

\begin{abstract} 
The purpose of a consensus protocol is to keep a distributed network
of nodes “in sync,” even in the presence of an unpredictable
communication network and adversarial behavior by some of the
participating nodes. In the permissionless setting relevant to modern
blockchain protocols, these nodes may be operated by a large number of
unknown players, with each player free to use multiple identifiers and
to start or stop running the protocol at any time. Establishing that a
permissionless consensus protocol is "secure" thus requires both a
distributed computing argument (that the protocol guarantees consistency
and liveness unless the fraction of adversarial participation is
sufficiently large) and an economic argument 
(that 
carrying out an attack would be
prohibitively expensive for a potential attacker). There is a mature
toolbox for assembling arguments of the former type; the goal of this
paper is to lay the foundations for arguments of the latter type. For
example, the Ethereum protocol is oft-claimed to be "more economically
secure" after "the merge," meaning in its current proof-of-stake
incarnation relative to the (proof-of-work) original. What, formally,
does this assertion mean? Is it true? Could there be alternative
protocols that are "still more economically secure" than Ethereum? How
do the answers depend on the assumptions imposed on, for example, the
reliability of message delivery or the active participation of
non-malicious players?

An ideal permissionless consensus protocol would, in addition to satisfying standard consistency and liveness guarantees, render consistency violations prohibitively expensive for the attacker without collateral damage to honest participants---for example, by programatically confiscating an attacker's resources without reducing the value of honest participants' resources, as is the intention for slashing in a proof-of-stake protocol. We make this idea precise with our notion of the EAAC (expensive to attack in the absence of collapse) property,
and prove the following results: 
\begin{itemize}
\item In the synchronous and dynamically available setting (in which
the communication network is reliable but non-malicious players may be
periodically inactive), with an adversary that controls at least
one-half of the overall resources, no protocol can be EAAC. 
In particular, this result rules out EAAC for all typical longest-chain protocols (be they proof-of-work or proof-of-stake).

\item In the partially synchronous and quasi-permissionless setting
(in which resource-controlling non-malicious players are always active
but the communication network may suffer periods of unreliability),
with an adversary that controls at least one-third of the overall
resources, no protocol can be EAAC. In particular, slashing in a proof-of-stake protocol cannot achieve its intended purpose if message delays cannot be bounded a priori.

\item In the synchronous and quasi-permissionless setting, there is a
proof-of-stake protocol with slashing that, provided the adversary controls less
than two-thirds of the overall stake, satisfies the EAAC property.

\end{itemize}
Thus, while only ``classical security'' is possible in the dynamically
available or partially synchronous settings, proof-of-stake protocols
with slashing can obtain additional ``economic security'' 
in the quasi-permissionless and synchronous setting.
All three results are optimal with respect to the size of the
adversary. 

With respect to Ethereum, our work formalizes the potential security
benefits of proof-of-stake sybil-resistance coupled with slashing and
the common belief that the merge has increased Ethereum's economic
security. Our work also provides mathematical justifications for
several key design decisions behind the post-merge Ethereum protocol,
ranging from long cooldown periods for unstaking to economic penalties
for inactivity.

\end{abstract}

\begin{document}

\begin{titlepage}

\maketitle

\end{titlepage}

\section{Introduction}\label{s:intro}

\subsection{The Security of Permissionless Consensus Protocols}\label{ss:intro}

The core functionality of a blockchain protocol such as Bitcoin or
Ethereum is {\em permissionless consensus}, with a potentially large
and ever-evolving set of participants kept in sync on the state of the
blockchain via a consensus protocol.
Compared to the traditional setting of permissioned consensus
protocols (with a fixed and known participant set), permissionless
protocols must cope with three novel challenges (cf., \cite{lewis2023permissionless}):
\begin{itemize}

\item \textbf{The unknown players challenge.}
The set of participants is unknown at the time of protocol deployment
and is of unknown size.

\item \textbf{The player inactivity challenge.} Participants can start
  or stop running the protocol at any time.

\item
\textbf{The sybil challenge.} One participant may masquerade as many
by using many identifiers (a.k.a.\ ``sybils'').

\end{itemize}

Reasoning about the security of a \emph{permissioned} consensus protocol is,
to a large extent, a purely computer science question. 
%\andy{It's a quibble, but somehow ``computer science question'' doesn't sound right to me. You wouldn't say it's a ``mathematics question'', you'd say it's a ``mathematical question'' or a ``question of mathematics''. Feel free to ignore and delete this comment at will :-)} \tim{spoken like a mathematician :)}
(Here ``security'' means
that the protocol satisfies both liveness and consistency---as long as
there's work to be done it gets done, and without any two participants
ever committing to conflicting decisions.)  Consider, for example, a
corporation that wants to grant customers access to a database while
keeping its downtime percentage to 0.0001\%. One approach would be
to replicate the database on many different servers and use a
consensus protocol to keep those replicas in sync. How many replicas
are necessary and sufficient to achieve the desired uptime? Or, in two
parts:
\begin{itemize}

\item [(Q1)] For a given value of~$k$, how many replicas~$n(k)$ are
  necessary and sufficient to guarantee security even when~$k$ of the
  replicas have failed?

\item [(Q2)] For a given target downtime percentage~$\delta$, what
  is the smallest value of~$k$ such that the probability that at
  least~$k+1$ of~$n(k)$ replicas fail simultaneously is at
  most~$\delta$?

\end{itemize}
The distributing computing literature resolves question~(Q1) for a
staggering variety of settings (with~$n(k)=2k+1$ and $n(k)=3k+1$ being
two of the most common answers); see, e.g., \cite{cachin2011introduction,lynch1996distributed} for an
introduction. Given a probabilistic model of replica failures,
question~(Q2) then boils down to a calculation, providing the
corporation with the appropriate value of~$k$ and the corresponding
number~$n(k)$ of servers that it should buy.

Reasoning about the security of a \emph{permissionless} consensus protocol
fundamentally requires the synthesis of computer science and economic
arguments.
First, the sybil challenge generally forces such a protocol to measure
``size'' in terms of some resource that is, unlike identifiers, scarce
and therefore costly. Common examples include hashrate (as in a
proof-of-work protocol) and staked cryptocurrency (as in a
proof-of-stake protocol).
Second, while misbehaving replicas in a permissioned protocol are
generally attributed to hardware failures and software bugs,
permissionless protocols must tolerate deliberately malicious behavior
by a motivated attacker (a hacker, designers of a competing protocol, or even an
unfriendly nation-state).
The analogs of questions~(Q1) and~(Q2) are then:
\begin{itemize}

\item [(Q3)] What is the largest value of~$\rho$ for which a protocol
  can guarantee security even when a $\rho$ fraction of the costly
  resource is controlled by an attacker?

\item [(Q4)] How unlikely is it that an attacker controls more
  than a $\rho$ fraction of the costly resource and then carries
  out an attack?

\end{itemize}
Question~(Q3) is well defined, and the last decade of research on
blockchain protocols has answered it in a range of settings.
Question~(Q4) makes no sense without an economic model for the cost of carrying out an attack. The goal of this paper is to develop a mathematical framework for quantifying this cost and for designing permissionless consensus protocols in which this cost is as large as possible.

\subsection{The Economic Consequences of an Attack: Scorched Earth vs.\ Targeted Punishment}\label{ss:scorched}

In the Bitcoin white paper, Nakamoto~\cite{nakamoto2008bitcoin} noted that an
attacker controlling 51\% of the overall hashrate could force
consistency violations and thereby carry out double-spend attacks, but
suggested that the consequent economic cost might make such an attack
unprofitable:
\begin{quote}
  If a greedy attacker is able to assemble more CPU power than all the
  honest nodes, he would have to choose between using it to defraud
  people by stealing back his payments, or using it to generate new
  coins. He ought to find it more profitable to play by the rules,
  such rules that favour him with more new coins than everyone else
  combined, than to undermine the system and the validity of his own
  wealth.
\end{quote}
This argument rests on the assumption that a double-spend attack
would cause a significant and permanent drop in the USD-denominated
market price of Bitcoin's native currency~BTC, with the attacker then
foregoing most of the USD-denominated value of the future BTC block
rewards that it's positioned to receive.

More recently, this initial narrative around the security of Bitcoin and other
proof-of-work blockchain protocols has evolved into a second narrative, for two
reasons. First, empirical evidence for the ``double-spends will crash
the cryptocurrency price'' hypothesis has been weak. Second, the ``CPUs'' that Nakamoto referred to have been almost entirely replaced
by ASICs that serve no purpose other than to evaluate a hard-coded
cryptographic hash function such as SHA-256.  Now, the story goes: if
an attacker with 51\% of the hashrate were to carry out a double-spend
attack on (say) the Bitcoin protocol, the Bitcoin ecosystem could
respond with the ``nuclear option,'' changing the cryptographic hash
function used for proof-of-work mining via a coordinated upgrade to
the protocol (a ``hard fork''). Such an upgrade would render existing
ASICs useless, leaving the attacker with a defunct pile of scrap
metal. Hopefully, the mere threat of this nuclear option would deter any
potential attackers, and the option would never have to actually be
exercised.
%\eric{I don't find this evidence particularly convincing since there haven't been any large-value double spends (largest I know is BTG at \$18Million) and the coins that have been attacked are barely usable outside of speculation. For instance, there hasn't been a double-spend attack yet that would make a major news outlet like the WSJ or NYT or FT. I'll let you drive but my preference would be to just get from Nakamoto's original argument about coin price drops to its second instantiation in scrap-metal ASICs}\tim{Would you agree that "empirical evidence for the ``double-spends will crash the cryptocurrency price'' hypothesis has been weak"? I certainly wouldn't count on it.}

An alarming aspect of both narratives
%Both of these narratives 
%\andy{FYI, I managed to briefly get confused between the two `narratives' (price crash/asics become junk) and the two `reasons' from the paragraph above (probably because there are two of each). Perhaps we could indicate more clearly that the paragraph above is the second `narrative'?}\tim{fixed}
%are alarming for two related reasons. The first
is the ``scorched earth'' nature of an attack's consequences: honest
participants (passive holders of BTC or ASIC-owning honest miners,
respectively) are harmed as much as the attacker. 
%
% The second is that, after the
% attack and its consequences, the attacker might well continue to wield
% full control over the blockchain protocol in perpetuity,
% double-spending at will: In the first narrative, the attacker
% continues to control 51\% of the hashrate after the attack; in the
% second, it's plausible that the (apparently very motivated) attacker
% could, after the hard fork, quickly acquire a majority of the overall
% computational power used to mine under the new cryptographic hash
% function. 
%
Is scorched earth-style punishment fundamental to permissionless
consensus, or an artifact of the specific design decisions made in
Bitcoin and other proof-of-work protocols?

Ideally, a blockchain protocol could punish an attacker that carries
out a double-spend attack in a targeted and non-scorched-earth way,
leaving honest participants unharmed. The hope for such ``asymmetric
punishment'' has long rested with proof-of-stake blockchain protocols,
in which the ``power'' of a participant is proportional to how much
of the protocol's native currency they have locked up in a
designated staking contract.  Intuitively, with the scarce and costly
resource controlled directly by the protocol (rather than
``off-chain,'' as with hashrate), such a protocol is positioned to
directly and surgically confiscate resources from specific
participants (perhaps as part of a hard fork, or perhaps programmatically
as part of the protocol's normal operation)~\cite{slasher,kwon}. 
%Further, such a protocol
%could conceivably recover automatically from a double-spend attack: if
%the attacker could be identified and its stake confiscated, it would
%no longer have sufficient stake to carry out additional attacks.

The Ethereum protocol, which famously migrated from proof-of-work to
proof-of-stake (among many other changes) in September~2022 in an event known as ``the merge,''
offers an interesting case study. Ethereum's lead founder, Vitalik Buterin, wrote in the early design stages that 
``The `one-sentence philosophy' of proof of stake is \ldots `security comes from putting up economic value-at-loss'{''}~\cite{buterin2016blog} and ``The intention is to make 51\% attacks extremely expensive, so that even a majority of validators working together cannot roll back finalized blocks without undertaking an extremely large economic loss''~\cite{buterin2017blog}.  
Today, post-merge, the protocol's official
documents echo the aspirations above, asserting that ``proof-of-stake offers greater crypto-economic
security than proof-of-work'' and ``economic penalties for misbehavior
make 51\% style attacks more costly for an attacker compared to
proof-of-work.''\footnote{See
  \url{https://ethereum.org/developers/docs/consensus-mechanisms/pos}.}
  %\eric{Couild add Buterin blog posts here. Something like 'See also early blog posts of Ethereum founder Vitalik Buterin (2016, 2017).' If we want to shout it out here without a lengthy thing in lit review. }\tim{added}
What would be a rigorous phrasing of these assertions? Are they true?
Could there be alternative protocols---perhaps wildly different from
the proof-of-work or proof-of-stake protocols considered to
date---that are ``still more economically secure'' than Ethereum? How
do the answers depend on the assumptions imposed on, for example, the
reliability of message delivery or the active participation of
non-malicious players?

%\eric{Overall I think Sections 1.1-1.2 are great. I am slighty uneasy about not mentioning the flow/direct costs of attack and that these are small, since this seems important for motivating our approach. More fully, I think it is helpful to tell the reader that total cost of attack = direct cost + resource-impairment cost, and that there are good reasons to think that the direct costs are really small (whether via my work or via Andy's point that N*c has to be a relatively small fraction of the total currency). This really helps justify our focus on the resource-impairment cost. I see you've put this idea in 1.5 but especially when we put 1.3 and 1.4 in worry this will be too late for some readers. 
%But I don't see where to put the thought and do see that you've put it in 1.5 ,which might be sufficient.}

\subsection{Overview of Results}\label{ss:overview}

\paragraph{Defining the cost of an attack.}
We augment the permissionless consensus framework of Lewis-Pye and Roughgarden~\cite{lewis2023permissionless} with a new model for reasoning about the cost of causing consistency violations.
Informally, we call such an attack {\em cheap} if the attacker suffers no economic consequences following the consistency violation, and {\em expensive} otherwise. We call the attack {\em expensive due to collapse} if the attacker is harmed for the wrong (``scorched earth'') reasons, with honest protocol participants also suffering, as in the two narratives around the security of proof-of-work protocols described in Section~\ref{ss:scorched}.
We call the attack {\em expensive in the absence of collapse} if the attacker is harmed for the right reasons---targeted punishment that avoids collateral damage to the honest participants, as is the goal of slashing in a proof-of-stake protocol.
Our key definition is that of an {\em EAAC} protocol (for ``expensive to attack in the absence of collapse'') which states, informally, that every attack on the protocol is expensive in the absence of collapse.
We then interpret a protocol that guarantees the EAAC property 
as ``more economically secure'' than one that does not.
We next survey the three main results in this paper (depicted in Figure~\ref{f:lattice})---two that
identify settings in which only ``classical security'' is possible,
and a third showing that, under appropriate assumptions,
proof-of-stake protocols with slashing can indeed obtain additional
``economic security.''

\begin{figure}[h]
\centering
\includegraphics[width=.95\textwidth]{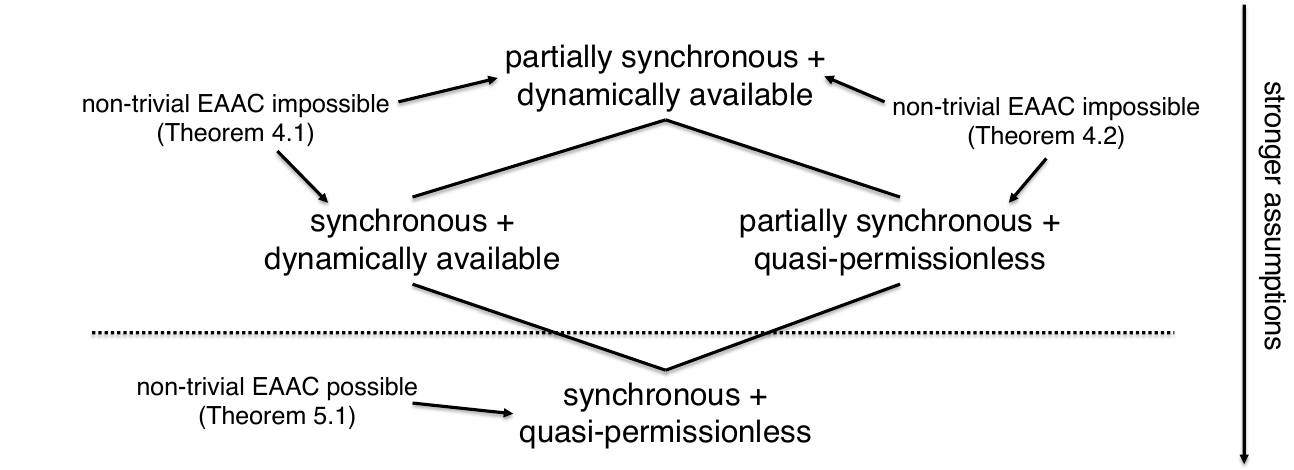}
\caption{Summary of main results. While only ``classical security'' is
  achievable in the dynamically available setting (in which
  non-malicious players may be periodically inactive) or the partially
  synchronous setting (in which the communication network may suffer
  unbounded periods of unreliability), proof-of-stake protocols with
  slashing can achieve additional ``economic security'' in the
  quasi-permissionless and synchronous setting.}\label{f:lattice}
\end{figure}

\paragraph{Impossibility of EAAC in the dynamically available setting.}
Our first main result (Theorem~\ref{neg}) states that, in the
synchronous and dynamically available setting (in which the
communication network is reliable but non-malicious players may be
periodically inactive), with an adversary that controls at least
one-half of the overall resources, no protocol can be EAAC.  This
result is optimal with respect to the size of the adversary, as the
Bitcoin protocol (among others) guarantees (probabilistic) consistency
and liveness in the synchronous and dynamically available setting when
the adversary controls less than one-half of the overall
hashrate~\cite{garay2018bitcoin}, and is thus vacuously EAAC in this
case (Figure~\ref{f:eaac1}).  In particular, this result rules out
non-trivial EAAC guarantees for all typical longest-chain protocols
(be they proof-of-work protocols like Bitcoin or proof-of-stake
protocols such as Ouroboros \cite{kiayias2017ouroboros} and Snow White
\cite{daian2019snow}).

\begin{figure}[h]
\centering
\includegraphics[width=.9\textwidth]{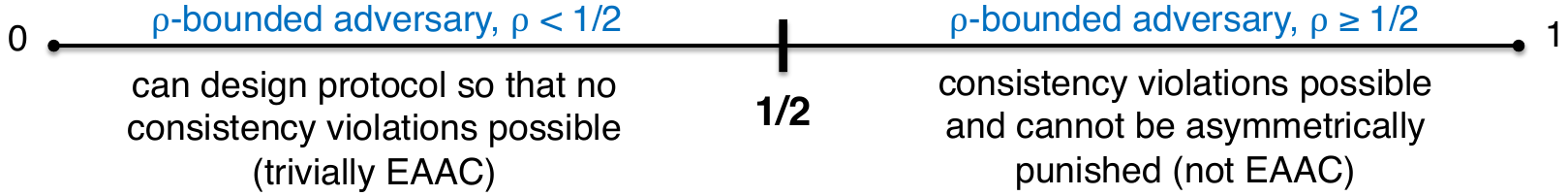}
\caption{Theorem~\ref{neg}. No non-trivial EAAC guarantees are possible in the
  dynamically available setting, even with synchronous
  communication: once an adversary is large enough to cause
  consistency violations, it is also large enough to avoid asymmetric
  punishment. A $\rho$-bounded adversary is one that
controls at most a $\rho$ fraction of each resource (such as hashrate
or stake)  used by a
protocol.}\label{f:eaac1}
\end{figure}

To give the flavor of the proof, consider two disjoint sets of players
$X$ and $Y$ that each own an equal amount of resources. Liveness in
the dynamically available setting implies that if one of these sets
never hears from the other, it must forge ahead and continue to
confirm transactions. So imagine that the players in~$X$ are
malicious, don't talk to~$Y$, and behave as if they were honest and
never heard from~$Y$, confirming transactions to themselves that
conflict with the transactions confirmed by~$Y$ during the same
period. Now suppose that, at some later point, players in $X$
disseminate all the messages that they would have disseminated if
honest in their simulated execution. At this point, it is not possible
for late-arriving players to determine whether the players in $X$ or
the players in $Y$ are honest. If the protocol happens to make this
particular attack expensive (by harming the players in~$X$), there is
a symmetric execution (with the players in~$X$ honest and those in~$Y$
malicious) in which the honest players are the ones who are harmed.

\paragraph{Impossibility of EAAC in the partially synchronous setting.}
Our second result (Theorem~\ref{neg2}) states that, in the partially synchronous and quasi-permissionless setting
(in which resource-controlling non-malicious players are always active
but the communication network may suffer periods of unreliability),
with an adversary that controls at least one-third of the overall
resources, no protocol can be EAAC.\footnote{This version of the theorem statement assumes that the protocol is required to be live when the adversary is $\rho$-bounded for $\rho<1/3$. The full theorem statement in Section \ref{res1} generalises the result to consider weaker liveness requirements.} 
(In fact, something stronger is true: 
%in the partially synchronous setting, 
no protocol can make consistency violations expensive for the
attacker, even allowing for collateral damage to honest players. 
%\andy{Small nit. This assumes no (symmetrical) punishment if there is no consistency violation).}) 
%\tim{agreed but couldn't find a non-clunky way of adding this point}
In particular, slashing in a proof-of-stake protocol cannot achieve its intended purpose if message delays cannot be bounded a priori.
This result is optimal with respect to the size of the
adversary, as there is a proof-of-stake PBFT-style
protocol that guarantees consistency and liveness in the partially
synchronous and quasi-permissionless setting when the adversary
controls less than one-third of the overall stake~\cite{lewis2023permissionless}, which is therefore vacuously
EAAC in that regime (Figure~\ref{f:eaac2}).

\begin{figure}[h]
\centering
\includegraphics[width=.9\textwidth]{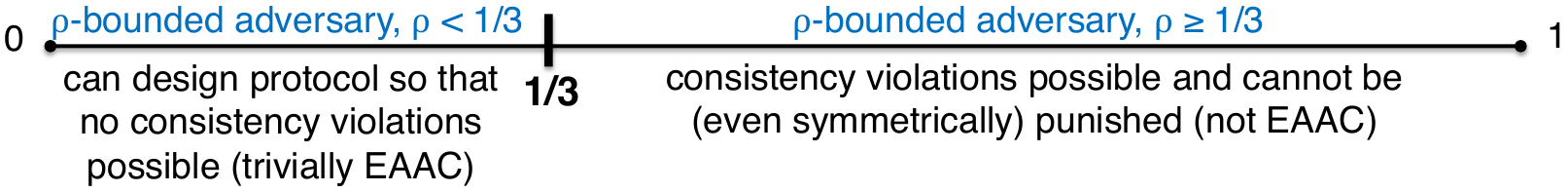}
\caption{Theorem~\ref{neg2}. No non-trivial EAAC guarantees are possible with
 partially synchronous communication, even in the quasi-permissionless
 setting: once an adversary is large enough to cause consistency
 violations, it is also large enough to avoid (even symmetric) punishment.}\label{f:eaac2}
\end{figure}

To give a sense of the proof of this result, consider three sets of
players~$X$, $Y$, and~$Z$, all with equal resources, with the players
in~$X$ and~$Z$ honest and the players in~$Y$ malicious. Suppose that
messages disseminated by players in $X$ are received by players in
$X \cup Y$ right away but not by players in~$Z$ for a very long time
(which is a possibility in the partially synchronous
model). Symmetrically, suppose $Y$ and $Z$ but not $X$ promptly
receive messages sent by players in~$Z$. Suppose further that the
malicious players in~$Y$ pretend to the players in~$X$ that they've
never heard from anyone in~$Z$ and to the players in~$Z$ that they've
never heard from anyone in~$X$. Liveness dictates that the players
in~$X$ and the players in~$Z$ must each forge ahead and confirm
transactions, even though no messages between players in~$X$ and~$Z$
have been delivered yet. These uncoordinated confirmed transactions
will generally conflict, resulting in a consistency violation.
Moreover, this violation may not be noticed by the players of~$X$
and~$Z$ for a very long time (again due to the arbitrarily long delays
in the partially synchronous model), giving the players of~$Y$ the
opportunity to sell off their resources and avoid any possible
punishment in the meantime.

%\andy{Need to add a line or two about the players in $Y$ being able to trade out of their position before the consistency violation is seen.}

\paragraph{Possibility of EAAC in the synchronous and quasi-permissionless setting.}
Our final result (Theorem~\ref{posres}) states that, in the
synchronous and quasi-permissionless setting, there is a
proof-of-stake protocol with slashing that, provided the adversary
controls less than two-thirds of the overall stake, satisfies the EAAC
property.\footnote{Later in this section (and in more detail in Section \ref{posproof}), we will discuss how the 2/3 bound can be improved by reducing liveness requirements. }  In fact, our protocol is designed for a version of the
synchronous setting defined by two known parameters: one
parameter~$\Delta$ that represents typical network speed, perhaps on
the order of milliseconds or seconds; and a second
parameter~$\Delta^*$ that represents the time required for players to
communicate when the network is unreliable (whether over the network
or by out-of-band coordination), perhaps on the order of days or
weeks.\footnote{One interpretation of the parameter~$\Delta^*$ is as
  the speed of communication through social channels, as referenced
  in, for example,
  Buterin's discussion of``weak subjectivity''~\cite{vitalikblog}.}
Intuitively, when there's no attack, the protocol operates at a speed
proportional to~$\Delta$; under attack, it recovers at a speed
proportional to~$\Delta^*$.  

More precisely, we give a protocol that:
(i) is live and consistent in the partially synchronous setting (with
respect to the parameter~$\Delta$) provided the adversary controls
less than one-third of the overall stake; and (ii) is EAAC so long as
the adversary controls less than two-thirds of the overall stake and
message delays are no larger than $\Delta^*$.  A recent result of Tas
et al.~\cite{tas2023bitcoin}, when translated to our model, implies
that our result is optimal with respect to the size of the adversary:
even in the permissioned and synchronous setting, if a protocol
guarantees liveness with respect to an adversary that controls less
than one third of the overall resources, it cannot guarantee EAAC
with respect to an adversary that controls at least two-thirds of the
overall resources (Figure~\ref{f:eaac3}).
%
% Moreover, we show in Theorem~\ref{posres2} that our protocol recovers
% automatically from consistency violations, in the sense that an
% attacker with less than two-thirds of the overall stake can carry out
% only a finite number of consistency violations before the protocol
% resumes normal operation. If a sufficiently large fraction of the
% protocol's cryptocurrency is staked, then in fact such an attacker can
% carry out only two consistency violations before its remaining
% (unslashed) stake is insufficient to interfere further with the
% protocol. 
%

\begin{figure}[h]
\centering
\includegraphics[width=.9\textwidth]{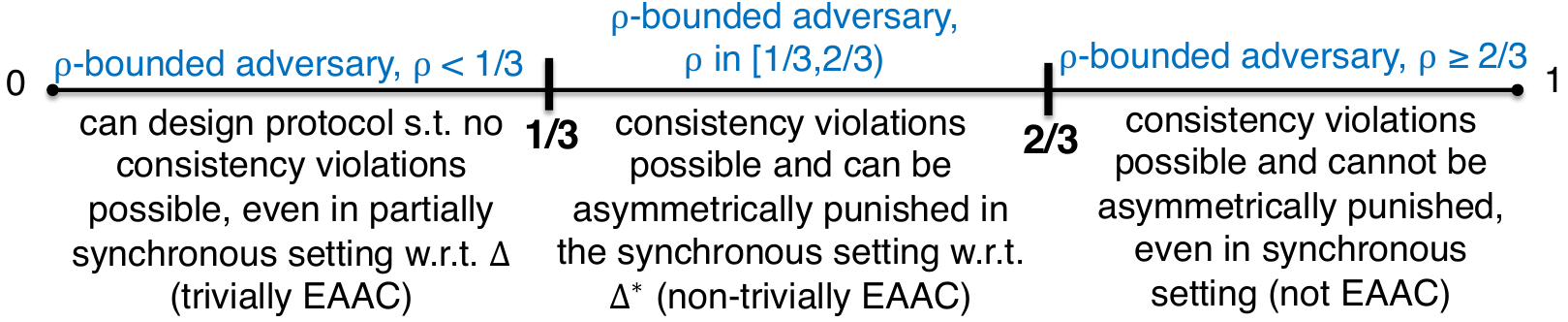}
\caption{Theorem~\ref{posres}. Non-trivial EAAC guarantees are possible in the
  quasi-permissionless and synchronous setting. The parameter $\Delta$
  is an upper bound on
message delays during ``normal operation,'' while~$\Delta^*$ bounds
the time required for honest players to communicate (over the
network or out-of-band) when the protocol is under attack.}\label{f:eaac3}
\end{figure}

Achieving asymmetric punishment via slashing in a proof-of-stake
protocol would seem to require addressing the following challenges.\label{page:challenges}
\begin{enumerate}

\item There should be a ``smoking gun'' behind every consistency
  violation, in the form of a ``certificate of guilt'' that identifies (at
  least some of) the Byzantine players responsible for the
  violation. 
%\andy{Here and below, note that we later talk about `certificates of guilt', rather than `proofs of guilt'.}

\item All honest players should learn such a certificate of guilt promptly
  after a consistency violation, before the adversary has had the
  opportunity to cash out its assets.

\item Given the prompt receipt of a certificate of guilt, honest players
 should be able to reach consensus on a new (post-slashing) state.

\end{enumerate}
Further, the second and third challenges must be met despite
interference from an adversary that is so large as to be able to cause
consistency violations.

At a very high level, our protocol resolves these three challenges as
follows.  (Section~\ref{six} provides a more technical overview of the
protocol, with the full details deferred to Section~\ref{posproof}.
The protocol is designed to be as simple as possible subject to a
non-trivial EAAC guarantee; optimizing the performance of such
protocols is an interesting direction for future work.)
Some PBFT-style protocols 
%generally \andy{Some don't} 
provide certificates of guilt in the form of
votes on conflicting blocks, and accordingly our starting point is the
Tendermint protocol~\cite{buchman2016tendermint,BKM18}. (Extending the
permissioned Tendermint protocol to a proof-of-stake protocol with
guaranteed consistency and liveness in the quasi-permissionless
setting is non-trivial, but we show that it can be done.) The standard
Tendermint protocol uses two stages of voting per view, and we show
that this fails to guarantee the prompt receipt of certificates of guilt by
honest players.  On the other hand, we show that, provided the
adversary controls less than two-thirds of the stake, three stages of
voting suffice for the prompt dissemination of certificates of
guilt. Finally, after a consistency violation, honest players 
attempt to reach consensus on an updated genesis block (in which slashing has
been carried out) using a variant of the Dolev-Strong
protocol~\cite{dolev1983authenticated}. 

\paragraph{The trade-off between liveness and the EAAC property.} Thus far, our discussion of Theorem \ref{posres} assumes that we require protocols to be live and consistent in the partially synchronous setting  provided the adversary controls
less than one-third of the overall stake. A standard technique (which just modifies the definition of a `quorum certificate' to require votes from a larger proportion of relevant players)  can be used to extend the regime in which our protocol is EAAC beyond the 2/3 bound at the expense of reducing liveness resilience: If the protocol  is required to be live for adversaries that are $\rho_l$-bounded (for some $\rho_l<1/3$), then it can be made EAAC when the adversary is $\rho$-bounded for  $\rho<1-\rho_l$.  See Section \ref{posproof} for further details.

\paragraph{Interpretation for Ethereum.} Our results make precise a
number of common beliefs about the Ethereum protocol (post-merge).
%, and also offer at least one suggestion for hardening its design.  
Most
obviously, our work formalizes the potential security benefits of
proof-of-stake sybil-resistance coupled with slashing logic.  

Our impossibility result for the dynamically available setting
(Theorem~\ref{neg}) shows that such security guarantees are impossible
both for any protocol that relies on only off-chain resources (such as
proof-of-work protocols) and for any standard longest-chain
protocol (even if it is a proof-of-stake protocol). Thus, two of the
biggest changes made to the Ethereum protocol during the merge---the
switch from proof-of-work to proof-of-stake, and the addition of the
Casper finality gadget~\cite{buterin2017casper}---are both necessary for provable slashing
guarantees (neither change alone would enable asymmetric punishment).
Our result also provides a justification for the ``inactivity leaks''
used in the Ethereum protocol to punish seemingly inactive players,
which can be viewed as an economic mechanism for enforcing the
(necessary) assumptions of the quasi-permissionless setting.

Our impossibility result for the partially synchronous setting
(Theorem~\ref{neg2}) justifies the common assumption that honest
Ethereum validators could, in the case of emergency, communicate out
of band within some known finite amount of time. Further, this result
justifies the long ``cooldown period'' for unstaking in post-merge
Ethereum, with a delay that is roughly proportional to the assumed
time required for out-of-bound communication.

The protocol we design to prove our positive result (Theorem~\ref{posres})
resembles post-merge Ethereum in several high-level respects: the use
of proof-of-stake sybil-resistance, an accountable PBFT-type approach
to consensus (in our case, Tendermint rather than Casper), slashing
for asymmetric punishment, equal-size validators, and regularly
scheduled updates to the validator set. One notable difference is our
protocol's reliance on three stages of voting to ensure the prompt
dissemination of certificates of guilt, for the reasons discussed in
Section~\ref{furtherc}.  

%For similar reasons, it may be worth
%adding an additional round of voting to the Casper finality gadget in
%future versions of the Ethereum protocol.

\subsection{Discussion}\label{ss:flow}

\paragraph{The economic cost of controlling resources.}
This paper follows the tradition of Nakamoto~\cite{nakamoto2008bitcoin} (as in the quote in Section~\ref{ss:scorched}), among many others, in concentrating on the economic consequences to an attacker of causing a consistency violation (e.g., in order to execute a double-spend attack) in a permissionless consensus protocol. Our focus complements previous work of Budish~\cite{budish2018economic,budish2023trust}, which provides an economic model for quantifying the cost that a potential attacker must absorb to control sufficient resources to be able to cause a consistency violation.
The main conclusion in
Budish~\cite{budish2018economic,budish2023trust} is that, for the
current major blockchain protocols, this cost is surprisingly cheap,
scaling as a flow cost rather than a stock cost. This analysis
suggests that the economic consequences that \emph{follow} an attack
are the ones with the greatest potential to deter an attacker, and in
this sense provides strong economic justification for the focus of
this paper.

%The rewards paid to players (for producing blocks etc) in the context of implementing a cryptocurrency will not play a %fundamental role in the proofs we present here. To better motivate the thinking behind our analysis, however, it is instructive to consider the role that rewards played in the previous analysis of Budish \cite{budish2018economic,budish2023trust} -- it is this previous analysis that motivates the `value collapse' arguments presented here.  

\paragraph{Budish's analysis.}
Some readers may find it instructive to reinterpret ``expensive'' and
``cheap'' attacks in the ``stock vs.\ flow'' language of Budish~\cite{budish2018economic,budish2023trust};
other readers can skip this paragraph and the next without
loss of continuity.
Budish's analysis was originally 
%In more detail, Budish's analysis is
framed in terms of proof-of-work protocols but applies more generally
to protocols that use arbitrary external or on-chain resources,
including proof-of-stake protocols. 
For example, consider a protocol
that uses a single type of resource (e.g., ASICs in Bitcoin or coins
in Ethereum), and let $c$ denote the (flow) cost per timeslot to supply one
unit of resources.  For simplicity, assume that all protocol
participants face the same cost. For a proof-of-work protocol, $c$ is
the cost per timeslot to run one ASIC, including variable costs such
as electricity and an ongoing cost of capital applied to the
hardware. So, in this case, $c$ may be specified by $c = rC + \eta$,
where $C$ is the (stock) cost of one unit of hardware, $rC$ is the capital
cost of one unit of hardware per timeslot (with $r$ the per-timeslot
interest rate), and $\eta$ is the variable cost of electricity per
unit of hardware per timeslot. For a proof-of-stake protocol, $c$
represents the opportunity cost of locking up stake, so $c$ may be
specified by $c = rC$ where $C$ is the cost of one unit of stake.
% \footnote{In a small abuse of notation we normalize both per-block costs $c$ and per-resource-unit costs $C$ across resources. It would be more accurate to normalize just one of these and let the other vary with the resource $i$.}

Let $d$ denote the total (expected) reward earned by all validators
combined per timeslot (perhaps a combination of inflationary rewards
paid out by the protocol, transaction fees paid by users, and
additional ``MEV'' extracted from the application layer). For a
blockchain protocol like Bitcoin or Ethereum, the barriers to enter or
exit the set of miners or validators is low; thus, we expect the
marginal miner or validator to break even. Mathematically, this
translates to the zero-profit condition that $c = d/N^*$ or
equivalently $N^*c=d$, where~$N^*$ denotes the number of resource
units held by protocol participants at equilibrium. Because the
reward~$d$ is paid out to miners or validators every single timeslot,
it is presumably a small fraction of the total value of the protocol's
native currency. Now suppose that causing a consistency violation
requires the attacker to hold~$AN^*$ units of the resource, say for
some $A < 2$. If an attack takes $T$ timeslots to complete, then the
cost incurred by the attacker in holding sufficient capital for the
attack is only $AN^{\ast}cT=AdT<2dT$.  (This bound holds even after
setting aside the rewards that the attacker would earn during these
$T$ timeslots and whatever off-chain value it might obtain from its
double-spend attack.) If $d$ and~$T$ are small and there are no
economic consequences to the attacker subsequent to its attack, the
conclusion is that consistency violations and double-spend attacks are
relatively cheap, with cost scaling with the recurring per-block
costs~$d=N^*c$ of securing the blockchain protocol, rather than the
total value~$N^*C$ of the resources devoted to securing the protocol.
The EAAC property introduced in the present paper loosely corresponds
to the insistence that the economic cost of causing a consistency
violation should scale with~$C$ rather than with~$c$.

\paragraph{Consistency vs.\ liveness violations.} One reason we focus
on consistency violations (as opposed to liveness violations) is that
they are the ones that can be unequivocally attributed to deviations
from the intended protocol by an attacker (as opposed to network
delays or other vagaries outside the control of honest
participants). A second is that, in practice, consistency violations
are generally considered much more serious than liveness violations
(akin to money permanently disappearing from your bank account vs.\
the bank's computer system going down for a few hours).

\subsection{Further related work}
 
\subsubsection{Accountability: positive results}

The closest analog to our work in the distributed computing literature
is a sequence of papers, including Buterin and
Griffith~\cite{buterin2017casper}, Civit et
al.~\cite{civit2021polygraph}, and Shamis et al.~\cite{shamis2022ia},
about protocols that satisfy \emph{accountability}, meaning protocols
that can provide certificates of guilt in the event of a consistency
violation. 
Further examples of papers in this sequence include Sheng
et al.~\cite{sheng2021bft}, who analyze accountability for well-known
permissioned protocols such as HotStuff~\cite{yin2019hotstuff}, PBFT
\cite{castro1999practical}, Tendermint
\cite{buchman2016tendermint,BKM18}, and Algorand
\cite{chen2018algorand}; and Civit et
al.~\cite{civit2022crime,civit2023easy}, who describe generic
transformations that take any permissioned protocol designed for the
partially synchronous setting
and provide a corresponding accountable version.

None of these papers describe how to carry out asymmetric punishment
(e.g., slashing) and thus fall short of our goals here---that is, they
address challenge~(1) on page~\pageref{page:challenges}, but not
challenges~(2) and~(3).  One exception to this point is the ZLB
protocol of Ranchal-Pedrosa and Gramoli~\cite{ranchal2020zlb}, which
is a permissioned blockchain protocol (with a fixed and known player
set) that is able to implement slashing when the adversary controls
less than a $5/9$ fraction of the player set, but which does not address the issue of ensuring that slashing is implemented before guilty parties are able to trade out of their position (i.e.\ challenge (2)).
%In a similar vein, 
Freitas de~Souza et al.~\cite{de2021accountability} also describe a process for removing guilty parties in a protocol for lattice agreement (this abstraction is
weaker than consensus and can be implemented in an asynchronous system), but their protocol does not ensure slashing occurs before guilty parties can trade out of their position, and also assumes an honest majority. 
Sridhar et
al.~\cite{sridhar2023better} specify a ``gadget'' that can be
applied to blockchain protocols operating in the synchronous setting
to reboot and maintain consistency after an attack, but they do not
implement slashing and assume that an honest majority is somehow
reestablished out-of-protocol.

Even more significantly, with the exception of
\cite{sridhar2023better} and
\cite{neu2022availability,tas2022babylon,tas2023bitcoin} (described
below), the entire literature on accountability considers only
permissioned protocols.  Turning a permissioned protocol into a
permissionless one is generally technically challenging (if not
impossible) due to the three additional challenges listed in
Section~\ref{ss:intro}. For example, while Dwork et al.~\cite{DLS88}
showed in 1988 how to achieve consistency and liveness in the
partially synchronous setting with an adversary that controls less
than one-third of the players, the first permissionless analog of this
result was proved only last year \cite{lewis2023permissionless}.  Our
positive result here (Theorem~\ref{posres}) provides a
(permissionless) proof-of-stake protocol that, for an adversary
controlling less than two-thirds of the total stake, provably
implements slashing: It reaches consensus on slashing conditions, even
in the face of consistency violations, before the adversary is able to
remove their stake, thereby guaranteeing the EAAC property.  As the
protocol overview in Section~\ref{six} makes clear, a number of new
ideas are required to obtain this result. Our positive result constitutes a
significant advance even in the permissioned setting (relative
to~\cite{ranchal2020zlb}) in that our protocol can punish adversaries
that control less than two-thirds (as opposed to five-ninths) of the
overall player participation. As noted in Section~\ref{ss:overview},
this bound of $2/3$ is the best possible.

Prior to the study of accountability, Li and Mazieres \cite{li2007beyond} considered how to design BFT protocols that still offer certain guarantees when more than $f$ failures occur. The describe a protocol called BFT2F which has the same liveness and consistency guarantees as PBFT when no more than $f<n/3$ players fail; with more than $f$ but no more than $2f$ failures, BFT2F prohibits malicious players from making up operations that clients have never issued and prevents certain kinds of consistency violations.

\subsubsection{Accountability: negative results}

The literature on accountability has focused primarily on positive
results. One exception is Neu et al.~\cite{neu2022availability}, who
prove that no protocol operating in the dynamically available setting
can provide accountability.
The authors then provide an approach to
addressing this limitation by describing a ``gadget'' that checkpoints
a longest-chain protocol. The ``full ledger'' is then live in the
dynamically available setting, while the checkpointed prefix ledger
provides accountability.  Another exception is Tas et
el.~\cite{tas2022babylon,tas2023bitcoin} who, in addition to positive
results on defending against long-range attacks, prove negative
results on accountability for adversaries that control at least
two-thirds of the player set.
None of these papers provide any economic modeling, which is a key
contribution of the present work (see Section~\ref{costdef}).

% (or, more generally, at least a $1-
%when
%$\rho \ge 2/3$ (assuming liveness resilience $\rho_l$ for all
%$\rho_l<1/3)$ and on accountability for liveness violations.

%\vspace{0.1cm} 

\subsubsection{The cost of double-spend attacks}

On the economics side, the most closely related work to ours is that
of Budish~\cite{budish2018economic,budish2023trust}, which is reviewed
briefly in Section~\ref{ss:flow}; see the references therein and
Halaburda et al.~\cite{halaburda2022microeconomics} for a broader view
of economics research on blockchain protocols.
Bonneau~\cite{bonneau16}, Leshno et al.~\cite{LPS23}, and Gans and
Halaburda~\cite{GH23} 
provide
additional arguments that, at least for 
Bitcoin-like protocols, double-spend attacks may be very cheap to execute if there
are no post-attack consequences.  Leshno et al.~\cite{LPS23} also
propose a variant of the Bitcoin protocol that halts whenever a
consistency violation is detected (in effect, making consistency
violations impossible by converting them into liveness violations).

\subsubsection{The game theory of slashing}

Deb et al.~\cite{stakesure}
%\cite{sreeramblog} 
%and related paper by Deb e
consider the game theory of slashing from an angle that is
complementary to ours, in a model in which the adversary need not own
resources to carry out an attack, but can instead offer bribes to
players. (Bonneau~\cite{bonneau16} and Leshno et al.~\cite{LPS23}
similarly consider the possibility of bribery by an attacker.)  The
authors argue that, without slashing, rational players can be
incentivized to accept small bribes to deviate from the protocol, even
when such deviations by a large number of players causes a consistency
violation and a collapse in the value of their
resources. Fundamentally, the reason for this is that the price
collapse is non-targeted, while bribe pay-offs depend directly on
individual actions. The authors point out that targeted slashing,
assuming it could be somehow implemented, could ensure that a
unilateral deviation by a single player would lead to significant
punishment by the protocol, in which case bribes would be effective
only if they were large.

\section{The model}  \label{setup} 

Impossibility results such as Theorems~\ref{neg} and~\ref{neg2} require a precise model of the permissionless consensus protocol design space; we adopt the one defined by Lewis-Pye and Roughgarden \cite{lewis2023permissionless}.
In brief:
\begin{itemize}
\item There is a set of players (unknown to the protocol). Each player can control an unbounded number of
 identifiers and at each timeslot could be active or inactive.
 \item Messages are disseminated (e.g., via a gossip protocol) rather than sent point-to-point. 
Disseminated messages are eventually received by their recipients, possibly after a delay.
\item In the synchronous setting, there is a finite upper bound, known to the protocol, on the worst-possible message delay. In the partially synchronous setting, there is an initial period of unknown finite duration in which message delays may be arbitrary.
\item The behavior of a player is a function of the timeslot, its internal state, and the information it has received to-date from other players and from ``oracles'' (described below). Formally, a protocol is a specification of (the intended version of) this function.
\item Each player maintains a running (ordered) list of transactions that it regards as confirmed.
\item Players can own resources, which may be ``external'' to the protocol (e.g., hashrate) or ``on-chain'' (e.g., registered stake). External resources evolve independently of the protocol, while the values of on-chain resources generally depend on the transactions that have been confirmed thus far.
\item A $\rho$-bounded adversary is one that never controls more than a $\rho$ fraction of the overall amount of an (external or on-chain) resource that is controlled by the currently active players.
\item Protocols may make use of ``oracles'' that represent cryptographic primitives. External resources are modeled as a special type of oracle (called a ``permitter'') to which the allowable queries depend on a player's resource balance.
\item A protocol satisfies consistency if players' running transaction
  lists are consistent across players and across time (if $\mathtt{T}$
  and $\mathtt{T}'$ are the lists of honest players $p$ and $p'$ at
  timeslots $t$ and~$t'$, then either $\mathtt{T}$ is a prefix of
  $\mathtt{T}'$ or vice versa).
\item A protocol satisfies liveness if, during periods of synchrony, honest players regularly add new confirmed transactions to their running lists.
\end{itemize}
Sections~\ref{ss:model_first}--\ref{protdef} provide the mathematical details, and further discussion of the model can be found in \cite{lewis2023permissionless}.

In addition to this design space, we adopt the ``hierarchy of permissionlessness'' proposed in \cite{lewis2023permissionless}. Intuitively, and phrased here specifically for proof-of-stake protocols, the two key definitions are the following (see Section~\ref{hierarchy} for details):
\begin{itemize}

\item In the dynamically available setting, at every timeslot, at least one non-malicious player with a non-zero amount of
registered stake is active.

\item In the quasi-permissionless setting, at every timeslot, every non-malicious player with a non-zero amount of registered stake is active.

\end{itemize}

The rest of this section fills in the details of these definitions. The reader interested in attack cost modeling can skip to Section~\ref{costdef}, referring back to this section as needed (e.g., when reading the proofs of Theorems~\ref{neg} and~\ref{neg2}).

\subsection{The set of players and the means of communication}\label{ss:model_first}

\vspace{0.2cm} 
\noindent \textbf{The set of players}. We consider a potentially
infinite set of players $\mathcal{P}$. Each player $p\in \mathcal{P}$
is allocated a non-empty and potentially infinite set of
\emph{identifiers} $\mathtt{id}(p)$.  One can think of
$\mathtt{id}(p)$ as an arbitrarily large pre-generated set of public
keys for which~$p$ knows the corresponding private key; a player~$p$
can use its identifiers to create an 
arbitrarily large number of sybils.
Identifier sets are disjoint, meaning
$\mathtt{id}(p)\cap \mathtt{id}(p')=\emptyset$ when $p\neq p'$; intuitively, no player knows the private keys that are held by other players. 
%held by $p$.  

\vspace{0.2cm} 
\noindent \textbf{Permissionless entry and exit}. Time is divided into
discrete timeslots $t=1,2,\dots$,  and each player may or may not be \emph{active} at each
timeslot.  A \emph{player allocation} is a function specifying
$\mathtt{id}(p)$ for each $p\in \mathcal{P}$ and the timeslots at
which each player is active.  Because protocols generally require
\emph{some} active players to achieve any non-trivial functionality,
 we assume that a non-zero but finite number of players is active at
each timeslot. The player allocation is exogenous, in that we do not
model why a player might be active or inactive at a given timeslot.

%In applications, the set of active players will sometimes be specified %exogenously and will sometimes reflect endogenous participation decisions, e.g., in response to economic incentives.

% \andy{Eric made the comment that this assumption may be somewhat incompatible with the idea that the set of active players reflects `endogenous decisions to participate'. Probably we can just add a comment that the assumption will fine, so long as we also suppose there are sufficient rewards to ensure some participation?} \eric{yes, agreed, a short sentence clarifying should be plenty. I made an attempt, feel free to change it.}
% \eric{Andy: is the assumption that the set of active players is finite one that we should clarify as well? Non-zero reflects that you need at least some participation to be meanignful, what is the interpretation of the finite-ness restriction.} \andy{Seems natural enough to me: how could you ever realistically have infinitely many players being active at one timeslot?}

\vspace{0.2cm} 
\noindent \textbf{Inputs}. Each player is given a finite set of
\emph{inputs}, which capture its knowledge at the beginning of the
execution of a protocol.  If a variable is specified as one of $p$'s
inputs, we refer to it as \emph{determined for} $p$, otherwise it is
\emph{undetermined for} $p$. If a variable is determined/undetermined
for all $p$ then we refer to it as \emph{determined/undetermined}. For
example, to model a permissionless environment with sybils, 
we assume that
$\mathtt{id}(p)$ and the timeslots at which $p$ is active are
determined for $p$ but undetermined for $p' \neq p$.
%hat the player set and the player allocation
%are undetermined, while 

\vspace{0.2cm} 
\noindent \textbf{Message sending}.  At each timeslot, each active
player may \emph{disseminate} a finite set of messages (each of finite
length), and will receive a (possibly empty)
multiset of messages that have been disseminated by other players at
previous timeslots.  Inactive players do not disseminate or receive
any messages.  
We consider two of the most common models of communication reliability,
the synchronous and partially synchronous
models. 
These models have to be adapted, however, to deal with the
fact that players may not be active at all timeslots:

\noindent \emph{Synchronous model.}  There exists some determined
$\Delta \in \mathbb{N}_{>0}$ such that if $p$ disseminates $m$ at $t$,
and if $p'\neq p$ is active at $t'\geq t+\Delta$, then $p'$ receives
that dissemination at a timeslot $\leq t'$.

\noindent \emph{Partially synchronous model.}  There exists some
determined $\Delta \in \mathbb{N}_{>0}$, and undetermined timeslot
GST such that, if $p$ disseminates $m$ at $t$ and 
$p'\neq p$ is active at
$t'\geq \text{max} \{ \text{GST}, t \} +\Delta$, then $p'$ receives
that dissemination at a timeslot $\leq t'$.

% \vspace{0.2cm}
% Later, in Section \ref{}, we will also define a model called the $(\delta,\Delta)$-\emph{synchronous} model, which falls between the synchronous and partially synchronous models. \andy{We discussed eliminating this, but at present I think it will be necessary. Can discuss the name.}

\subsection{Players and oracles}\label{ss:st}

Using an approach that is common in distributed computing, we model
player behavior via state machine diagrams. The description of a
protocol also specifies a (possibly empty) set of \emph{oracles}
$\mathcal{O}= \{ O_1,\dots, O_z \}$, which are used to capture
idealized cryptographic primitives. Players may send \emph{queries} to
the oracles and will then receive \emph{responses} in return. At each
timeslot $t$, the state transition made by player $p$ therefore
depends on the oracle responses received by $p$ at $t$, as well as
$t$, $p$'s present state, and the messages received by $p$ at $t$.  We
refer the reader to \cite{lewis2023permissionless} for a simple
description of how oracles can be used to model standard cryptographic
primitives such as signature schemes, verifiable delay functions, and
ephemeral keys.  Section \ref{mec} and Appendix \ref{aa:bitcoin}
describe precisely how oracles can be used to model
\emph{external resources} such as ASICs (for proof-of-work protocols)
and memory chips (for proof-of-space protocols).

\vspace{0.2cm}
\noindent \textbf{Byzantine and honest players}. 
To ensure that our impossibility results are as strong
as possible, we consider a \emph{static adversary}. 
In the static adversary model, each player is either \emph{Byzantine}
or \emph{honest} and an arbitrary and undetermined subset of the
players may be Byzantine. The difference between Byzantine and honest
players is that honest players must have the state diagram 
specified by the protocol, while Byzantine players may have arbitrary
state diagrams.\footnote{While one might suppose that honest players are incentivized by inflationary rewards (e.g., block rewards or staking rewards paid out in newly minted coins by the protocol to those that appear to run it honestly) and Byzantine players are motivated by off-chain gains (e.g., profit from a double-spend attack or a judiciously chosen short position), we do not attempt to microfound why a given player might choose to be honest or Byzantine.} To model a perfectly coordinated adversary, we also
allow that the instructions carried out by each Byzantine player can
depend on the messages and responses received by other Byzantine
players.  That is, if $p$ is Byzantine, the 
messages $p$ disseminates, the queries $p$ sends, and $p$'s
state transition at a timeslot~$t$ are a function not only of $p$'s
state and the multiset of messages and oracle responses received by
$p$ at $t$, but also of the corresponding values for all the other
Byzantine players. 
% \eric{Andy: I tweaked the static adversary sentence slightly to try to make clear for the reader why static adversary makes our results stronger}
% \andy{Couldn't we just say ``results'' rather than ``negative results''? As things are we don't seem to specify the adversary model for the positive result.}

\subsection{Modeling external resources} \label{mec} 
\textbf{External vs.\  on-chain resources}. We suppose blockchain validation requires \emph{resources}, which can either be \emph{external} or \emph{on-chain}. External resources are the hardware (such as ASICs or memory chips) required by validators in proof-of-work (PoW) or proof-of-space (PoSp) protocols. By contrast, on-chain resources are those such as stake that are recorded on the blockchain.  From the perspective of our analysis here, a key distinction between external and on-chain resources is that the latter can be confiscated by consensus amongst validators.

\vspace{0.2cm}
\noindent \textbf{External resources and permitters}.
      \emph{Permitter oracles}, or simply {\em permitters}, are
required for modeling external resouces (but not for on-chain
resources). A protocol may use a finite set of permitters.
These are listed as part of the protocol 
amongst the oracles in~$\mathcal{O}$, but have some distinguished
features. 
  
\vspace{0,2cm} 
\noindent \textbf{The resource allocation}. For each execution of the
protocol, and for each permitter oracle $O$, we are given a
corresponding \emph{resource allocation}, denoted $\mathtt{R}^O$.  We
assume that $\mathtt{R}^O$ is undetermined, as befits  an ``external''
resource.  The resource allocation can be thought of as assigning
each player some amount of the external resource at each
timeslot.  That is, for all $p\in \mathcal{P}$ and $t$, we have
$\mathtt{R}^O(p,t)\in \mathbb{N}$.  We refer to $\mathtt{R}^O(p,t)$
as $p$'s {\em balance} at $t$.  Because  balances are
unknown to a protocol, an inactive player might as well have a zero
 balance: $\mathtt{R}^O(p,t)=0$ whenever $p$ is not active at
$t$.  
%(For example, a protocol can't know whether a player owns no
%ASICs, or owns a large number of ASICs but currently has them all turned
%off.)  
For each $t$, we also define
$\mathtt{R}^O(t):= \sum_p \mathtt{R}^O(p,t)$.  
%We assume that an external resource is ``non-degenerate,'' in the sense that
%For convenience, we % assume that 
%$\mathtt{R}^O(t)>0$ for all $t\leq d$.  
%(In particular, there is then
%at least one active player with a non-zero resource balance at each
%timeslot.)  

\vspace{0.2cm} 
\noindent \textbf{Restricting the adversary}. An arbitrary value
$\mathtt{R}^O_{\text{max}}$ is given as a protocol input. This value is
determined, but the protocol must function for any given value of
$\mathtt{R}^O_{\text{max}}\geq 1$.\footnote{The upper bound $\mathtt{R}^O_{\text{max}}$ on the total resource balance can be very loose, for example representing all of the silicon on planet earth.
% \eric{I shortened the footnote significantly to avoid confusion -- dynamic availability allows for rapid change of hash rate and we don't want the reader to think the $R^O_max$ object is cheating on that somehow.}
} Let $B$ denote the set of Byzantine players and define
$\mathtt{R}^O_B(t):= \sum_{p\in B} \mathtt{R}^O(p,t)$.  For
$\rho \in [0,1]$, we say that $\mathtt{R}^O$ is {\em $\rho$-bounded}
if the following conditions are satisfied for all $t$:
  \begin{itemize} 
  \item   $\mathtt{R}^O(t)\in [1, \mathtt{R}^O_{\text{max}}]$. 
  \item $\mathtt{R}^O_B(t)/\mathtt{R}^O(t) \leq \rho$.
  \end{itemize}  
The smaller the value of~$\rho$, the more severe the restriction on
the combined ``power'' of the Byzantine players.  
 If all resource allocations corresponding to permitters in~$\mathcal{O}$ are $\rho$-bounded, then we say the adversary is \emph{externally} $\rho$-\emph{bounded}.

\vspace{0,2cm} 
\noindent \textbf{Permitter oracles}. At each timeslot, a player may send \emph{queries} to each permitter oracle. These queries can be thought of as requests for proof-of-work or proof-of-space, and the player will then receive a \emph{response} to each query at the same timeslot. The difference between permitter
oracles and other oracles is that the queries that a player $p$ can
send to a permitter oracle $O$ at timeslot $t$ depend on
$\mathtt{R}^O(p,t)$.  If a player $p$ sends a query to the permitter
oracle $O$ at timeslot $t$, the query must be of the form $(b,\sigma)$ such
that $b\in \mathbb{N}$ and $b\leq \mathtt{R}^O(p,t)$. Importantly,
this constraint applies also to Byzantine players. In the case of a proof-of-work protocol, the  query $(b,\sigma)$ can be thought of as a request for a proof-of-work for the string $\sigma$, to which hashrate $b$  is committed at timeslot $t$. 
% (whether $p$ is honest or not). 

\vspace{0,2cm} 
\noindent \textbf{Single and multi-use permitters}. A player may make multiple queries~$(b_1,\sigma_1),\ldots,(b_k,\sigma_k)$ to a
permitter in a single timeslot~$t$, subject to the following constraints.  With a
{\em single-use} permitter---the appropriate version for modeling the
Bitcoin protocol (see Appendix~\ref{aa:bitcoin} for details)---these queries are
restricted to satisfy $\sum_{i=1}^k b_i \le \mathtt{R}^O(p,t)$.
(Interpreting the~$b_i$'s as hashrates devoted to the queries, this
constraint asserts that none of a player's overall hashrate can be
``reused.'')  With a {\em multi-use} permitter---the more convenient
version for modeling  protocols that incorporate proof-of-space such as Chia \cite{cohen2019chia}---the only restriction is
that~$b_i \le \mathtt{R}^O(p,t)$ for each~$i=1,2,\ldots,k$.
(Intuitively, space devoted to storing lookup tables can be reused
across different ``challenges.'')

\vspace{0,2cm} 
\noindent \textbf{Permitter responses}. If $O$ is deterministic, then
when $p$ sends the query $(b,\sigma)$ to $O$ at timeslot $t$, the
values $t$, $b$, and $\sigma$ determine the response $r$ of the permitter.
To prevent responses from being forged by Byzantine players, $r$ is a
message signed by the permitter. If $O$ is probabilistic, then $t$, $b$,
and $\sigma$ determine a distribution on responses.

%\vspace{0.2cm} 
%  In Section \ref{aa:bitcoin} (in the appendix), we describe precisely how a permitter can be used to model proof-of-work in Bitcoin. 

\subsection{Modeling stake} \label{stake} 

%\textbf{Stake is not the only on-chain resouce}.  In general, a protocol may use forms of on-chain resource  other than stake. For example, one might consider protocols such as  Byzcoin
%\cite{kokoris2016enhancing}, Hybrid \cite{pass2016hybrid}, and Solida
%\cite{abraham2016solida} in which an identifier is considered  as `registered on-chain'  when a certain amount of PoW is recorded on-chain in connection with that identifier. Registered  identifiers might then be able to earn rewards by  producing blocks  (or otherwise participate in the process of reaching consensus on transaction ordering)  for a certain time duration without further provision of PoW. If the rights conferred upon a registered identity are transferrable, then registered identities become  a form of on-chain resource with value to the holder. However, stake plays a distinguished role amongst the various forms of on-chain resource in the sense that, like external resources, it provides a direct form of sybil resistance. 

For the sake of simplicity,  we restrict attention here to on-chain resources that are forms of stake. We refer the reader to \cite{lewis2023permissionless} for further discussion of other forms of on-chain resources, and note that the results we present here are easily adapted to accommodate general on-chain resources.

\vspace{0.2cm}   
\noindent \textbf{The approach to modeling stake}.  This section
defines \emph{stake allocation functions}, which take
\emph{transactions} as inputs. Section \ref{bcp} 
describes how \emph{blockchain protocols} are required to select
specific sets of transactions to which stake allocation functions can
be applied so as to specify the stake of each player at a given point
in a protocol execution.

\vspace{0.2cm} 
\noindent \textbf{The environment}.  For each execution of a
blockchain protocol, there exists an undetermined \emph{environment},
denoted $\mathtt{En}$, which sends \emph{transactions}
to players.\footnote{For convenience, in the description of the PosT
  protocol in Section~\ref{posproof}, we also allow players to issue
  and sign special types of transactions, for example to signal the end
  of an ``epoch.''}
%\footnote{We can also allow players to sign and send
%  transactions. The reason we focus on transactions sent by an
%  environment is that, if only players can decide to transfer their
%  own stake, then a PoS protocol might instruct honest players never
%  to do so, resulting in an effectively permissioned protocol.} 
Transactions are messages signed by the environment.
 If $\mathtt{En}$ sends $\mathtt{tr}$ to $p$ at timeslot $t$, then $p$
\emph{receives} $\mathtt{tr}$ at $t$ as a member of its multiset of
messages received at that timeslot. Formally, the environment
$\mathtt{En}$ is simply a set of triples of the form
$(p,\mathtt{tr},t)$ such that $p$ is active at $t$. We stipulate that,
if $(p,\mathtt{tr},t)\in \mathtt{En}$, then $p$ receives the
transaction $\mathtt{tr}$ at $t$, in addition to the other
disseminations that it receives at $t$. We assume that, for each
$p\in \mathcal{P}$ and each $t$, there exist at most finitely many
triples $(p,\mathtt{tr},t)$ in $\mathtt{En}$.

%\tim{commented below out to save space}
%One may think of the environment as being chosen by the adversary,
%subject to constraints that are detailed below (especially the condition that the adversary be ``$\rho$-bounded'', as %defined in Section \ref{protdef}).
  
\vspace{0.2cm} We allow a protocol to specify a finite set of
\emph{stake allocation functions}, representing one or more forms of
stake (e.g., stake amounts held in escrow in a designated
staking contract).

%  In particular, a protocol may consider multiple stake allocation functions %(i.e. multiple forms of stake). 

 \vspace{0.2cm} 
 \noindent \textbf{The initial stake distribution corresponding to a stake allocation function}. Corresponding to each stake allocation function $\mathtt{S}$ is an \emph{initial distribution}, denoted $\mathtt{S}^*$, which is given to every player as
an input.
%\footnote{``Stake'' could be interpreted as the cryptocurrency
%   balance of an account, or more restrictively as the amount of an
%   account's cryptocurrency that is held in escrow in a designated
%   staking contract.  Except where necessary, we won't distinguish
%   between these two interpretations.\label{foot:stake}}  
This distribution allocates
a positive integer amount of stake to each of a 
finite and non-zero number of identifiers,
 and can be thought of as chosen by an adversary, subject to any
 constraints imposed on the fraction of stake controlled by Byzantine
 players.

\vspace{0.2cm} 
\noindent \textbf{Stake allocation functions}.    If
$\mathtt{T}$ is a sequence of transactions,
% \footnote{One could alternatively consider a setup, such as that employed by Bitcoin, where each \emph{set} (rather than sequence) of transactions suffices to establish a balance for each player that is independent of transaction ordering, so long as transactions can be ordered so as to be valid. Such considerations are not fundamental to our treatment here. We consider sequences of transactions only because it allows for a slightly simpler presentation. } 
then $\mathtt{S}(\mathtt{S}^*,\mathtt{T}, id)$ ($\in \mathbb{N}$) is the
stake owned by identifier $id$ after execution of the transactions in
$\mathtt{T}$. It will also be notationally convenient to let
$\mathtt{S}(\mathtt{S}^*,\mathtt{T})$ denote the function which on input $id$
gives output $\mathtt{S}(\mathtt{S}^*,\mathtt{T}, id)$.
If $\mathtt{T}$ is a sequence of transactions, then $\mathtt{T} \ast \mathtt{tr}$ denotes the sequence $\mathtt{T}$ concatenated with $\mathtt{tr}$. 

We conclude this section with some baseline assumptions about how
stake works.\footnote{These are important only in the proof of
  Theorem~\ref{neg2}.  They hold for the PosT protocol described in
  Section~\ref{posres}, and would presumably be satisfied by any
  reasonable PoS protocol.}  
We assume that players' initial allocations can be transferred.
Formally, given stake allocation
functions $\{ \mathtt{S}_1,\ldots,\mathtt{S}_j \}$, we assume that:
for all initial distributions~$\mathtt{S}^*_1,\ldots,\mathtt{S}^*_j$
and every finite subset~$I$ of identifiers, there exists a set of
transactions $\mathtt{T}$ such that, no matter how they are ordered,
$\mathtt{S}_h(\mathtt{S}^*_h,\mathtt{T},id) = 0$ for every $h \in [j]$
and $id \in I$.\footnote{For example, if a stake allocation function
  represents native cryptocurrency, $\mathtt{T}$ could comprise
  payments transferring all stake initially owned by identifiers
  in~$I$ to identifiers outside of~$I$. If the stake allocation
  function tracks stake-in-escrow, $\mathtt{T}$ could include one
  unstaking transaction for each $id \in I$.}  We also assume that
some transactions~$\mathtt{tr}$ are {\em benign} in the sense that
they do not destroy this property: for
every~$\mathtt{S}^*_1,\ldots,\mathtt{S}^*_j$ and~$I$, there
exists~$\mathtt{T}$ such that, no matter how the transactions of
$\mathtt{T}$ are ordered,
$\mathtt{S}_h(\mathtt{S}_h^*,\mathtt{tr} \ast \mathtt{T},id) = 0$ for
every $h \in [j]$ and $id \in I$.\footnote{For example, a simple
  payment between two identifiers outside of~$I$ would presumably be a
  benign transaction.}  We do not assume that all transactions are
benign in this sense; for example, the PosT protocol in
Section~\ref{posproof} uses (non-benign) transactions that are
``certificates of guilt'' which have the effect of freezing the
assests of the implicated identifiers.

% We assume that stake controlled by an honest player can be
% transferred\footnote{The reason we do not always assume Byzantine
%   players can transfer stake is because it will be technically
%   convenient later on to consider ``slashed'' stake as
%   non-transferrable.} and that ``negative balances'' are not
% allowed.\footnote{If stake represents cryptocurrency in a designated
%   staking contract, then we are additionally assuming that each
%   account is free to stake or unstake any amount of the cryptocurrency
%   owned by that account.}  Formally, suppose $id_1\in \mathtt{id}(p)$
% for honest $p$, $id_1\neq id_2$,
% $\mathtt{S}(\mathtt{S}^*,\mathtt{T},id_1)=x_1$, and
% $\mathtt{S}(\mathtt{S}^*,\mathtt{T},id_2)=x_2$.  Then, for each
% $x\in \mathbb{N}$ with $x\leq x_1$, there exists a transaction (or
% potentially a sequence of transactions) $\mathtt{tr}$ such that
% $ \mathtt{S}(\mathtt{S}^*,\mathtt{T} \ast \mathtt{tr},id_1)=x_1-x$,
% $ \mathtt{S}(\mathtt{S}^*,\mathtt{T} \ast \mathtt{tr},id_2)=x_2+x$,
% and
% $\mathtt{S}(\mathtt{S}^*,\mathtt{T} \ast \mathtt{tr},id)=
% \mathtt{S}(\mathtt{S}^*,\mathtt{T} ,id)$
% for $id\notin \{ id_1, id_2 \}$.

\subsection{Blockchain protocol requirements} \label{bcp} 

  \vspace{0.2cm}
\noindent \textbf{Confirmed transactions}. Each \emph{blockchain protocol}
specifies a \emph{confirmation rule} $\mathcal{C}$, which
%The confirmation rule $\mathcal{C}$ is
is a function that takes as input an arbitrary set of messages $M$ and
returns a sequence $\mathtt{T} $ of the transactions among
those messages.
At timeslot $t$, if $M$ is
the set of all messages received by an honest player~$p$ at timeslots
$\leq t$, then $p$ regards the transactions in $\mathcal{C}(M)$ as
\emph{confirmed}.
%\tim{commented below out to save space}
%\footnote{In principle, a confirmation rule could
%  depend on the timeslots at which an honest player 
%  received its messages---necessarily timeslots at which the player
%  was active---in addition to the messages themselves.  However, confirmation rules of this sort will not generally be able to produce consistency (as defined in this section) in the context of late-arriving players. }
For a set of messages~$M$, 
define $\mathtt{S}(\mathtt{S}^*,M,id):=\mathtt{S}(\mathtt{S}^*,\mathcal{C}(M),id)$ and
$\mathtt{S}(\mathtt{S}^*,M):=\mathtt{S}(\mathtt{S}^*,\mathcal{C}(M))$.

%\tim{commented below out to save space}
%The confirmation rule $\mathcal{C}$ is separate from the state
%diagram~$\Sigma$ (i.e., the protocol instructions), and one may
%therefore consider different confirmation rules for the same protocol
%instructions.  A blockchain
%protocol must therefore specify $\mathcal{C}$ as well as $\Sigma$, $\mathcal{O}$ and the stake allocation functions.  For
%example, the Bitcoin protocol instructs players to extend the longest
%chain.
%One confirmation rule might extract the ``longest chain'' from $M$ and
%regard a transaction as confirmed if it belongs to a block that is at
%least 6 deep in that chain. Another rule might regard a transaction as
%%longest chain.

\vspace{0.2cm} The requirements on a blockchain protocol are that it
should be \emph{live} and \emph{consistent}. For the sake of
simplicity (to avoid the discussion of small error probabilities), in
this paper we consider versions of liveness and consistency that apply
to deterministic protocols. The proofs of our negative results
(Theorems~\ref{neg} and~\ref{neg2}) will apply directly to
deterministic protocols, but are easily extended to give analogous
impossibility results for probabilistic protocols, simply by
accounting for the appropriate error probabilities. The proof of our
positive result (Theorem~\ref{posres}) uses 
a deterministic protocol, which only strengthens the result.

\vspace{0.2cm} 
\noindent \textbf{Defining
  liveness}. %Set $\text{GST}=0$ if in the synchronous model.
We say a protocol is \emph{live} if there exists a constant $T_l$,
which may depend on the message delay bound $\Delta$ and the other
determined protocol inputs, such that, whenever the environment sends
a transaction $\mathtt{tr}$ to an honest player at some timeslot $t$,
$\mathtt{tr}$ is among the sequence of confirmed transactions for
every active honest player at every timeslot
$t'\geq \text{max} \{ t, \text{GST} \} +T_l$. (In the synchronous
model, GST should be interpreted as 0.)

    \vspace{0.2cm} 
\noindent \textbf{Defining consistency}. Suppose the sequence of confirmed transactions for $p$ at $t$ is $\sigma= (\mathtt{tr}_1,\dots,\mathtt{tr}_k)$, and that the sequence of confirmed transactions for $p'$ at $t'\geq t$ is $\sigma'=(\mathtt{tr}_1',\dots,\mathtt{tr}'_{k'})$.  We say a blockchain protocol is \emph{consistent} if it holds in all executions that, whenever $p$ and $p'$ are honest: 
\begin{itemize} 
\item If $p=p'$ then $\sigma'$ \emph{extends} $\sigma$, meaning that
  $k'\geq k$ and $\mathtt{tr}_i=\mathtt{tr}'_i$ for each $i\in [1,k]$. 
\item Either $\sigma$ extends $\sigma'$, or $\sigma'$ extends $\sigma$. 
\end{itemize}

\subsection{Protocols, executions, and $\rho$-bounded adversaries} \label{protdef} 

%\textbf{Generalizing on-chain resouces}. By a form of on-chain resouce, we mean a 
%function $\mathtt{S}$
%such that $\mathtt{S}(M,id) \in \mathbb{N}$ for every identifier
%$id$ and set of messages $M$, and such that
%$\mathtt{S}(M,id) =0$ for identifiers~$id$ outside of $\cup_{p
%  \in \mathcal{P}} \mathtt{id}(p)$.
%%$p\in \mathcal{P}$ such that $id\in \mathtt{id}(p)$.
%We let $\mathtt{S}(M)$ denote the function which on input $id$ gives
%output $\mathtt{S}(M,id)$.

\noindent \textbf{Specifying blockchain protocols and executions}. A
blockchain protocol is a tuple
$(\Sigma,\mathcal{O},\mathcal{C},\mathcal{S})$, where $\Sigma$ is the
state machine diagram determining honest players, $ \mathcal{O}$ is a
set of oracles (some of which may be permitters), $\mathcal{C}$ is a
confirmation rule, and
$ \mathcal{S}=\{ \mathtt{S}_1,\dots, \mathtt{S}_j \}$ is a set of
stake allocation functions.  An \emph{execution} of the protocol
$(\Sigma,\mathcal{O},\mathcal{C},\mathcal{S})$ is a specification of
the set of players $\mathcal{P}$, the player allocation, the state
diagram of each player and their inputs, and the following values for
each player $p$ at each timeslot: (i) $p$'s state at the beginning of
the timeslot; (ii) the multiset of messages received by $p$; (iii) the
oracle queries sent by $p$; (iv) the oracle responses received by $p$;
(v) the messages disseminated by $p$.

% \begin{enumerate} 
% \item[(i)]  $p$'s state at the beginning of the timeslot.  
% \item[(ii)]  The multiset of messages received by $p$. 
% \item[(iii)]  The oracle queries sent by $p$. 
% \item[(iv)] The oracle responses received by $p$. 
% \item[(v)]  The messages disseminated by $p$.  
% \end{enumerate} 
 
\vspace{0.2cm} 
\noindent \textbf{Defining $\rho$-bounded adversaries.}  We say that an
execution of a  protocol is {\em $\rho$-bounded} if:
\begin{itemize}

\item The adversary is externally $\rho$-bounded (in the sense of Section~\ref{mec}).

\item For each stake allocation function $\mathtt{S}\in \mathcal{S}$, and among active players, Byzantine players never control more than
  a $\rho$ fraction of the stake.  Formally,
for every honest player $p$ at timeslot~$t$,
if~$\mathtt{T}$ is the set of
transactions that are confirmed for $p$ at $t$ in this execution,
then at most a~$\rho$ fraction of the stake allocated to 
players active at $t$ by $\mathtt{S}(\mathtt{S}^*,\mathtt{T})$ is allocated to
Byzantine players.
\end{itemize}
 
\noindent When we say that ``the adversary is $\rho$-bounded,'' we mean that we
restrict attention to $\rho$-bounded executions. We say that a protocol is {\em $\rho$-resilient} when it
it live and consistent  under the assumption that
the adversary is $\rho$-bounded. We say that a protocol is {\em $\rho$-resilient for liveness} when it
it live  under the assumption that
the adversary is $\rho$-bounded, and that it is {\em $\rho$-resilient for consistency} when it
is consistent  under the assumption that
the adversary is $\rho$-bounded.

\subsection{The dynamically available and quasi-permissionless settings} \label{hierarchy}
 
%\textbf{A hierachy of permissionlessness}.   In \cite{lewis2023permissionless}, 
Lewis-Pye and Roughgarden~\cite{lewis2023permissionless} describe  a ``degree of
permissionlessness'' hierarchy that parameterizes what a protocol may assume about the activity of honest players.
This hierarchy is defined by four settings. Informally:
% \footnote{In order from ``most permissionless'' to ``least permissionless'' these are: the fully permissionless setting, the dynamically available setting, the quasi-permissionless setting and the permissioned setting.
% As the degree of permissionlessness of the setting
%  increases---roughly, as the knowledge a protocol has about current
% participation decreases---positive results become more impressive and
%  harder to come by (and impossibility results are increasingly to be
% expected).} 
%As the degree of permissionlessness of the setting
%increases---roughly, as the knowledge a protocol has about current
%participation decreases---positive results become more impressive and
%harder to come by (and impossibility results are increasingly to be
%expected). Informally, the four settings are:
\begin{enumerate}

\item {\em Fully permissionless setting.} At each moment in time, the
  protocol has no knowledge about which players are currently running
  it.  Proof-of-work protocols are typically
  interpreted as operating in this setting.

\item {\em Dynamically available setting.} At each moment in time, the
  protocol is aware of a dynamically evolving list of identifiers
  (e.g., public keys that currently have stake committed in a
  designated staking contract).  The protocol may assume that at least
  \emph{some} honest members of this list are active and participating
  in the protocol, but must function even when levels of participation
  fluctuate unpredictably.  Proof-of-stake longest-chain protocols 
%such as
%  Ouroboros \cite{kiayias2017ouroboros} and Snow White
%  \cite{daian2019snow} 
are typically designed to function correctly in this
  setting.
%\footnote{Proof-of-work longest-chain protocols such as
%    Bitcoin~\cite{nakamoto2008bitcoin} function correctly even in the
%    strictly more demanding ``fully permissionless setting'' defined
%    in~\cite{lewis2023permissionless}.}

\item {\em Quasi-permissionless setting.} At each moment in time, the
  protocol is aware of a dynamically evolving list of identifiers (as
  in the dynamically available setting), but now the protocol may
  assume that \emph{all} honest members of the list are active.
  Proof-of-stake PBFT-style protocols 
%such as
%  Algorand~\cite{chen2016algorand} 
are typically interpreted as
  operating in this setting.

\item {\em Permissioned setting.} The list of identifiers is fixed at
  the time of the protocol's deployment, with one identifier per
  participant and with no dependence on the protocol's execution.  At
  each moment in time, membership in this list is necessary and
  sufficient for current participation in the protocol.  PBFT is a
  canonical example of a blockchain protocol that is designed for the
  permissioned setting.

\end{enumerate}
Each level of the hierarchy is a strictly easier setting (for
possibility results) than the previous 
level. For example, an impossibility result for the
dynamically available setting such as Theorem~\ref{neg} automatically
holds also in the fully permissionless setting.

\vspace{0.2cm} 
\noindent \textbf{Formally defining the dynamically available setting.} In the dynamically available setting, no assumptions are made about
participation by honest players, other than the minimal assumption
that, if any honest player owns a non-zero amount of stake, then at
least one such player is active.\footnote{Additional assumptions about the
fraction of stake controlled by active honest players are
phrased using the notion of $\rho$-bounded adversaries from
Section~\ref{protdef}.} 

Consider the protocol $(\Sigma,\mathcal{O},\mathcal{C}, \mathcal{S})$.  By
definition, an execution of the protocol is {\em
  consistent with the dynamically available setting} if, for each stake allocation function $\mathtt{S}\in 
  \mathcal{S}$:
%the following holds for all $t\leq d$:
\begin{itemize}

\item [] Whenever~$p$ is honest and active at~$t$, with~$\mathtt{T}$
  the set of transactions confirmed for~$p$ at $t$ in this execution, if
  there exists an honest player assigned a non-zero amount of stake by
  $\mathtt{S}(\mathtt{S}^*,\mathtt{T})$, then at least one such player is
  active at~$t$. 

\end{itemize}

%\vspace{0.2cm} 
%\noindent \textbf{Comment}. The participation assumptions of the dynamically available setting require a protocol to function  irrespective of the total resources owned by active players, so long as some of the honest players with stake are active. This may seem a very weak strengthening of the assumptions of the fully permissionless setting, but the extra assumptions of the dynamically available setting are required for longest chain PoS protocols such as  Ouroboros \cite{kiayias2017ouroboros} 
%and Snow White \cite{daian2019snow}. In \cite{lewis2023permissionless} it is also shown that deterministic protocols operating in the dynamically available setting can solve Byzantine Agreement \cite{lamport1982byzantine}, while this is not true of deterministic protocols operating in the fully permissionless setting. 

\vspace{0.2cm} 
\noindent \textbf{Formally defining the quasi-permissionless setting.} Consider the protocol
$(\Sigma,\mathcal{O},\mathcal{C},\mathcal{S})$.
By definition, an execution of the protocol is {\em consistent with
  the quasi-permissionless setting} if, for each stake allocation function $\mathtt{S}\in 
  \mathcal{S}$:
\begin{itemize}

\item [] 
Whenever~$p$ is honest and active at~$t$, with~$\mathtt{T}$
  the set of transactions confirmed for~$p$ at $t$ in this execution, every honest player that is
  assigned a non-zero amount of stake by
  $\mathtt{S}(\mathtt{S}^*,\mathtt{T})$ is
  active at~$t$. 
  
%  Whenever~$p$ is
%  honest and active at~$t$, with~$M$ the set of messages received
%  by~$p$ at timeslots $\le t$ in this execution, every honest player that is
%  assigned a non-zero balance by~$\mathtt{S}(S_0,M)$ is active
%  at~$t$. (Recall that $\mathtt{S}(S_0,M)$ is shorthand for $\mathtt{S}(S_0,\mathcal{C}(M))$.) 

\end{itemize}
Thus, the quasi-permissionless setting insists on activity from every
honest player that possesses any amount of any form of
stake listed in the protocol description.

\section{The cost of an attack} \label{costdef} 

%\subsection{The cost of attack} \label{coa} 

\subsection{An Overly Simplified Attempt}\label{ss:attempt}

Consider an attacker poised to execute an attack, in the form of a consistency violation, on a permissionless consensus protocol. When would carrying out such an attack have negative economic consequences for the attacker, meaning that, ignoring off-chain gains (from double-spends, short positions, etc.), it's ``worse off'' than before? (Ideally, with honest players ``no worse off'' than before.) A first cut might be to track the market value of all the protocol-relevant resources owned by each (honest or Byzantine) player. That is, consider a blockchain protocol that uses~$k$ resources, and let~$\texttt{R}_i(p,t)$ denote the number of units of the $i$th resource (e.g., ASICs or coins) owned by player~$p$ at timeslot~$t$. 
% \andy{Maybe put a forward pointer to the issue with how $\mathtt{R}_i(p,t)$ is defined for PoS?}\tim{I'm hoping the section title makes it clear to the reader that these will get revised?}
Let $C_i(t)$ denote the per-unit market price of the $i$th resource at timeslot~$t$ and define $p$'s ``net worth'' at timeslot $t$ by
\begin{equation}\label{eq:networth}
V(p,t) := \sum_{i=1}^k \texttt{R}_i(p,t) \cdot C_i(t).
\end{equation}
The ``consequences of an attack'' carried out by a set $B$ of Byzantine players at some timeslot~$t^*$ could then be measured by the value of $B$'s resources immediately before and after the attack:
\begin{equation}\label{eq:Bsimple}
\frac{\sum_{p \in B} V(p,t^*_+)}{\sum_{p \in B} V(p,t^*_-)},
\end{equation}
with lower values of this ratio corresponding to more severe economic consequences of an attack. The idea that ``the honest players~$H$ should be no worse off'' would then translate to the condition that
\begin{equation}\label{eq:Hsimple}
V(p,t^*_+) \ge V(p,t^*_-)
\end{equation}
for every $p \in H$.

We can then consider how the different scenarios laid out in Section~\ref{ss:scorched} would translate to this formalism.

\vspace{.5\baselineskip}

\noindent
\textbf{Scenario 1: status quo.} Suppose the Bitcoin protocol suffers a consistency violation and yet neither of the narratives in Section~\ref{ss:scorched} plays out as expected, with both the cryptocurrency price and the cryptographic hash function used for proof-of-work mining unchanged following the attack. Then the new market price~$C(t^*_+)$ of an ASIC would equal the old price~$C(t^*_-)$ and hence $V(p,t^*_+)$ would equal $V(p,t^*_-)$ for every player~$p$. (In this example, there's only one resource---hashrate---and so we drop the dependence on~$i$.)
Thus, while the condition~\eqref{eq:Hsimple} would hold (which is good), the ratio in~\eqref{eq:Bsimple} would be~1, indicating an  attack without any economic consequences. We will call such an attack {\em cheap}.
% \eric{I think there might be notation bugs in Scenario 1's C() functions. Should these be $C(t^*_+)$ and $C(t^*_-)$ like in Scenario 2?}
\vspace{.5\baselineskip}

\noindent
\textbf{Scenario 2: price collapse (proof-of-work).} Nakamoto's
original narrative posited that a double-spend attack on the Bitcoin
protocol would significantly decrease the USD value of BTC (and hence
of ASICs for Bitcoin mining), an assumption that translates to
$C(t^*_+) \ll C(t^*_-)$ and hence, for any (honest or Byzantine)
ASIC-owning miner~$p$, $V(p,t^*_+) \ll V(p,t^*_-)$. In this case, the
attack is {\em expensive}, meaning that the ratio
in~\eqref{eq:Bsimple} is less than one. It is expensive for the wrong
reasons, however, in the sense that the conditon in~\eqref{eq:Hsimple}
fails. We therefore say that the attack is {\em expensive due to collapse}.
%\eric{writing note: the word 'collapse' shows up here and works quite well in the surrounding context. }

\vspace{.5\baselineskip}

\noindent
\textbf{Scenario 3: a hard fork (proof-of-work).} The second narrative in Section~\ref{ss:scorched}, in which a double-spend attack on a proof-of-work protocol is punished through a hard fork that changes the cryptographic hash function used for proof-of-work mining, is mathematically equivalent to the first, with existing ASICs losing much of their value and so $C(t^*_+) \ll C(t^*_-)$. This again is an example of an attack that is expensive, but expensive due to collapse.
%\eric{writing note: the word 'collapse' shows up here and works quite well in the surrounding context. }

\vspace{.5\baselineskip}

\noindent
\textbf{Scenario 4: price collapse (proof-of-stake).} Consider now a proof-of-stake protocol---be it longest-chain, PBFT-type with slashing, or anything else---in which case $C(t)$ denotes the USD-denominated market price at time~$t$ of one coin of the protocol's native cryptocurrency. If a double-spend attack harms this cryptocurrency's price, then, as in Scenario~2, the attack is expensive due to collapse. 
%\eric{writing note: the word 'collapse' shows up here and works quite well in the surrounding context. }

\vspace{.5\baselineskip}

\noindent
\textbf{Scenario 5: successful slashing.} Can any protocol---a proof-of-stake protocol, say---make attacks expensive for the right reasons, not due to collapse? To make this question precise, let's assume that a double-spend attack has no effect on the price of the protocol's native currency, i.e., $C(t^*_+)=C(t^*_-)$. Suppose further that a protocol is able to fulfill the promises of slashing outlined in Section~\ref{ss:scorched}, identifying (at least some of) the Byzantine players and surgically confiscating their stake (without inadvertently destroying any stake owned by honest players). Then, $\texttt{R}(p,t^*_+)$ would be less than $\texttt{R}(p,t^*_-)$ (and hence $V(p,t^*_+)$ would be less than $V(p,t^*_-)$) for at least some $p \in B$, while $V(p,t^*_+)=V(p,t^*_-)$ for all $p \in H$. As a result, the ratio in~\eqref{eq:Bsimple} would be less than~1 even as the condition in~\eqref{eq:Hsimple} holds. In this case, we call the attack {\em expensive even in the absence of collapse}. A protocol would then be called {\em EAAC} if every possible consistency violation would be expensive in this sense.

%\eric{Alternatively: you could both cut the parenthetical above and add a brief paragraph after Scenario 5 that previews how we are going to define collapse formally and defends its honor a bit.}
% \eric{Minor wording edits in Scenario 5 to make the use of the phrasing 'collapse' more natural and make clear that we are focusing on $C(t^*_+)=C(t^*_-)$ to make the question about collapse precise. I think the way the words cheap, expensive, and collapse show up in Scenarios 1-4 works great, and the language 'expensive for the right reasons, not due to collapse' in Scenario 5 is great. So we might be ok with the word 'collapse' and these minor changes. }

\subsection{The Formal Definitions}\label{ss:eaac}

\paragraph{Delayed economic consequences.}
The definitions proposed in Section~\ref{ss:attempt} convey the spirit
of our approach, but they are inadequate for a number of reasons.  For
example, there is no hope of designing a protocol that is EAAC in the
sense above: the expression in~\eqref{eq:Bsimple} considers only the
immediate economic cost suffered by an attack, while any
protocol-inflicted punishment would require some time to take
effect. For example, to enact slashing in a proof-of-stake protocol,
honest players need time to communicate, confirm evidence of a
consistency violation, and carry out the appropriate slashing
instructions. (Changes in the price of the protocol's native currency,
should they occur, would presumably also play out over a period of
time.)  Thus, we must instead insist that Byzantine players suffer
economic consequences from a consistency violation at some timeslot
$t_f$ after the timeslot~$t^*$ in which the consistency violation is
seen by honest players (here the ``$f$'' stands for
``final''). Intuitively, $t_f$ is a timeslot by which the aftermath of
the attack has played out, with the currency price re-stabilized
(possibly at a new level) and any protocol-inflicted punishment
complete.
%\eric{changed 'minutes, or hours, or days' to just 'time' or you could use 'significant period of time' or something like that}

\paragraph{Investment functions.}
Next, the definition of the functions~$\mathtt{R}_i(p,t)$, henceforth
called {\em investment functions}, requires more care.  For an
external resource~$i$, $\mathtt{R}_i(p,t)$ denotes the (objective)
number of units of that resource owned by~$p$ at~$t$, as before.  For
a resource~$i$ that corresponds to some form of stake or other
on-chain resource, $\mathtt{R}_i(p,t)$ may depend on the current state
of the blockchain protocol that controls the resource, meaning the
sets of messages that have been received by each player by
timeslot~$t$.  (We generally abuse notation and continue to write
$\mathtt{R}_i(p,t)$, suppressing the dependence on the sets of
received messages and the protocol that generates them.) We allow
flexibility in exactly how investment functions are defined, but the
rough idea is that $\mathtt{R}_i(p,t)$ should update once a
resource-changing transaction has been ``sufficiently processed'' by
the protocol (e.g., confirmed by an appropriate honest player).  Our
three main results concern investment functions of the following
types:
\begin{itemize}

\item Our negative result for the
dynamically available setting (Theorem~\ref{neg}) holds for any
choice of investment functions, and thus requires no further modeling.

\item
Our negative result for the partially synchronous setting
(Theorem~\ref{neg2}) holds assuming only that, provided no consistency
violation has been seen, zeroing out one's stake
balance eventually zeroes out the corresponding investment function
(i.e., the user eventually recoups their investment).

Formally, consider an investment function $\mathtt{R}_i$
that corresponds to a stake allocation function~$\mathtt{S}$.
Suppose that, in any execution~$\mathtt{E}$ in which:
\begin{itemize}
\item $t$ is a timeslot $\ge$GST;
\item no honest player has seen a consistency violation by timeslot~$t+\Gamma$;
\item for some player~$p$ and honest player~$p'$,
  $\mathtt{S}(\mathtt{S}^*,\mathtt{T}_{t'},id) = 0$ for all $id \in
\mathtt{id}(p)$ and $t' \in [ t,t+\Gamma ]$,
where $\mathtt{T}_{t'}$ denotes the transactions confirmed for~$p'$ at
timeslot~$t'$;
\end{itemize}
it holds that $\mathtt{R}_i(p,t+\Gamma)=0$ and, moreover, that
$\mathtt{R}_i(p,t+\Gamma)=0$ also in every execution that is
indistinguishable from~$\mathtt{E}$ for~$p'$ up to timeslot~$t+\Gamma$
(i.e., every execution in which~$p'$ receives the same inputs,
messages, and oracle responses at each timeslot $<t+\Gamma$ as
in~$\mathtt{E}$).  In this case, we call $\mathtt{R}_i$ a {\em
  $\Gamma$-liquid investment function}. (Intuitively, barring a
consistency violation seen by honest players, $p$ can eventually recoup
its investment from~$p'$, or more generally from any third party that
uses~$p'$ as its ``source of truth'' about which transactions have
been confirmed.)

% if a
% transaction transferring $x$ units of stake from~$p$ to~$p'$ is
% confirmed by~$p'$ (according to the messages received by~$p'$) at~$t$,
% then~$R_i(p,t)$ decreases by~$x$ by the next timeslot. Intuitively,
% $p'$ hands over an appropriate amount of cash to~$p$---decreasing
% $p$'s net investment into on-chain resources---once it considers the
% stake-transferring transaction to be confirmed. We call such an
% investment function {\em liquid}. 

\item For our positive result in
Theorem~\ref{posres}, we use what we call a {\em canonical PoS}
investment function. Here, there is a single resource, representing
the amount of stake that a player has locked up in escrow.
% (which determines the player's voting power). 
The resource balance of a
player changes only through staking and unstaking transactions. A
canonical investment function~$R(p,t)$ increases by~$x$ after a
valid staking transaction (with staking amount~$x$) is first confirmed by
some honest player (possibly after a delay), and decreases by~$x$
after a valid unstaking transaction is first confirmed by some
honest player (again, possibly after a delay).
Such a function is~$\Gamma$-liquid, where~$\Gamma$ is the maximum
delay before the investment function reflects a newly confirmed
staking or unstaking transaction.
%\andy{If $\Gamma$ is supposed to correspond to an epoch, a complication is that epochs don't take some bounded amount of time (they do after GST, but obviously the choice of GST is arbitrary).}
%\tim{fixed in that now above we assume t \ge GST}

\end{itemize}
%
% as the number of units of
%the $i$th resource owned by~$p$ requires revision in the case of an
%on-chain resource such as stake, the value of which could be different
%from the perspective of different players (who have received different
%sets of messages thus far). Instead, we interpret $\mathtt{R}_i(p,t)$
%as the net amount (in units of the resource) that~$p$ has invested in
%the~$i$th resource at timeslot~$t$.  For example, if the
%$i^{\text{th}}$ resource is a form of stake, then
%$\mathtt{R}_{i}(p,t)$ denotes the number of units of stake that $p$
%has purchased and not subsequently sold by time $t$.\footnote{Strictly
%  speaking, if the $i^{\text{th}}$ resource is a form of stake, the
%  value $\mathtt{R}_{i}(p,t)$ may depend on the execution of the protocol
%  (e.g., whether or not a submitted ``buy more stake'' transaction has
%  been sufficiently processed by the protocol by timeslot~$t$).
%%In this case,
%%we assume that transactions confirmed by all honest players are
%%reflected in the value of $\mathtt{R}_{i}(p,t)$
%%Our
%%  results are independent of how these details are defined (e.g., the
%%  precise definition of ``sufficiently processed'').
%}
%With this change in interpretation, 
In all these cases, we can interpret the economic investment
of a player~$p$ at some timeslot $t$ by the expression
in~\eqref{eq:networth}, as before.\footnote{As an aside, the ``flow
  cost of an attack'' discussed in Section~\ref{ss:flow} would then be
  defined as
  $\sum_{p \in B} \sum_{i=1}^k \sum_{t=1}^{t_f} c_i \cdot
  \mathtt{R}_i(p,t)$,
  where~$B$ denotes the attacking players and the~$c_i$'s are defined
  as in that section.}

%This definition is protocol-independent (modulo any impact of the protocol on $\mathtt{R}_i(p,t)$ values), in  that~\eqref{eq:networth} is a function only of players' investments and the resources' prices. 

%Following a consistency violation and a possible protocol-dependent
%punishment, however, we have no choice but to use a protocol-dependent
%value function. 

\paragraph{Valuation functions.}
Finally, we must specify the allowable behavior of a value function
that represents the economic value of on-chain resources
following a consistency violation.
In general, we allow a quite abstract notion of a \emph{valuation
  function} $v$ that is a
nonnegative function of a set~$P$ of players, a timeslot~$t$, player investments
at that time (the $\mathtt{R}_i(p,t)$'s), market prices at that time
(the $C_i(t)$'s), and the sets of messages that have been received by
each player by that time.  
Our impossibility results (Theorems~\ref{neg}
and~\ref{neg2}) apply to all such valuations, and thus require no
further modeling.

For our positive result (Theorem~\ref{posres}), we use what we call a
{\em canonical PoS} valuation function, which imposes two much
stronger restrictions (thereby strengthening the result).
% \andy{Could emphasize that this only strengthens the result.} 
%\tim{done|
First, we require it to be {\em
  consistency-respecting}, in the sense that it is defined only for
timeslots~$t$ at which consistency holds (i.e., if~$\mathtt{T}$ and
$\mathtt{T}'$ are the confirmed transactions for honest~$p$ and~$p'$
at~$t$, then one of $\mathtt{T},\mathtt{T}'$ is a prefix of the
other).\footnote{Without this requirement, a protocol could achieve
  the EAAC property by addressing only the first two challenges on
  page~\pageref{page:challenges} and not the third. For example, the
  valuation function could be defined to discount any stake implicated
  by some certificate of guilt known to some honest player, even while
  honest players disagree on what the ``post-slashing'' state should
  be.}  Second, at timeslots at which consistency does hold, the
valuation function equals the value of the as-yet-unslashed
stake-in-escrow of the players in question (at the current market
prices).\footnote{See Section~\ref{instant} for the precise
  definition used in the proof of Theorem~\ref{posres}.}

% Intuitively, at least for a
% proof-of-stake protocol, the natural candidate
% %for the value function following a consistency violation 
% would be the ``current market value of a player's unslashed
% stake''---and for our positive result in Theorem~\ref{posres}, we use
% a formalization of this exact idea---but the precise definition of
% ``unslashed stake'' is delicate and depends on the details of the
% proof-of-stake protocol. 

% A positive result would, ideally, use a valuation function
% with a clear economic meaning.  Accordingly, our Theorem~\ref{posres}

% For example, while the slashing of a player's
% stake does not change the net amount that the player has invested (and
% hence leaves~\eqref{eq:networth} unaffected), it would presumably be
% reflected by a decrease in~$v$.  

\paragraph{EAAC protocols.}
We now proceed to our formal definition of an ``EAAC'' protocol. As
noted in the last scenario of Section~\ref{ss:attempt}, to have any
hope of avoiding collateral damage to honest participants, we must
assume that a consistency violation does not affect the market prices
of the relevant resources. Thus, in the following definition, we
consider a fixed price~$C_i$ for each resource rather than an
arbitrary price sequence $\{ C_i(t) \}_{t \ge 1}$.

% \vspace{0.2cm} \andy{This definition does work for all our results and is basically the most general definition that allows the negative results to go through. We can therefore leave it as is, if we want.  There is something slightly unsatisfying about the fact that the definition as is doesn't really require us to do as much as we do for the proof of Theorem \ref{posres}, though (although the work is needed for Theorem \ref{posres2}). In particular, the definition below doesn't really force us to establish consensus on slashing conditions.  In a stronger version of the definition, $v$ would take as input an arbitrary set of messages $M$, and should give the same value whenever $M$ is the set of messages received (by $t$) by any active honest player. As it is, $v$ could just say ``slash $p$ if anybody has seen a certificate of guilt for $p$''. }

%\eric{here and above under Scenario 5 we are using the word 'collapse' in our mathematical sense of non-zero decline in the honest players' resource values but we haven't sold that definition to the reader yet. in particular we need to make sure that the reader doesn't think we are being cagey calling a 0.01 percent decline in honest participants' value a 'collapse' which sounds super melodramatic. the footnote at the very end of Section 3 helps on this issue but ideally we can head it off earlier. }%to have any hope of avoiding collapse

\begin{definition} \label{d:eaac}
A protocol is {\em EAAC with respect to investment functions~$\{ \mathtt{R}_i(\cdot,\cdot)
  \}_{i \in [k]}$ and valuation function $v$} in a given setting if,
for every choice~$\{ C_i \}_{i \in [k]}$ of fixed resource prices,
every external resource~$i$, and every choice~$\{ \mathtt{R}_i(p,t)
\}_{p \in P, t \ge 1}$ of players' investments in~$i$:
%\eric{I don't love the 'with respect to valuation function $v$' aspect of the definition. it seems clunky. However I don't understand the need for it well enough yet to be confident in my dislike of it. }
\begin{itemize}

\item [(a)] for every execution of the protocol consistent with the setting and every timeslot~$t \ge 1$,
\begin{equation}\label{eq:eaac1}
v(H,t,\{ \mathtt{R}_i(p,t) \}_{p \in H,i \in [k]},\{ C_i \}_{i \in [k]}, \mathcal{M})
\ge 
\sum_{p \in H} \sum_{i=1}^k \mathtt{R}_i(p,t) \cdot C_i,
\end{equation}
where~$H$ denotes the set of honest players and~$\mathcal{M}$ the sets of messages received by each player by timeslot~$t$ in the execution; and

\item [(b)] for every execution of the protocol consistent with the setting with a consistency violation that is first seen by honest players at a timeslot~$t^*$, there exists a timeslot~$t_f \ge t^*$ such that:
\begin{equation}\label{eq:eaac2}
v(B,t_f,\{ \mathtt{R}_i(p,t_f) \}_{p \in B, i \in [k]},\{ C_i \}_{i \in [k]}, \mathcal{M})
<
\sum_{p \in B} \sum_{i=1}^k \mathtt{R}_i(p,t_f) \cdot C_i,
\end{equation}
where~$B$ denotes the set of Byzantine players and~$\mathcal{M}$ the
sets of messages received by each player by timeslot~$t$ in the
execution.\footnote{For the purposes of
  inequalities~\eqref{eq:eaac1} and~\eqref{eq:eaac2}, we interpret an
  undefined valuation function as $+\infty$.}

\end{itemize}
\end{definition}
Intuitively, part~(a) of Definition~\ref{d:eaac} requires that honest players
can always recoup whatever they may have invested (e.g., they
are never slashed in a proof-of-stake protocol). Part~(b) asserts that
an attacker will be unable to fully cash out, with some of their
investment lost to the protocol. 

% A protocol can be EAAC with respect to some valuations and non-EAAC
% with respect to others. The strength of an impossibility result should
% be judged by the range of valuations that it rules out, and our
% Theorems~\ref{neg} and~\ref{neg2} demonstrate the impossibility of
% achieving EAAC with respect to {\em any} valuation in both the
% dynamically available setting (with a $\tfrac{1}{2}$-bounded
% adversary) and the quasi-permissionless and partially synchronous
% setting (with a $\frac{1}{3}$-bounded adversary). Any positive result
% should be accompanied by a discussion of the ``reasonableness'' of the
% valuation used; for our positive results (Theorems~\ref{posres}
% and~\ref{posres2}), the valuation used is the market value (according
% to the $C_i$'s) of the unslashed stake (according to the messages
% received and the rules of the protocol). This function is additive
% over players and independent of the $\mathtt{R}_i(p,t)$'s.

For brevity, we sometimes denote by~$\alpha_H$ the ratio between the
left- and right-hand sides of~\eqref{eq:eaac1},
and by~$\alpha_B$ the
analogous ratio for~\eqref{eq:eaac2}. 
(If the right-hand side is 0 or
the left-hand side is undefined, 
we can interpret the ratio as~1.) Thus, the EAAC condition states that
$\alpha_H \ge 1$ should always hold (i.e., if attacks are expensive,
it's for the right reasons, not due to collapse) and that
$\alpha_B < 1$ should hold at some point after a consistency violation
(i.e., attacks are indeed expensive).
If Definition~\ref{d:eaac} holds with condition~(b) always satisfied
with some $\alpha_B \le \alpha < 1$, then we say that the protocol is
{\em $\alpha$-EAAC} 
(with respect to $\{ \mathtt{R}_i(\cdot,\cdot)  \}_{i \in [k]}$
and~$v$).\footnote{While ``1'' is arguably
  the most natural threshold when using these ratios to define
  ``expensive'' and ``collapse,'' a different constant~$x \in (0,1)$
  could be used instead. Our impossibility results would then hold,
  with essentially the same proofs, for
  any choice of $x$.}

\section{Impossibility results} \label{res1} 

Our first result establishes that blockchain protocols that are live
and consistent in the dynamically available setting are always cheaply
attackable in the absence of collapse. In other words, once an
adversary is large enough to cause consistency violations, they are
also large enough to avoid asymmetric punishment.
The proof of Theorem~\ref{neg}
appears in Section \ref{negproof}.

\begin{theorem}[Impossibility Result for the
  Dynamically Available Setting]\label{neg} 
In the dynamically available setting, with a $\tfrac{1}{2}$-bounded
adversary, for every choice of investment functions and valuation
function, no protocol can be live and EAAC.
This holds even in the synchronous model and with 
%when we restrict attention to 
Byzantine players that have fixed (i.e., time-invariant) resource
balances.
%that hold a fixed amount of resource over time.
%set of resources that is fixed over time, and in the synchronous model.
%(and so also in the partially synchronous model).   
\end{theorem}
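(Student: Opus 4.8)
The plan is to derive a contradiction from a pair of mirror-image executions that become indistinguishable to \emph{every} player by a suitable late timeslot, so that the single value $v(\cdot,t_f,\ldots)$ is forced to satisfy the incompatible inequalities \eqref{eq:eaac1} and \eqref{eq:eaac2}. First I would fix two disjoint player sets $X$ and $Y$ with equal, time-invariant resource balances, so that a $\tfrac12$-bounded adversary may consist of exactly one of them, and fix two conflicting transactions $\mathtt{tr},\mathtt{tr}'$. In execution $E$ I take $X$ Byzantine and $Y$ honest, with the environment delivering $\mathtt{tr}$ to $X$ and $\mathtt{tr}'$ to $Y$ at timeslot~$1$. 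The Byzantine players in $X$ withhold all of their messages, so the honest set $Y$ runs in isolation; since the execution is consistent with the dynamically available setting (the active honest player $Y$ owns stake) and the protocol is assumed live, liveness forces $Y$ to confirm a sequence $\mathcal{C}(M_Y)\ni\mathtt{tr}'$ by the liveness bound $T_l$, where $M_Y$ is the message set an isolated honest $Y$ disseminates. Meanwhile the Byzantine players in $X$ internally simulate an honest isolated run on input $\mathtt{tr}$---querying the oracles at the same timeslots, with oracle randomness coupled so responses agree---thereby computing the very message set $M_X$ (with $\mathcal{C}(M_X)\ni\mathtt{tr}$) that an honest isolated $X$ would produce, and disseminate it all at once at some $t_2>T_l$. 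The mirror execution $E'$ swaps the roles: $X$ honest and disseminating $M_X$ in real time, $Y$ Byzantine and releasing the identically-computed $M_Y$ at $t_2$. Because $M_X$ and $M_Y$ are each just the honest-protocol output on fixed inputs, they denote the same objects in both executions.

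Next I would produce the violation and establish indistinguishability. Set $t^*:=t_2+\Delta$, so that by synchrony every player active at $t^*$ has received $M_X\cup M_Y$. Since $\mathtt{tr}$ and $\mathtt{tr}'$ conflict, the single sequence $\mathcal{C}(M_X\cup M_Y)$ can contain at most one of them and hence extends at most one of $\mathcal{C}(M_X),\mathcal{C}(M_Y)$, so it fails to extend at least one; by the $X\leftrightarrow Y$ symmetry of the construction I may assume it fails to extend $\mathcal{C}(M_Y)$. Then in $E$ the honest player $Y$, having confirmed $\mathcal{C}(M_Y)$ at $T_l$, confirms at $t^*$ a sequence not extending it---a consistency violation first seen by an honest player at $t^*$. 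The key point is that at any $t_f\ge t^*$ the \emph{cumulative} received multiset of every player coincides in $E$ and $E'$: in both, $X$'s messages are $\{\mathtt{tr}\}\cup M_X\cup M_Y$ and $Y$'s are $\{\mathtt{tr}'\}\cup M_X\cup M_Y$. The sole genuine difference between the executions---which of $M_X,M_Y$ was withheld---affects only the timeslot of receipt, and this asymmetry has disappeared by $t_f$. Thus $\mathcal{M}_E(t_f)=\mathcal{M}_{E'}(t_f)$, and because the balances of $X$ are fixed, $\mathtt{R}_i(p,t_f)$ agrees across the two executions for every $p\in X$.

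Finally I would read off the contradiction. In $E$ the Byzantine set is $B=X$, so part~(b) of the EAAC definition supplies a timeslot $t_f\ge t^*$ with $v(X,t_f,\ldots,\mathcal{M}_E(t_f))<\sum_{p\in X}\sum_i \mathtt{R}_i(p,t_f)\,C_i$. In $E'$ the honest set is exactly $H=X$, so part~(a) gives, at that same $t_f$, the reverse inequality $v(X,t_f,\ldots,\mathcal{M}_{E'}(t_f))\ge\sum_{p\in X}\sum_i \mathtt{R}_i(p,t_f)\,C_i$. Since $v$ is a function of the received-message multisets, and these together with the investments and the fixed prices $\{C_i\}$ coincide in $E$ and $E'$ at $t_f$, the two left-hand sides are equal and the two right-hand sides are equal---impossible.

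The hard part will be the indistinguishability bookkeeping, not the high-level symmetry. I expect the main obstacle to be verifying that the Byzantine set's simulated run reproduces the honest message set exactly, which is what forces the coupling of oracle responses and a matching query schedule; one must also check that liveness genuinely applies to an isolated honest set in the dynamically available setting, and that passing to a late enough $t_f$ truly erases the only asymmetry (the dissemination order of $M_X$ and $M_Y$). A secondary point to confirm is that $\mathtt{tr}$ and $\mathtt{tr}'$ can be chosen so the stake distribution between $X$ and $Y$ stays balanced, keeping the execution $\tfrac12$-bounded at every confirmed state in both $E$ and $E'$.
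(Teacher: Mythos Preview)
Your proposal is essentially the paper's own argument: two symmetric executions with $X,Y$ of equal resources, one set Byzantine simulating an isolated honest run and releasing messages late, then exploiting that the valuation cannot distinguish the two at~$t_f$ to contradict~\eqref{eq:eaac1} and~\eqref{eq:eaac2}. The paper's version uses four executions (two auxiliary ``reference'' runs plus your $E,E'$) and argues via the \emph{order} of $\mathtt{tr}_1,\mathtt{tr}_2$ in $\mathcal{C}(M^*)$ rather than via conflict, but the skeleton is the same.

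One point you should tighten: you assert that for every $t_f\ge t^*$ the cumulative received multisets coincide in $E$ and $E'$, but you have only specified the Byzantine players' behavior up to the release at~$t_2$. Between $t^*$ and $t_f$ both sides keep disseminating, and those later messages depend on each player's internal state, which in turn depends on \emph{when} earlier messages arrived, not just on the cumulative set. The paper closes this by stipulating that, from $t^*$ onward, the Byzantine set behaves exactly like honest players who pretend all messages from the other side arrived at~$t^*$; this makes the post-$t^*$ evolution of $E$ and $E'$ literally identical, so the message sets match at every later timeslot. Without this clause your indistinguishability claim is not yet justified. You flag ``indistinguishability bookkeeping'' as the hard part, and this is precisely the missing ingredient. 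A minor second point: your justification that $\mathtt{R}_i(p,t_f)$ agrees across $E,E'$ should rest on the equality of received-message sets (which is what the investment functions depend on), not on ``balances of $X$ are fixed,'' which covers only the external resources.
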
 
%\eric{does the Theorem 4.1 statement need to say synchronous somewhere?} 
% \begin{theorem} \label{neg} 
% If a blockchain protocol is dynamically available, then it is cheaply attackable in the absence of collapse. This holds even when we restrict attention to Byzantine players that hold a set of resources that is fixed over time. 
% \end{theorem} 

% In the synchronous and dynamically available setting, and if the adversary is not $\rho$-bounded for $\rho<1/2$, no protocol can be EAAC.

Recall the hierarchy of settings described in Section
\ref{hierarchy}. Theorem \ref{neg} establishes that, if we want to
work with protocols that are non-trivially EAAC in some level of this
hierarchy, then we cannot work in the fully permissionless or
dynamically available settings and must instead focus on the
quasi-permissionless or permissioned settings. 

The next theorem shows
that, even should we work in the quasi-permissionless setting,
protocols cannot be non-trivially EAAC if we work in the partial
synchrony model. 
%(unless, that is, we restrict to $\rho$-bounded adversaries for small
%$\rho$ that make consistency violations impossible in the first
%place).  
The proof of Theorem~\ref{neg2} appears in Section~\ref{neg2proof}.
% \eric{A + E discussed this paragraph and how to improve. Some possible elements: 
% 1) give sense of role dynamic availability plays in the proof -- something like: The proof uses an indistinguishability argument across four executions, each of which takes advantage of the flexibility afforded by dynamic availability. 
% 2) make it a bit clearer the language around ``we must work in the quasi-...'' -- within the hierarchy, natural next step to look is quasi-permissionless
% }
%\andy{Eric comment to attend to here.}

% \eric{Two notes about language after Theorem 4.1: (1) it will work better if we have in fact explained the hierarchy in 2.7! i also slightly tweaked language above to help the reader follow that we have a No on FP and DA so can then look for Yes's in QP or P (2) It would also be helpful to have even a brief intuition for the role dynamic availability plays in the proof of Theorem 4.1. Otherwise it can feel a bit like magic. If there is intuition about this in the introduction though then disregard, we don’t need it in multiple places.}

\begin{theorem}[Impossibility Result for the
  Partially Synchronous Setting] \label{neg2} 
In the quasi-permissionless setting and the partial synchrony model,
with a $1-2\rho_l$-bounded
adversary, for every choice of liquid investment functions and valuation
function, no protocol can be $\rho_l$-resilient for liveness and EAAC.
%Consider the quasi-permissionless setting and the partial synchrony
%model. 
%Suppose a blockchain protocol is $\rho_l$-resilient for liveness. If the %adversary is not $\rho$-bounded for $\rho<1-2\rho_l$, then the protocol is not EAAC.
\end{theorem}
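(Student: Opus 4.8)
The plan is to adapt the classical three‑group partitioning argument behind the $n>3f$ bound to the EAAC setting, using the fact that partial synchrony permits unboundedly long pre‑$\mathrm{GST}$ delays. I would fix a $(1-2\rho_l)$‑bounded adversary by splitting the initial distribution $\mathtt{S}^\ast$ into three groups $X,Y,Z$ holding stake fractions $\rho_l$, $1-2\rho_l$, $\rho_l$, with $X,Z$ honest and $Y$ Byzantine. The target execution $E$ keeps all three groups active, so that $E$ is exactly $(1-2\rho_l)$‑bounded and consistent with the quasi‑permissionless setting; the goal is to force a consistency violation between $X$ and $Z$ that honest players cannot \emph{see} until long after $Y$ has cashed out, so that condition~(b) of Definition~\ref{d:eaac} can never be met.

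To manufacture the violation I would introduce two auxiliary reference worlds. In $W_X$ only $X$ and $Y$ are active (both honest), while $Z$ is designated Byzantine and simply stays inactive; since no \emph{active} player is Byzantine, $W_X$ is $0$‑bounded, hence $\rho_l$‑bounded, and $\rho_l$‑resilience for liveness forces $X$ to confirm an environment‑supplied transaction $\mathtt{tr}$. Symmetrically $W_Z$ has $Y,Z$ active and honest and $X$ Byzantine‑and‑inactive, forcing $Z$ to confirm a conflicting $\mathtt{tr}'$. Designating the absent group as Byzantine is exactly what keeps the activity requirement of the quasi‑permissionless setting satisfied while keeping these worlds $\rho_l$‑bounded. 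In $E$, the Byzantine set $Y$ runs both simulations—presenting its $W_X$‑behaviour to $X$ and its $W_Z$‑behaviour to $Z$—and I would place $\mathrm{GST}$ far in the future, delaying all $X\!\leftrightarrow\! Z$ traffic until $\mathrm{GST}+\Delta$. Then $X$'s view in $E$ coincides with $W_X$, and $Z$'s with $W_Z$, through $\mathrm{GST}$, so $X$ confirms $\mathtt{tr}$ and $Z$ confirms $\mathtt{tr}'$: a violation exists, but since neither honest group has yet heard from the other, it is first seen only at $t^\ast\approx \mathrm{GST}+\Delta$.

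The cash‑out step is where the liquid investment function does its work. Early in $E$—well before $\mathrm{GST}$, in the common prefix confirmed by both $X$ and $Z$—I would have $Y$ issue a \emph{benign} unstaking transaction zeroing out its stake, which liveness confirms at both honest groups. Applying the $\Gamma$‑liquid property in the no‑violation world $W_X$ (where $\mathrm{GST}=0$ and no violation is ever seen) yields $\mathtt{R}_i(p,t_0)=0$ for every $p\in Y$ at the recoup time $t_0$, and the ``moreover'' clause transfers this to $E$ because $X$'s view in $E$ agrees with $W_X$ through $t_0<t^\ast$. Since the unstaking is benign and sits in the common prefix, $Y$'s stake is $0$ in every honest player's confirmed list from $t_0$ onward, so nothing remains for the protocol to confiscate at any later time. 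Hence the right‑hand side of~\eqref{eq:eaac2} is $0$ for every $t_f\ge t^\ast$, and as valuations are nonnegative, condition~(b) can never hold for $E$: the protocol is not EAAC.

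The main obstacle is precisely this last claim—that $Y$'s investment stays $0$ for all $t_f\ge t^\ast$, i.e.\ that no post‑detection protocol action can claw back value that $Y$ has already irreversibly recouped. This is exactly what partial synchrony plus liquidity buy: the recoup is completed, and confirmed by the party serving as the buyer's source of truth, strictly before any honest player can detect the violation, because the pre‑$\mathrm{GST}$ delay is unbounded, whereas the liquid‑investment definition only guarantees zeroing under ``no violation seen.'' I would discharge it by routing the divestment through a benign transaction in the common prefix, so that $Y$'s zero balance survives any reconciliation of the $X$‑ and $Z$‑chains and there is simply no stake left to slash. A secondary point requiring care is the bookkeeping of Section~\ref{protdef}: one must check that $E$ is genuinely $(1-2\rho_l)$‑bounded relative to each honest player's confirmed transactions, and that the reference worlds stay $\rho_l$‑bounded and quasi‑permissionless‑consistent despite the inactive Byzantine group.
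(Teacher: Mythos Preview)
Your plan is essentially the paper's proof: three groups $X,Y,Z$ with Byzantine $Y$ of weight $1-2\rho_l$, two reference executions in which $\rho_l$-liveness forces incompatible confirmations, a main execution where $Y$ plays both sides behind pre-GST delays, and $\Gamma$-liquidity applied in the reference worlds (transferred to $E$ via the ``moreover'' clause) to zero out $Y$'s investment before any honest player can see the violation. Two small points to tighten: (i) the paper also treats external resources---$Y$ is given none so that those investment terms vanish, and the honest groups carry them in the reference worlds so that $\rho_l$-boundedness and indistinguishability survive; and (ii) there is no literal ``common prefix'' (the two honest ledgers fork from the outset), and the zeroing-out transactions come from the environment rather than being ``issued'' by $Y$---what matters is only that $Y$'s balance is zero in each honest ledger separately for a window of length $\Gamma$.
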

For example, suppose we restrict attention to the standard case of
protocols which have the same resilience for liveness and
consistency. %(i.e., $\rho_l = 1/3$). 
The seminal result of Dwork, Lynch and
Stockmeyer~\cite{DLS88} establishes that protocols for the partial
synchrony model can be $\rho$-resilient only for $\rho<1/3$. Theorem
\ref{neg2} establishes that, if the adversary can own at least 1/3 of
the resources, then the protocol cannot be EAAC. In fact, the proof
establishes a stronger result: a $\tfrac{1}{3}$-bounded adversary can
cause consistency violations without any punishment, asymmetric or
otherwise (i.e., can carry out attacks that are cheap even
%for any choice of valuation function,
%and for any protocol satisfying the conditions of the theorem, there
%exist attacks for which $\alpha_B\geq 1$ even if $\alpha_H<1$, i.e.\
%there exist attacks which are cheap even if 
when there is a collapse in the
value of resources after the consistency violation is seen).

\section{Provable slashing guarantees: a possibility result} \label{pos}

Theorem \ref{neg2} might seem to end the quest
%of the story for those
%wishing to implement
for PoS protocols with `slashing' in the partial synchrony model. To
circumvent this difficulty, we can consider protocols that are live
and consistent in partial synchrony (for some small $\Delta$) so long
as the adversary is $\rho$-bounded for $\rho<1/3$, and which,
furthermore, implement slashing when the adversary owns more than 1/3
of the resources but message delivery prior to GST always occurs
within some large known time bound $\Delta^*$ (\emph{which may not be
  $O(\Delta)$}).  We argue that this is a realistic setting: while
blockchain protocols are commonly expected to be live and consistent
in the partial synchrony model with a liveness parameter $T_l$
determined by some small value of $\Delta$ (of the order of 1 second,
say), it may also be reasonable to assume that messages will always
eventually be delivered, either via the communication network or some
``out-of-band'' mechanism, within some sufficiently large time bound
(of the order of a day or a week, say).\footnote{We note also that our
  proof of Theorem \ref{posres} %and \ref{posres2}
only requires that message delays prior to GST are
  bounded by $\Delta^*$ for a limited period
  around the time of an attack. This observation is made precise in
  Section \ref{posproof}.\label{foot:limited}}
Theorem~\ref{neg2} does not rule out non-trivial EAAC protocols in
this setting, provided that the time required for an attacker to recoup
its investment following an attack scales with the worst-case
delay~$\Delta^*$; see also the discussion at the end of
Section~\ref{neg2proof}.

\vspace{0.2cm} Formally, we say a blockchain protocol is
\emph{$(\rho,\rho^*)$-EAAC} 
with respect to investment functions $\{ \mathtt{R}_i(\cdot,\cdot)
\}_{i \in [k]}$ and~valuation function~$v$
if it satisfies the following two
conditions, given any determined values of $\Delta$ and
$\Delta^*\geq \Delta$, and where $\rho^*$ is determined:
\begin{enumerate}

\item[(i)] The protocol is live and consistent for the partial
  synchrony model with respect to $\Delta$, so long as the adversary
  is $\rho$-bounded.  Here, we allow (as in Section \ref{bcp}) 
  the liveness parameter~$T_l$ to depend on $\Delta$ and other
  determined inputs, but require it to be independent of
  $\Delta^*$. \emph{In particular,~$\Delta^*$ may be
    much larger than $T_l$}.

\item[(ii)] The protocol is EAAC 
(with respect to $\{ \mathtt{R}_i(\cdot,\cdot)  \}_{i \in [k]}$
and~$v$) so long as the adversary is
  $\rho^*$-bounded and
  message delays prior to GST are always at most $\Delta^*$ (i.e., if
  $p$ disseminates $m$ at~$t$, and if $p'\neq p$ is active at
  $t'\geq t +\Delta^*$, then $p'$ receives that dissemination at a
  timeslot $\leq t'$).\footnote{The fact that $\Delta^*$ may
be much larger than $T_l$ means that a protocol aiming to be
$(\rho,\rho^*)$-EAAC cannot ignore~$\Delta$ and
simply assume that message delays will always be bounded by
$\Delta^*$.}

\end{enumerate}
If in~(ii) the protocol is~$\alpha$-EAAC in the sense of
Section~\ref{ss:eaac}, then we say that the protocol is
$(\alpha,\rho,\rho^*)$-EAAC
(with respect to $\{ \mathtt{R}_i(\cdot,\cdot)  \}_{i \in [k]}$
and~$v$).

\vspace{.75\baselineskip}

Our main positive result is the following; we provide the proof 
in Section~\ref{posproof}.
Canonical PoS investment and valuation functions are defined
informally in Section~\ref{ss:eaac}, and formally in
Section~\ref{posproof}.  

% \vspace{0.2cm} 
% \noindent \textbf{Defining dynamic adversaries for PoS protocols}. [To be revised.] Formally, a protocol is called a proof-of-stake (PoS) protocol if it does not make use of external resources. To define a $\rho$-bounded \emph{dynamic} adversary, we suppose that the adversary can corrupt 
% arbitrary honest and active players as the execution progresses, so long as the following holds for each $\mathtt{S}\in \mathcal{S}$: $\sum_{p\in B} x(p) \leq \rho$, where $x(p)$ is the stake owned by $p$ at the timeslot $t$ when it is corrupted, i.e. $x(p)=\mathtt{S}(\mathtt{S}^*,\mathtt{Tr},p)$ for $\mathtt{Tr}$ which is the set of transactions confirmed for $p$ at $t$, and where $B$ is the set of corrupted players. When an honest player is corrupted it becomes Byzantine and its state diagram may be replaced with an arbitrary state diagram. We also allow for perfect coordination between Byzantine players, as in Section \ref{ss:st}. 

\begin{theorem}[Non-Trivial EAAC Protocols in the
  Quasi-Permissionless Setting]\label{posres} 
For every $\rho<1/3$ and $\rho^*<2/3$, there exists a PoS
  protocol for the quasi-permissionless setting that is
  $(\alpha,\rho,\rho^*)$-EAAC with respect to a canonical PoS
  investment function and a canonical PoS valuation, where
\[
\alpha = \max\{0, (\rho^*-\tfrac{1}{3})/\rho^* \}.
\]
\end{theorem}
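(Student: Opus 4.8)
The plan is to exhibit and analyze a concrete proof-of-stake protocol, call it PosT, obtained by lifting Tendermint~\cite{buchman2016tendermint,BKM18} to the quasi-permissionless setting and augmenting it with a third voting phase together with a post-violation recovery subprotocol. Validators are the identifiers currently holding stake-in-escrow according to the canonical PoS stake allocation function; time is divided into epochs, and at each epoch boundary the validator set and per-identifier voting weights are refreshed from the confirmed transactions, so that at every moment the relevant ``quorum threshold'' is more than $2/3$ of the stake held by the current validators. Blocks are confirmed through three successive rounds of stake-weighted votes (rather than Tendermint's two), each round producing a quorum certificate (QC); a validator commits a block once it sees a third-round QC, and the usual lock/unlock discipline (a validator votes to overwrite a locked value only when shown a QC from a strictly later view) governs view changes. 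I would then verify the two defining conditions of $(\alpha,\rho,\rho^*)$-EAAC separately.

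For condition~(i) I would run the standard PBFT-style safety-and-liveness argument, adapted to stake weights and epochs. Safety for $\rho<1/3$ follows because two conflicting commit-QCs would have to intersect in more than $1/3$ of the current validators' stake, all of whom must have equivocated, contradicting $\rho$-boundedness; liveness in partial synchrony follows from rotating leaders with geometrically increasing timeouts, so that after GST some honest leader drives a decision within $T_l=O(\Delta)$, crucially with $T_l$ independent of $\Delta^*$. The genuinely non-routine part here is the quasi-permissionless PoS bookkeeping: showing that epoch transitions keep the active honest validators in agreement about the current validator set despite sybils and stake transfers, so that the permissioned safety and liveness guarantees transfer over. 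I would handle this by having epoch boundaries themselves be confirmed by QCs and by invoking the assumption that every honest staked player is active (quasi-permissionlessness).

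For condition~(ii) I would address in turn the three challenges listed on page~\pageref{page:challenges}, treating the network as synchronous with bound $\Delta^*$. The certificate of guilt (challenge~1) is immediate from the intersection argument above: any consistency violation produces two conflicting committed blocks, each carrying a QC with more than $2/3$ of some epoch's stake, whose intersection of more than $1/3$ of that stake consists entirely of equivocating, hence Byzantine, identifiers; since honest identifiers never double-vote, no honest identifier is ever implicated, which already secures $\alpha_H\ge 1$ and hence part~(a), as the canonical valuation discounts only provably slashed stake. For prompt dissemination (challenge~2) I would prove that the third voting round guarantees that, within one $\Delta^*$ of the violation first being seen at $t^*$, every active honest player holds both offending QCs; this is exactly where two phases fail and three suffice, and establishing it is the crux of the whole argument. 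Finally, for recovery (challenge~3), once all honest players hold the certificate of guilt I would have them run a Dolev-Strong-style synchronous agreement~\cite{dolev1983authenticated}, which is correct with bound $\Delta^*$ regardless of the Byzantine fraction, to agree on an updated genesis block in which the implicated stake is frozen and slashed; the unstaking cooldown $\Gamma$ of the canonical PoS investment function is chosen larger than the $O(\Delta^*)$ time needed for detection plus this agreement, so that no Byzantine unstaking transaction can take effect before slashing completes.

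It then remains to read off $\alpha$. At the recovery timeslot $t_f$ the Byzantine identifiers retain their full investment (their unstaking having been blocked) while their canonical valuation counts only unslashed stake-in-escrow; since at least $1/3$ of the relevant total stake has been slashed and the adversary controlled at most a $\rho^*$ fraction of it, the surviving fraction is at most $(\rho^*-\tfrac{1}{3})/\rho^*$, giving $\alpha_B\le\max\{0,(\rho^*-\tfrac{1}{3})/\rho^*\}=\alpha$ as required (and $\alpha=0$ when $\rho^*\le 1/3$, where no violation is possible and condition~(b) is vacuous). The main obstacle I anticipate is the prompt-dissemination step: the delicate part is formalizing why the extra voting round forces the guilt-revealing certificates into the hands of \emph{all} honest players within $O(\Delta^*)$, even though the adversary, controlling up to $\rho^*<2/3$ of the stake, can selectively withhold messages and equivocate. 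A secondary obstacle is making the epoch and validator-set machinery in condition~(i) airtight in the presence of sybils.
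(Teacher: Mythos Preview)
Your plan is correct and matches the paper's proof closely: the PosT protocol, three voting phases, epochs with delayed unstaking, certificates of guilt, and Dolev-Strong recovery are exactly the ingredients used. For the crux you flag, the certificate of guilt is not the pair of conflicting commit (stage-3) QCs---those could be withheld by the adversary and revealed arbitrarily late---but rather a stage-2 QC for $b$ paired with a stage-1 QC for the earliest incompatible $b'$, each of which some honest validator must have seen before leaving the epoch (since $\rho^*<2/3$ forces every QC to contain an honest vote); this gives a $2\Delta^*$ (not $\Delta^*$) dissemination bound that the epoch length is set to exceed, and recovery runs \emph{repeated} Dolev-Strong instances with rotating leaders rather than a single instance, terminating once an honest leader's admissible genesis proposal is signed by the unslashed honest majority.
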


This result is optimal in several senses.
First, an easy adaptation of the classic proof of Dwork, Lynch and
Stockmeyer~\cite{DLS88} shows that protocols cannot be
$(\rho,\rho^*)$-EAAC for $\rho\geq 1/3$. 
Second, the bound of 2/3 on $\rho^*$  is tight: an argument of Tas et
al.~\cite{tas2023bitcoin}, when translated to our framework, implies
that,
% have shown (without using our terminology) that, 
even in the synchronous setting, a protocol that is $\rho_l$-resilient
for liveness cannot be EAAC with a $(1-\rho_l)$-bounded adversary. The 2/3 bound in the statement of the theorem can therefore only be improved by weakening liveness requirements (as already mentioned in Section \ref{ss:overview}, we will describe how our protocol can be so modified in Section \ref{posproof}). 
Third, the bound on~$\alpha$ cannot be improved: an adversary with
a~$\rho^*$ fraction of the overall resources can cause a consistency
violation using only one-third of the overall resources in a dishonest
way~\cite{DLS88}; because honest players cannot be punished, the adversary can
guarantee that it retains a $(\rho^*-\tfrac{1}{3})$ fraction of
the overall original resources following its attack.

\section{Overview of the PosT protocol and the proof of Theorem \ref{posres}}\label{six}
% and an  extension of the result} \label{six} 

We describe a PoS version of Tendermint, which we refer to as PosT. A
quick review of permissioned Tendermint is given in Section
\ref{posproof}.  To specify PosT, we envisage
that players can add and remove stake from a special staking contract,
allowing them to act as `validators'. While added to the staking
contract, stake is regarded as being `in escrow' and cannot be
transferred between players. Removal of stake from escrow is subject
to a delay (of time more than $\Delta^*$).  The stake
allocation function $\mathtt{S}$ indicates the stake that
an identifier has in escrow and which has not been marked for removal
from escrow. This is the balance that describes an identifier's
\emph{weight} as a validator: we'll refer to an identifier's
$\mathtt{S}$-balance as their \emph{validating stake}.
Under the assumptions of the quasi-permissionless setting (see
Section~\ref{hierarchy}), all honest players that possess a non-zero
amount of validating stake at some timeslot are active at that
timeslot.

% \eric{If space permits: somewhere in Section 6 I would like to see a more user friendly description of how the protocol works under honest play. The key ideas to make sure we emphasize are (i) stake has to be locked up for at least $2\Delta^*$ of time (or there may be a more accurate way to say this) and (ii) how the protocol achieves a liveness guarantee under honest play that is related to $\Delta$ as opposed to $\Delta^*$
% Alternatively, this could go in Section 10 somewhere. Or possibly a bit of both (brief in 6, longer in 10)} \andy{Part of this comment was hidden because (apparently) comments don't allow paragraphs. I'm pretty sure there won't be extra room in the main text, so will add something in Section \ref{posproof}.}

% While the total amount of stake in escrow can vary, we suppose that, for some fixed $\kappa>0$, $N_{\mathtt{S}_2}$ is always at least  $\kappa N_{\mathtt{S}_1}$. The value of $\kappa>0$ is not significant for the proof of Theorem \ref{posres}, but will become important in proving Theorem \ref{posres2}. 

  \vspace{0.2cm}
  \noindent \textbf{The use of epochs}. Recall that the instructions for Tendermint are divided into \emph{views}. In each view, a designated \emph{leader}  proposes the next block of transactions, and other validators may vote on the proposal to form (one or two) \emph{quorum certificates} (QCs) for the block. The instructions for PosT are divided into \emph{epochs}, where each epoch is a sequence of consecutive views.  Removal of stake from escrow is achieved via confirmation of an appropriate transaction. If this transaction is confirmed by the start of epoch $e$, then the corresponding stake remains in escrow  until the end of the epoch, but no longer contributes to the weight of the corresponding identifier as a validator (their $\mathtt{S}$-balance). It is only at the end of each epoch that the $\mathtt{S}$-balance of an identifier can change, meaning that the set of validators is fixed during each epoch. 
  
\vspace{0.2cm}
\noindent \textbf{Certificates of guilt}. Recall that $\rho^*<2/3$ is
determined and that our protocol is required to be EAAC only so long
as the adversary is $\rho^*$-bounded.  We use standard methods (e.g.,
see \cite{sheng2021bft}) to ensure that any consistency violation
causes the production of a \emph{certificate of guilt}, which is a set
of signed messages proving Byzantine action on the part of some of the
validating stake during a certain epoch. If a consistency violation
occurs in some least epoch $e$, then the fact that $\rho^*<2/3$
becomes crucial. This means that honest stake must participate in the
confirmation of each block while in epoch $e$. If message delay prior
to GST is always at most $\Delta^*$, then honest validators will end
epoch $e$ within time $\Delta^*$ of each other and, by modifying
Tendermint to utilize three stages of voting rather than
two,\footnote{HotStuff \cite{yin2019hotstuff} also makes use of three
  stages of voting in each view but, as explained in Section
  \ref{furtherc}, the motivation in that case is rather different.} we
will be able to argue that a certificate of guilt must be received by
all active honest players within time $2\Delta^*$ of any honest player
entering epoch $e+1$. Defining epochs to be of sufficient length
therefore ensures that any consistency violation in epoch $e$ will
produce a certificate of guilt which is seen by all active honest
players before the end of epoch $e+1$, and before the stake used to
create the consistency violation is removed from escrow.

\vspace{0.2cm}
\noindent \textbf{The recovery procedure}.  
%\noindent \textbf{Iterations}.  
The remaining challenge is to design a recovery procedure for honest
players to use, after receiving a
certificate of guilt, to reach consensus on a new state in which
slashing has been carried out.
%We describe a protocol that is live
%and consistent for $\rho$-bounded adversaries when $\rho<1/3$. In the
%case that the adversary is only $\rho$-bounded for
%$\rho\in [1/3,\rho^*)$, and if $\rho^*\geq 1/3$, 
In more detail, suppose a consistency
violation occurs in some least epoch $e$, with
%. In this case we will be
%able to argue, as described above, that 
all active honest players
receiving a certificate of guilt before the end of epoch $e+1$. 
%Upon
%receiving a certificate of guilt, active validators trigger a special
%recovery procedure. This
The goal of the recovery procedure is for all honest
validators (for epoch $e$) to output some common block $b$---an
updated genesis block, in effect---which
\emph{includes all transactions confirmed by the end of epoch $e-1$},
and which `slashes' (renders unspendable) at least 1/3 of the
validating stake for epoch~$e$ (with certificates of guilt provided
for all slashed stake). 

One potential difficulty in implementing such a recovery procedure is
that a consistency violation is only guaranteed to produce a
certificate of guilt for 1/3 of the validating stake. If the adversary
possesses $\geq 5/9$ of the validating stake, then they may possess at
least a fraction $(5/9-1/3)/(2/3)=1/3$ of the validating stake that
remains after slashing.  If one were to naively employ some
Tendermint-like protocol to reach consensus on an updated genesis block
(including slashing conditions) for the next iteration of the
protocol, such an adversary could threaten liveness.

The simplest approach to address this issue, and the one we follow
here given our focus on fundamental possibility and impossibility
results (as opposed to more fine-grained performance considerations),
is to base the recovery procedure instead on
%To deal with this, we instead base our
%the approach we take in this paper (where we are only interested in
%establishing  ``proof-of-concept'') is to describe a recovery
%procedure that utilizes repeated instances of 
the classic protocol of Dolev and
Strong~\cite{dolev1983authenticated}, with delay $\Delta^*$ between
each round of the protocol. In a first instance of the Dolev-Strong
protocol, a designated \emph{leader} is asked to propose an updated
genesis block. If this instance results in consensus amongst honest
players on such a block~$b$, then the honest majority of validating
stake that remains after slashing (since $\rho^*<2/3$) can ``sign
off'' on this value, producing a form of \emph{certificate} that acts
as proof that $b$ can be used as an updated genesis block.
% (i.e.\ one
%could then restart the protocol using this new genesis block). 
If not, a
second leader for another instance of the Dolev-Strong protocol is
then asked to propose an updated genesis block, and so on. Whenever an
honest leader is chosen (if not before), the corresponding instance of
the Dolev-Strong protocol will conclude with consensus on an updated
genesis block, from which the main protocol can then
resume.\footnote{For simplicity, our protocol description and analysis
  conclude with the successful agreement by honest players on a
  post-slashing state following a consistency violation. This 
  suffices to establish the EAAC property and prove
  Theorem~\ref{posres}. More generally, the protocol could be run
  repeatedly, triggering the recovery procedure as needed to punish
a consistency violation and produce a new genesis blocks for the next
instance of the protocol.}

The full details of the PosT protocol and the proof of
Theorem~\ref{posres} are provided in Section~\ref{posproof}.

% The PosT protocol is formalized in Section \ref{posproof}, where we also give a proof of Theorem \ref{posres2} below and formalize all terms in the statement of the theorem. While the common perception may be that the implementation of slashing conditions is well-understood, we note that the proof of Theorem \ref{posres2} requires several technical innovations. 

%\begin{theorem} \label{posres2} Suppose message delay prior to GST is bounded by $\Delta^*$ and, for $\kappa>0$, restrict attention to executions of PosT that are $\kappa$-staking. Suppose $\rho^*<2/3$ is determined and that the adversary is $\rho^*$-bounded. For $x\in [0,1]$, let $f(x):=x+\kappa(1-x)/3$ and note that there exists a least $i$, $g(\rho^*)$ say, such that $\rho^*< \kappa/3 +f^i(0)(1-\kappa/3)$. Then the adversary can cause a consistency violation in at most the first $i(\rho^*)$ iterations of the protocol. In the limit as $\kappa \rightarrow 1$:
%\begin{itemize}
%    \item if $\rho^*<2/3$, the adversary can cause consistency violations in only the first two iterations. 
%    \item if $\rho^*<5/9$, the adversary can cause a consistency violation in only the first iteration. \end{itemize}
%\end{theorem}

\section{Impossibility result for the dynamically available setting: The proof of Theorem \ref{neg}} \label{negproof}

It suffices to prove the result for the synchronous model.  
(The following proof will be valid even if~$\Delta=1$.)
Suppose
the protocol $(\Sigma,\mathcal{O},\mathcal{C},\mathcal{S})$ is live
(with liveness parameter~$T_l$)
and consistent in the dynamically available setting.
 % Suppose prices are unchanged (as defined in Section \ref{coa}) and consider any valuation function (satisfying the conditions of Section \ref{coa}) for which no execution sees $\alpha_H<1$. \andy{adjust}

\vspace{0.2cm} 
\noindent \textbf{The intuition}. Consider two disjoint sets of
players $X$ and $Y$ that each own an equal amount of
resources. Consider first an execution of the protocol in which
players in $X$ are Byzantine, while players in $Y$ are honest.
Players in $X$ do not initially communicate with players in $Y$, but
rather \emph{simulate} between them an execution in which they are the
only ones active. Because the protocol is live in the dynamically
available setting, this simulated execution must produce a non-empty
sequence $\mathtt{T}$ of confirmed transactions (even without
contribution from the players in $Y$): Note that this conclusion would
not hold if operating in the quasi-permissionless or permissioned
settings, because the protocol might then require active participation
from players owning a majority of resources to confirm transactions.
Meanwhile, and for the same reason, the honest players in $Y$ must
eventually confirm a sequence of transaction $\mathtt{T}'$ that may be
incompatible with $\mathtt{T}$.

Now suppose that, at some later point, players in $X$ disseminate all
the messages that they would have disseminated if honest in their
simulated execution. At this point, it is not possible for late
arriving players to determine whether the players in $X$ or the
players in $Y$ are honest. If $\alpha_B<x$, then $\alpha_H<x$ in a
symmetrical execution, in which it is the players in $X$ who are
honest, while the players in $Y$ are Byzantine and run their own
simulated execution. (The notation~$\alpha_H$ and~$\alpha_B$ is
defined in the discussion following Definition~\ref{d:eaac}.)
%\eric{I would like the 'The intuition' paragraph to be more clear about the way the proof attack strategy makes use of the Dynamic Availability assumption. The proof of Theorem 4.2 that comes next does a nice job of making intuitive for the reader where the partial synch comes into play in the proof structure, I am asking for something along similar lines for Theorem 4.1}

\vspace{0.2cm}
\noindent \textbf{The formal proof}.  We consider two non-empty sets
of players, $X$ and $Y$ of equal size, and four protocol
executions in which all parameters remain the same unless explicitly
stated otherwise. For the sake of simplicity, we suppose all active players
in $X\cup Y$ always hold a single unit of each external resource (if any),
%(and so $\mathtt{R}_i(p,t)=1$ for all $i \in [1,k]$)
and all players in~$X \cup Y$ begin with one unit of each form of
stake (if any).
%while active in all
%executions considered. 
If $\mathcal{S}$ is non-empty, then we suppose
that, for each $\mathtt{S}\in \mathcal{S}$, the transactions
$\mathtt{tr}_1$ and~$\mathtt{tr}_2$ mentioned below do not affect the
stake controlled by players in $X \cup Y$ (e.g., they are simple
transfers between players outside of $X \cup Y$).
%transfer stake
%between players outside $X \cup Y$. 
If $\mathcal{S}$ is empty, then
the form of these transactions is of no significance.

\vspace{0.1cm} 
\noindent \textbf{Execution 1}.  The only active players are those in
$X$, who are active at all timeslots. The environment sends a single
transaction $\mathtt{tr}_1$ to one of the players, $p_1$ say, at
timeslot 1, and does not send any further transactions.  All players
are honest. By liveness, $\mathtt{tr}_1$ is confirmed for all active
players by timeslot $1+T_l$.

\vspace{0.1cm}
\noindent \textbf{Execution 2}.  The only active players are those in
$Y$, who are active at all timeslots. The environment sends a single
transaction $\mathtt{tr}_2$ (with $\mathtt{tr}_2\neq \mathtt{tr}_1$)
to one of the players, $p_2$ say, at timeslot 1, and does not send any
further transactions. All players are honest. By liveness,
$\mathtt{tr}_2$ is confirmed for all active players by timeslot
$1+T_l$.

\vspace{0.1cm} 
\noindent \textbf{Execution 3}. The active players are $X\cup Y$, and
those players are active at all timeslots.  Choose $t^*>1+T_l$. The
environment sends $\mathtt{tr}_1$ to $p_1$ and $\mathtt{tr}_2$ to
$p_2$ at timeslot 1, and does not send any further
transactions. Players in $X$ are Byzantine, ignore messages from
players in $Y$ until~$t^*$, and simulate the players in Execution 1
precisely (carrying out instructions as if they receive precisely the
same messages at the same timeslots), except that they delay the
dissemination of all messages until $t^*-1$.  At $t^*-1$, players in
$X$ disseminate all messages that the players in Execution 1
disseminate at timeslots $<t^*$, and these messages are received by
all active players by timeslot $t^*$. At timeslots $\geq t^*$ players
in $X$ act precisely like honest players, except that they pretend all
messages received from players in $Y$ at timeslots $<t^*$ were
received at $t^*$.  Players in $Y$ are honest.

\vspace{0.1cm} 
\noindent \textbf{Execution 4}. This is the same as Execution 3, but
with the roles of $X$ and $Y$ reversed. The set of active players is
$X\cup Y$, and those players are active at all timeslots.  Timeslot
$t^*$ is defined as before. Again, the environment sends
$\mathtt{tr}_1$ to $p_1$ and $\mathtt{tr}_2$ to $p_2$ at timeslot 1,
and does not send any further transactions. Now players in $Y$ are
Byzantine, ignore messages from players in $X$ until~$t^*$, and
simulate the players in Execution 2 precisely (carrying out
instructions as if they receive precisely the same messages at the
same timeslots), except that they delay the dissemination of all
messages until $t^*-1$. At $t^*-1$, players in $Y$ disseminate all
messages that the players in Execution~2 disseminate at timeslots
$<t^*$. At timeslots $\geq t^*$ players in $Y$ act precisely like
honest players, except that they pretend all messages received from
players in $X$ at timeslots $<t^*$ were received at $t^*$.  Players in
$Y$ are honest.

\vspace{0.2cm} 
\noindent \textbf{Analysis}. We first prove that at least one of
Executions 3 and 4 must see a consistency violation.  To see this,
suppose towards a contradiction that Execution 3 does not. Note that,
until~$t^*$, Execution~3 is indistinguishable from Execution 2 as far
as the honest players in $Y$ are concerned, i.e. they receive
precisely the same inputs, messages and oracle responses at each
timeslot $<t^*$. It must therefore be the case that all players in $Y$
regard $\mathtt{tr}_2$ as confirmed and $\mathtt{tr}_1$ as not
confirmed (because it has not yet been received by those players) by
timeslot $t^*-1$. Let $M^{\ast}$ be the set of messages received by
all honest players by timeslot $t^*$. If there is no consistency
violation then it must be the case that $\mathcal{C}(M^{\ast})$ is a
sequence in which $\mathtt{tr}_1$ does not precede $\mathtt{tr}_2$.

In this case, consider Execution 4. Note that, until $t^*$, Execution
4 is indistinguishable from Execution 1 as far as the honest players
in $X$ are concerned, i.e. they receive precisely the same inputs,
messages and oracle responses at each timeslot $<t^*$. It must
therefore be the case that all players in $X$ regard $\mathtt{tr}_1$
as confirmed and $\mathtt{tr}_2$ as not confirmed by timeslot
$t^*-1$. Note next, however, that the set of messages received by all
honest players by timeslot $t^*$ is precisely the same set $M^{\ast}$
in Executions 3 and 4. We concluded above that $\mathtt{tr}_1$ does
not precede $\mathtt{tr}_2$ in $\mathcal{C}(M^{\ast})$, meaning that
Execution 4 sees a consistency violation at $t^*$.

%\vspace{0.2cm} 
To complete the proof, without loss of generality, suppose Execution 3
sees a consistency violation; the case that Execution 4 sees a
consistency violation is symmetric. Towards a contradiction, suppose the
protocol is EAAC with respect to the 
investment functions~$\{
\mathtt{R}_i(\cdot,\cdot)
  \}_{i \in [k]}$ and
valuation function~$v$. In that
case, we must have $ \alpha_B<1$ at some timeslot $t_f > t^*$ in
Execution 3 (and~$\alpha_H \ge 1$ for all~$t$).  In that case,
however, consider Execution~4.  
Because the set of messages received by each player by time~$t_f$ is
the same in both executions, $R_i(p,t_f)$ is also the same for
every~$i \in [k]$ and~$p \in X \cup Y$. It follows that the
valuation~$v$ is also the same (for both $X$ and $Y$) at time~$t_f$ in
the two executions, and hence
%We have assumed that $v$ evaluated at
%$t_f$ is a function of the set of players, the timeslot~$t_f$, the
%investment functions at $t_f$ (the $\mathtt{R}_i(p,t_f)$'s), the fixed values
%$C_i$, and the sets of messages that have been received by each player
%by that time. Because $\mathtt{R}_i(p,t_f)=1$ for all $p\in X \cup Y$
%and $i\in [1,k]$, and because the set of messages received by all
%players by timeslot $t_f$ is the same in Executions 3 and 4, it
%follows that 
$\alpha_H<1$ when evaluated at $t=t_f$ in Execution 4.
This gives the required contradiction (violating (1) from Definition
\ref{d:eaac}).

%\vspace{0.2cm} 
%\noindent \textbf{Comment}. If defined th
%The same argument would go through
%essentially unchanged if changed the ratio of 
%Suppose $x<1$. If we were to define an
%attack to be expensive if $\alpha_B<x$, and say that an attack causes
%collapse if $\alpha_H<x$, meaning that we require $ \alpha_B<x$ and
%$\alpha_H\geq x$ in Execution 3, then the argument above would go
%through essentially unchanged.
%% \eric{this comment about $x<1$ is great. i'd like to also have a forward %pointer to this in the text (can be a footnote or short in main text). It may %already be in there in which case we can cut this reminder}

\section{Impossibility result for the partially synchronous setting: The proof of Theorem~\ref{neg2}} \label{neg2proof}

Consider the quasi-permissionless setting and the partial synchrony
model and suppose the blockchain protocol
$(\Sigma,\mathcal{O},\mathcal{C},\mathcal{S})$ is $\rho_l$-resilient
for liveness.  We consider three non-empty and pairwise disjoint sets
of players, $X$, $Y$ and $Z$, such that:
\begin{itemize} 
\item $\mathcal{P}=X \cup Y \cup Z$,  $|\mathcal{P}|=n$ (say);
\item $|X|=|Z|=\lfloor n \rho_l \rfloor$ and;
\item $|Y|=n-2\lfloor n \rho_l \rfloor$. 
\end{itemize} 

\vspace{0.2cm} 
\noindent \textbf{The intuition}. For the sake of simplicity, consider
a pure proof-of-stake protocol, meaning a protocol that does not make
use of external resources (although the formal proof below treats the
general case). Suppose all players begin with a single unit of each
form of stake and are active at all timeslots. The players in $X$ and
$Z$ are honest, while the players in $Y$ are Byzantine.

Because we are in the partial synchrony model, we may suppose that
messages disseminated by players in $X$ are received by players in
$X \cup Y$ at the next timeslot, but are not received by players
in~$Z$ until after GST. Symmetrically, we may suppose that messages
disseminated by players in $Z$ are received by players in $Z \cup Y$
at the next timeslot, but are not received by players in $X$ until
after GST. Suppose that the players in $Y$ initially act towards those
in $X$ as if GST$=0$ but the players in~$Z$ are Byzantine and are not
disseminating messages. Because the players in $Z$ own at most a
$\rho_l$ fraction of the stake and the protocol is $\rho_l$-resilient
for liveness, the players in $X$ must eventually confirm a sequence of
transactions, which may include transactions transferring all stake
from players in $Y$. If the players in $Y$ simultaneously act towards
those in $Z$ as if GST$=0$ but the players in $X$ are Byzantine and
are not disseminating messages, then the players in $Z$ must
eventually confirm a sequence of transactions, which may include
transactions transferring all stake from players in $Y$. When GST
arrives, this means that the honest players see a consistency
violation, but by this time the players in $Y$ have already
cashed out all of their stake.

\vspace{0.2cm}
\noindent \textbf{The formal proof}.  
Fix arbitrary $\Gamma$-liquid investment functions and an arbitrary
valuation function.
We consider three protocol
executions in which all parameters remain the same unless explicitly
stated otherwise.  The player set is always $X \cup Y \cup Z$ (as
described above), and
each player uses a single identifier (which we identify with the
player).  All players begin with a single unit of each form of stake
(if there exist any such) and are always active.  Let
$\mathtt{tr}_1,\mathtt{tr}_2$ denote distinct transactions that are
benign in the sense of Section~\ref{stake} and preserve the total
amount of resources controlled by the players in each of $X$, $Y$, and
$Z$ (e.g., a simple transfer between two players of~$X$).
For~$i=1,2$, let~$\mathtt{T}_i$ denote a set of transactions such
that, no matter how they ordered,
$\mathtt{S}_h(\mathtt{S}^*_h,\mathtt{tr}_i \ast \mathtt{T}_i,y)=0$ for
all stake allocation functions~$\mathtt{S}_h \in \mathcal{S}$ and
players~$y \in Y$. (The sets $\mathtt{T}_1$ and $\mathtt{T}_2$ exist
according to the assumptions in Section~\ref{stake}.)  For the sake of
simplicity, we suppose that $\Delta=1$, but the proof is easily
adapted to deal with larger values for $\Delta$.

% If $\mathcal{S}$ is
% non-empty, then we suppose that, for each $\mathtt{S}\in \mathcal{S}$,
% the transaction $\mathtt{tr}_1$ mentioned below transfers stake
% between players in $X$, while the transaction $\mathtt{tr}_2$
% transfers stake between players in $Z$. If $\mathcal{S}$ is empty,
% then the form of these transactions is of no significance (except that
% we require $\mathtt{tr}_1\neq \mathtt{tr}_2)$.

\vspace{0.1cm} 
\noindent \textbf{Execution 1}. GST$=0$. Players in $X$ and $Y$ are
honest. Players in $Z$ are Byzantine and do not disseminate messages.
Players in $X$ have a single unit of each form of external resource at
each timeslot, while players in $Y$ and $Z$ do not own external
resources.  The environment sends a single transaction $\mathtt{tr}_1$
to a player $p_1\in X$ at timeslot 1.  At timeslot $2T_l$, the
environment sends the transactions in $\mathtt{T}_1$ to $p_1$.
%, which transfer all stake
%(which is none if $\mathcal{S}= \emptyset$) owned by players in $Y$ to
%players in $X$, and then sends no further transactions.
%\eric{same question as for Theorem 4.1 proof: is the use of 'stake' essential to the proof? can we keep it more abstract?}

\vspace{0.1cm} 
\noindent \textbf{Execution 2}.  This is similar to Execution 1, but
with the roles of $X$ and $Z$ reversed. GST$=0$. Players in $Y$ and
$Z$ and honest. Players in $X$ are Byzantine and do not disseminate
messages. 
%All players begin with a single unit of each form of stake
%(if there exist any such). 
Players in $Z$ have a single unit of each
form of external resource at each timeslot, while players in $X$ and
$Y$ do not own external resources.  The environment sends a single
transaction $\mathtt{tr}_2$ to a player $p_2\in Z$ at timeslot 1.  At
timeslot $2T_l$, the environment sends the transactions in
$\mathtt{T}_2$ to $p_2$.
%, which 
%transfer all stake owned by players in $Y$ to players in $Z$, and then
%sends no further transactions.

\vspace{0.1cm} 
\noindent \textbf{Execution 3}. The execution is specified as follows: 
\begin{itemize} 

\item[-] Players in $X \cup Z$ are honest, while players in $Y$ are
  Byzantine.

\item[-] GST=$3T_l+\Gamma$. Any message disseminated by a player in $X$ at
  any timeslot $t$ is received by players in $X \cup Y$ at the next
  timeslot, but is not received by players in $Z$ until
  $\text{max} \{ \text{GST}, t+1 \}$. Any message disseminated by a
  player in $Z$ at any timeslot $t$ is received by players in
  $Y \cup Z$ at the next timeslot, but is not received by players in
  $X$ until $\text{max} \{ \text{GST}, t+1 \}$.

\item[-] 
%All players begin with a single unit of each form of  stake. 
Players in $X \cup Z$ have a single unit of each form of
  external resource at each timeslot, while players in $Y$ do not own
  external resources.

\item[-] The environment sends $\mathtt{tr}_1$ to $p_1$ and
  $\mathtt{tr}_2$ to $p_2$ at timeslot 1. At timeslot $2T_l$, the
  environment sends the transactions in $\mathtt{T}_1$ to $p_1$
%, which transfer all stake
%  owned by players in $Y$ to players in $X$, and sends 
and the transactions in $\mathtt{T}_2$ to
  $p_2$.
%, which transfer all stake owned by players in $Y$ to players
%  in $Z$. 
The environment then sends no further transactions.

\item[-] Each Byzantine player simulates two honest players. The first
  of these players acts honestly, except that they pretend messages
  sent by players in $Z$ prior to GST do not arrive until GST.  A
  message disseminated by this player at any timeslot $t$ is received
  by players in $X\cup Y$ at the next timeslot, but is not received by
  players in $Z$ until $\text{max} \{ \text{GST}, t+1 \}$. The second
  of these simulated players acts honestly, except that they pretend
  messages sent by players in $X$ prior to GST do not arrive until
  GST.  A message disseminated by this player at any timeslot $t$ is
  received by players in $Y\cup Z$ at the next timeslot, but is not
  received by players in $X$ until $\text{max} \{ \text{GST}, t+1 \}$.

\end{itemize} 

\vspace{0.2cm}
\noindent \textbf{Analysis}. Note that, in Execution 1, the adversary
is $\rho_l$-bounded. Since GST$=0$, $\mathtt{tr}_1$ must be confirmed
for all honest players by timeslot $1+T_l$ (while $\mathtt{tr}_2$ is
not, because this transaction is not received by any
player). Similarly, the transactions sent to $p_1$ at $2T_l$ must be
confirmed by $3T_l$; by the definition of~$\mathtt{T}_1$, 
%meaning that 
players in $Y$ own no stake (of any form) at timeslots $\geq 3T_l$.
Because the investment functions corresponding to
$\mathtt{S}_1,\ldots,\mathtt{S}_j$ are assumed to be $\Gamma$-liquid,
$\mathtt{R}_h(p,3T_l+\Gamma) = 0$ for all such investment functions and
all $p \in Y$.

%\vspace{0.2cm} 
In Execution 2, the adversary is also $\rho_l$-bounded. Since GST$=0$,
$\mathtt{tr}_2$ must be confirmed for all honest players by timeslot
$1+T_l$ (while $\mathtt{tr}_1$ is not, because this transaction is not
received by any player). Similarly, the transactions sent to $p_2$ at
$2T_l$ must be confirmed by $3T_l$, meaning that players in $Y$ 
own no stake (of any form) at timeslots $\geq 3T_l$.
Because the investment functions corresponding to
$\mathtt{S}_1,\ldots,\mathtt{S}_j$ are assumed to be $\Gamma$-liquid,
$\mathtt{R}_h(p,3T_l+\Gamma) = 0$ for all such investment functions and
all $p \in Y$.

%\vspace{0.2cm} 
To complete the argument, note that, for players in $X$, Execution 3
is indistinguishable from Execution 1 at all timeslots $<$GST, i.e.\
those players receive precisely the same inputs, messages and oracle
responses at all timeslots $<$GST. It follows that $\mathtt{tr}_1$ is
confirmed for all players in $X$ by timeslot $1+T_l$, and that the
transactions of $\mathtt{T}_1$
%received by $p_1$ at timeslot $2T_l$ 
are confirmed for
all players in $X$ by timeslot $3T_l$. Similarly, for players in $Z$,
Execution 3 is indistinguishable from Execution 2 at all timeslots
$<$GST.  It follows that $\mathtt{tr}_2$ is confirmed for all players
in $Z$ by timeslot $1+T_l$, and that the transactions of $\mathtt{T}_2$
%received by $p_2$ at timeslot~$2T_l$
are confirmed for all players in $Z$ by
timeslot~$3T_l$. 
It further follows that, in Execution~3, $\mathtt{R}_h(p,3T_l+\Gamma)
= 0$ for every investment function~$\mathtt{R}_h$ and $p \in
Y$.\footnote{The definition of a $\Gamma$-liquid investment
  function allows a
  player to cash out after {\em some} honest player sees a zero
  balance on-chain for at least $\Gamma$ consecutive time steps; 
  here, in fact, {\em all} honest players witness this.}

A consistency violation is first seen by honest players at GST. For all
choices of $t_f\geq$GST, the Byzantine players have
already cashed out their stakes (and never possessed any external
resources):
\[ 
\sum_{p \in Y} \sum_{i=1}^k \mathtt{R}_i(p,t_f) \cdot C_i=0
\]
and hence $\alpha_B=1$. The protocol therefore fails to be EAAC.
%\eric{I don't understand this last step of the proof that finds that $V_B=0$ and hence $\alpha_B=1$. I get the argument, I think, if the only resource is stake and the attacker has sent off their stake to others before $t_f$. But what if there are other resources like ASICs? Have we assumed somehow that the attacker has gotten rid of those too? Or are they somehow precluded by the quasi-permissionless setting, in which case I need to understand why? }

\vspace{0.2cm}
\noindent \textbf{Interpretation for the synchronous model}.  As
alluded to in Section~\ref{pos}, the proof of Theorem~\ref{neg2}
continues to hold in the synchronous model if $3T_l+\Gamma$ is less
than the worst-case message delay~$\Delta$. That is, to avoid the
impossibility result in Theorem~\ref{neg2}, either the time to
transaction confirmation or the time to recoup an investment off-chain
(following a transaction confirmation on-chain) must scale with the
worst-case message delay. For example, if typical network delays are
much smaller than worst-case delays and the speed of transaction
confirmation in some PoS protocol scales with the former, then the
``cooldown period'' required before stake can be liquidated must scale
with the latter (as it does, roughly, in the current Ethereum
protocol).

\section{A possibility result: the proof of Theorem~\ref{posres}}
%  \ref{posres2}} 
\label{posproof}

\subsection{Review of (permissioned) Tendermint} \label{Treview} 

To motivate the definition of our PoS protocol, in this subsection we
give a somewhat informal description of a simplified specification
of (permissioned) Tendermint.  In this subsection, we therefore
consider a set of $n$ players, of which at most $f$ may be Byzantine,
where $n= 3f+1$.

\vspace{0.2cm} 
\noindent \textbf{Technical preliminaries}.  In Section \ref{setup},
we supposed that a player's state transition at a given timeslot can
depend on the number of the current timeslot. This modeling choice
serves to make our impossibility results as strong as possible, but is
a stronger assumption than is sometimes made when working in the
partially synchronous model.  For the sake of simplicity, we present a
PoS version of Tendermint which makes use of this assumption to avoid
the necessity of using a special procedure for \emph{view
  synchronization}.\footnote{If one wanted to drop this assumption of the
model, then one would need to augment the proof given here so as to
implement an appropriately modified version of some standard protocol
for view synchronization (e.g.\ \cite{naor2021cogsworth})). We avoid
such matters, as complicating the proof in this way would
distract from the primary considerations of the paper.} 
%For the sake of simplicity, we also
Similarly, we do not concern ourselves with issues of efficiency
(e.g.\ communication complexity and latency) and aim only to present
the simplest protocol meeting the required conditions.

In what follows, we assume that all messages are signed by the player
disseminating them.

\vspace{0.2cm} 
\noindent \textbf{Views in Tendermint}.   
%We consider \emph{blocks} of
%transactions. Corresponding to each block $B$ there is a \emph{height}
%$h(B)$. (All of these notions will subsequently be formalised.) 
The protocol instructions are divided into \emph{views}. Within each
view, we run two stages of voting, each of which is an opportunity for
players to vote on a \emph{block} (or blocks) of transactions proposed
by the \emph{leader} for the view (all these notions will be
formalized in Section \ref{formal}).  If the first stage of voting
establishes a \emph{stage 1 quorum certificate} (QC) for a block $b$,
then the second stage may establish a \emph{stage 2} QC for $b$.
Instructions within views are deterministic, so to ensure the protocol
as a whole is deterministic one can simply specify a system of
rotating leaders, e.g.\ if the players are $\{ p_0,\dots, p_{n-1} \}$,
then one can specify that the leader of view $v$, denoted
$\mathtt{lead}(v)$, is $p_i$, where $i= v \text{ mod }n$.

Each block of transactions $b$ records the view to which it
corresponds in the value $\mathtt{v}(b)$. If $Q$ is a QC for $b$, then
$\mathtt{v}(Q)=\mathtt{v}(b)$ and $\mathtt{b}(Q)=b$. The value
$\mathtt{s}(Q)$ records whether $Q$ is a stage 1 or a stage~2 QC.

Each player also maintains a variable $Q^{+}$ for the purpose of
implementing a \emph{locking} mechanism.  Upon seeing a stage 1 QC
while in view $v$, for $b$ which is proposed by the leader of view
$v$, an honest player $p$ sets $Q^{+}$ to be that stage 1 QC for $b$.

\vspace{0.2cm} 
\noindent \textbf{When view $v$ is executed}. View $v$ begins at time $3v\Delta$ and ends when view $v+1$ begins.

\vspace{0.2cm} 
\noindent \textbf{Block proposals in Tendermint}. Upon entering view $v$, the leader for the view lets $Q$ be the stage 1 QC amongst all those it has seen such that $\mathtt{v}(Q)$ is greatest. Let $b':= \mathtt{b}(Q)$. The leader then disseminates a block proposal $b$, which records its \emph{parent} $b'$ in the value $\mathtt{par}(b)$, which records the view corresponding to the block in the value $\mathtt{v}(b)$, and which records a QC for the parent block in the value $\mathtt{QCprev}(b):=Q$. Any honest player receiving a proposal $b$ while in view $v$ will regard it as \emph{valid} if it is signed by $\mathtt{lead}(v)$ and if $\mathtt{v}(b)=v$, $\mathtt{QCprev}(b)$ is a QC for the parent block $\mathtt{par}(b)$, and if $Q^+\leq \mathtt{v}(Q)$ for $Q=\mathtt{QCprev}(b)$.  If they regard the proposal as valid, honest players will then disseminate a stage 1 vote for $b$. Upon seeing a stage 1 QC for $b$ (a set of $2f+1$ stage 1 votes signed by distinct players) while in view $v$, they will set $Q^+$ to be that QC and will disseminate a stage 2 vote for $b$. Upon seeing any block with stage 1 and 2 QCs, honest players consider that block and all ancestors \emph{confirmed}. (Terms such as `ancestor' will be formally defined in Section \ref{formal}.) 

\vspace{0.2cm} 
\noindent \textbf{The genesis block}. We consider a fixed \emph{genesis block}, denoted $b_{g}$, with $\mathtt{v}(b_g):=0$. All players begin having already received a QC for the genesis block and with $Q^+$ equal to that QC for the genesis block. 

\vspace{0.2cm} 
An informal version of the instructions is shown below: 

\vspace{0.4cm}
\hrulefill

\vspace{0.3cm} 
\noindent \textbf{The instructions for player $p$ in view $v\geq 1$}.

\vspace{0.2cm}
\noindent \textbf{At timeslot} $t=3v\Delta$: If $p=\mathtt{lead}(v)$, then disseminate a new block, as specified above. 

\vspace{0.1cm}
\noindent \textbf{At timeslot} $t=3v\Delta+\Delta$: If $p$ has seen a first valid block $b$ for view $v$ signed by $\mathtt{lead}(v)$, i.e. if $\mathtt{v}(b)=v$, $\mathtt{QCprev}(b)$ is a QC for the parent block $\mathtt{par}(b)$, and if $Q^+\leq \mathtt{v}(Q)$ for $Q=\mathtt{QCprev}(b)$, then $p$ disseminates a stage 1 vote for $b$. 

\vspace{0.1cm}
\noindent \textbf{At timeslot} $t=3v\Delta+2\Delta$: If $p$ has seen a stage 1 QC for a block $b$ signed by $\mathtt{lead}(v)$, then set $Q^+$ to be that QC and disseminate a stage 2 vote for $b$.

\vspace{0.3cm} 
\hrulefill

\vspace{0.4cm} 
\noindent \textbf{Consistency and liveness for Tendermint}. To argue
that the protocol satisfies consistency, suppose towards a
contradiction that there exists a least $v$ such that:
\begin{itemize} 
\item Some $b$ with $\mathtt{v}(b)=v$ receives stage 1 and 2 QCs, $Q_1$ and $Q_2$ say. 
\item For some least $v'\geq v$, there exists $b'$ such that  $b'$ is  \emph{incompatible} with $b$ (i.e.\ neither $b$ or $b'$ are ancestors of each other), with $\mathtt{v}(b')=v'$ and $\mathtt{QCprev}(b')=Q_0$ (say), and the block $b'$ receives a stage 1 QC, $Q_3$ say. 
%\item There exists $Q'$ which is a stage 1 QC for some $B'$ which is  \emph{incompatible} with $B$ (i.e.\ neither $B$ or $B'$ are ancestors of each other), and such that $v^*(B')=v'$.  
\end{itemize} 
If $v=v'$ then, since $Q_1$ and $Q_3$ both consist of $2f+1$ votes and
$n= 3f+1$, some honest player must have votes in both $Q_1$ and
$Q_3$. This gives a contradiction since each honest player sends at
most one stage 1 vote in each view. So, suppose $v'>v$. Then some
honest player $p$ must have votes in both $Q_2$ and $Q_3$.  This gives
the required contradiction, since $p$ must set $Q^+$ so that
$\mathtt{v}(Q^+)=v$ while in view $v$, but our choice of $(v,v')$
means that $\mathtt{v}(Q_0)< v$, so that $p$ would not regard the
proposal~$b'$ as valid while in view $v'$ and would not produce a vote
in $Q_3$.

To argue briefly that the protocol satisfies liveness, let $v$ be a
view with honest leader $p$, which begins at time at least $\Delta$
after GST. Amongst all the values $Q^+$ for honest players at time
$3v\Delta -\Delta$, let $Q$ be that such that $\mathtt{v}(Q)$ is
greatest. Then $p$ must see $Q$ by the beginning of view $v$, and will
therefore produce a block proposal $b$ that all honest players regard
as valid (since they cannot subsequently have updated their local
value $Q^+$). All honest players will therefore produce stage 1 and
stage 2 votes for $b$, meaning that $b$ is confirmed for all honest
players.

\subsection{Review of the Dolev-Strong protocol} \label{DSreview} 

%TODO: t overloaded in this section (switch to e.g. i)

In our protocol, honest players will reach consensus following a
consistency violation using a recovery procedure based on
(repeated instances of) a variant 
%Recall that the recovery procedure will use multiple instances 
of the classic protocol of Dolev and Strong.  In a first instance of
the Dolev-Strong protocol, a designated \emph{leader} is asked to
propose an updated genesis block. If this instance results in
consensus amongst honest players on such a block~$b$,
then the honest majority of validating stake that remains after
slashing can ``sign off'' on this value, producing a form of
\emph{certificate} in support of~$b$.
%that acts as proof that $b$ can be used as the next genesis block. 
If not, a second leader for another instance of the Dolev-Strong
protocol is then asked to propose an updated genesis block, and so on.

\vspace{0.2cm} In this section, we briefly review the Dolev-Strong
protocol. The version we present below is a general form of the
protocol for solving Byzantine Broadcast \cite{lamport2019byzantine}
given a set $P$ of $n$ players, of which at most $f$ may be Byzantine,
and one of which is the designated \emph{leader}.  In our recovery
procedure, a modified form of the protocol will be used, in which
honest players will ignore values proposed by leaders unless they are
a suitable proposal for an updated genesis block.

\vspace{0.2cm} 
\noindent \textbf{Notation for signed messages}.  For any message $m$, and for distinct players $p_1,\dots,p_t$, we let $m_{p_1,\dots,p_t}$ be $m$ signed by $p_1,\dots,p_t$ in order, i.e., for the empty sequence $\emptyset$, $m_{\emptyset}$ is $m$, and for each $i\in [1,t]$, $m_{p_1,\dots,p_i}$ is $m_{p_1,\dots,p_{i-1}}$ signed by $p_i$. Let $X_t$ be the set of all messages of the form $m_{p_1,\dots,p_t}$ such that  $p_1,\dots p_t$ are all distinct players in $P$ and $p_1$ is the leader.  

\vspace{0.2cm} 
\noindent \textbf{Initial setup}.  Each player  $p$ maintains a set $O_p$, which can be thought of as the set of values that $p$ recognises as having been sent by the leader, and which is initially empty.  
The leader begins with an input value $z$.

\vspace{0.2cm} 
\noindent \textbf{The instructions for player $p$}.

\vspace{0.1cm}

\noindent \textbf{Time $0$}. If $p$ is the leader, then disseminate  $z_p$ to all players and enumerate $z$ into $O_p$.

\noindent \textbf{Time $t\Delta^*$ with $1\leq t \leq f+1$}. Consider the set of messages $m\in X_t$   received by time $ t\Delta^*$. For each such message $m=y_{p_1,\dots ,p_t}$, if $y\notin O_p$,  proceed as follows: Enumerate $y$ into $O_p$ and, if $t<f+1$, disseminate $m_{p}$ to all players. 
 
\vspace{0.2cm}
\noindent \textbf{The output for player $p$}.  After executing all
other instructions at time $(f+1)\Delta^*$, $p$ outputs $y$ if~$O_p$
contains the single value $y$, and otherwise $p$ outputs $\bot$.

\vspace{0.2cm} 
\noindent \textbf{Proving that the protocol functions correctly}.  The
key claim is that all honest players produce the same output (possibly
$\bot$), whether the leader is honest or Byzantine. If the leader is
honest and begins with input $z$, then every honest player $p$ other
than the leader enumerates $z$ into $O_p$ at time~$\Delta^*$, and does
not enumerate any other value into this set at any time. All honest
players therefore produce a common output $z$ which, moreover, is the input of
the honest leader.
 
We can complete the proof by showing that, if any honest player $p$
enumerates a particular value~$y$ into $O_p$, then all honest players
do so (even if the leader is Byzantine). There are two cases to
consider:
\begin{itemize} 

\item \textbf{Case 1}. Suppose that some honest $p$ first enumerates
  $y$ into $O_p$ at time $t\Delta^*$ with $t<f+1$. In this case, $p$
  receives a message of the form $m=y_{p_1,\dots ,p_t}\in X_t$ at
  $t\Delta^*$. Player $p$ then adds their signature to form a message
  with $t+1$ distinct signatures and disseminates this message to all
  players. This means every honest player $p'$ will enumerate $y$ into
  $O_{p'}$ by time $(t+1) \Delta^*$.

\item \textbf{Case 2}. Suppose next that some honest $p$ first
  enumerates $y$ into $O_p$ at time $(f+1)\Delta^*$. In this case, $p$
  receives a message of the form $m=y_{p_1,\dots ,p_{f+1}}\in X_{f+1}$
  at time $(f+1)\Delta^*$, which has $f+1$ distinct signatures
  attached. At least one of those signatures must be from an honest
  player $p'$ (since there are at most $f$ Byzantine players), meaning
  that Case 1 applies w.r.t.\ $p'$.

\end{itemize} 

\subsection{Overview of further required changes} \label{furtherc} 

 Recall that PosT is our proof-of-stake version of Tendermint, used to
prove Theorem~\ref{posres}.
% and \ref{posres2}. In Section\ref{six}, we already
Section~\ref{six} described, at a high level, some of the innovations
involved: the holding of stake `in escrow', the use of epochs,
certificates of guilt, the use of a recovery procedure, and so on.
One further significant point concerns how to deal appropriately with
the change of validators at the end of each epoch.
%\andy{Note to self. This paragraph will have to be adjusted if we move some of Section \ref{six} to the appendix.} 

\vspace{0.2cm}
\noindent \textbf{How to complete an epoch}. The complication is as follows. 

\vspace{0.1cm}
\noindent \emph{The problem}. We claimed in Section \ref{six} that,
should message delay prior to GST be bounded by $\Delta^*$, any
consistency violation during epoch $e$ will lead to active honest
players receiving a corresponding certificate of guilt within time
$2\Delta^*$ of any honest player beginning epoch $e+1$.  The basic
idea to specify the length of each epoch is therefore that we wish to
choose some $x$ such that $x\Delta>\Delta^*$, so that $3x\Delta$ (the
time to complete $x$ views\footnote{In fact, the time to complete each
  view will be $4x\Delta$ once we add in an extra stage of voting for
  each view.}) is certainly larger than $2\Delta^*$, and then have
epoch~$e$ be responsible for confirming blocks of heights in
$(ex,(e+1)x]$ (block height will be formally defined in Section
\ref{formal}).  The way the Tendermint protocol functions (at least as
specified in Section~\ref{Treview}), however, a block may only receive
a stage 1 QC and become confirmed because a descendant subsequently
receives a stage 2 QC. If $b$ is of height $(e+1)x$, and so
potentially ends an epoch, and if $b$ only receives a stage 1 QC, then
the question becomes, who do we have propose and vote on children of
$b$? We cannot allow the validator change at the end of the epoch to
be dictated by the transactions in $b$ and its ancestors (and
immediately have the new validators propose descendants of $b$),
because $b$ is not confirmed. Another block $b'$ of height $h$ might
also receive a stage 1 QC, leading to different opinions as to who
should be the validator set for the next epoch.

\vspace{0.2cm}
\noindent \emph{The solution}. The solution we employ is to allow the
validators for epoch $e$ to continue proposing and voting on blocks of
heights greater than $(e+1)x$, until they produce a confirmed block of
height $(e+1)x$. Once they do so, the blocks they have produced of
height $>(e+1)x$ (together with the votes for those blocks) are kept
as part of the `chain data' that verifies the validity of the chain,
but are not used to constitute part of the chain of confirmed
transactions, i.e. the validators for the next epoch begin building
above $b$ of height $(e+1)x$. It is crucial that the transactions in
these extra blocks not be considered confirmed because the validators
for the next epoch have not been present to implement the locking
mechanism during epoch $e$, meaning that if one of these blocks~$b'$
is confirmed, then we cannot guarantee that at least a third of the
new validators (weighted by stake) are honest and locked on
$b'$.\footnote{To see the issue, suppose $b'$ and $b''$ are both blocks of height $>(e+1)x$ which
are produced during epoch $e$, such that $b''$ is a child of $b'$, and
which both receive stage 2 QCs; this is possible because the stage 2
QC for $b''$ may be produced during asynchrony before GST and before
any player has seen the stage 2 QC for $b'$. If we were to regard the
transactions in $b'$ and $b''$ as confirmed, then some of the new
validators may initially see $b''$ as confirmed, while most have only
seen a stage 2 QC for $b'$. The new validators might then confirm new
blocks that are descendants of $b'$ incompatible with $b''$.}

\vspace{0.2cm}
\noindent \textbf{The need for three stages of voting in each
  view}. Similar to HotStuff \cite{yin2019hotstuff}, we will use three
stages of voting in each view. The motivation for doing so, however,
is different than for HotStuff (where the goals were
performance-driven, specifically to achieve a form of `optimistic
responsiveness'). The motivation here is to ensure that, if there is a
consistency violation amongst the blocks for epoch $e-1$, then active
honest players will receive a corresponding `certificate of guilt'
before the end of epoch $e$.\footnote{One concrete difference between
  the protocols is that,
  with a $\rho$-bounded adversary with $\rho < 1/3$, stage 2 QCs for
  conflicting blocks are possible in the HotStuff protocol but
  impossible in the PosT protocol.}

\vspace{0.1cm} 
\noindent \emph{The problem}. To see the problem if we only use two stages of voting, consider the following possible sequence of events. Suppose a block $b$ is proposed during view $v$ and receives a stage 1 QC. The block then receives stage 2 votes from a fraction of the honest stake that is small, but sufficient that the adversary can produce a stage 2 QC for $b$ at any later time of its choosing. 
In view $v+1$, a block $b'$ that is incompatible with $b$ then receives stage 1 and stage 2 QCs, ending the epoch. Note that, in this case,  the adversary can produce the stage 2 QC for $b$ and the corresponding consistency violation after any delay of its choosing. Once it does so, the stage 2 QC for $b$, together with the stage 1 QC for $b'$, will constitute a certificate of guilt (so long as votes are defined to include the $\mathtt{QCprev}$ value for the block they vote on), but by the time 
this is seen by honest players, the adversary's validating stake may no longer be in escrow. 

\vspace{0.1cm} 
\noindent \emph{The solution}. Now suppose that we use three stages of
voting within each view and that, as before, an honest player sets
their lock (redefines $Q^+$) upon seeing a stage 1 QC for the block
proposal during view $v$. Suppose message delay prior to GST is
bounded by $\Delta^*$. A block is confirmed when some descendant
(possibly the block itself) receives stage 1, 2 and 3 QCs.  A player
enters epoch $e+1$ upon seeing any confirmed block for epoch $e$ of
height $(e+1)x$.  Suppose there exists a consistency violation amongst
the epoch $e$ blocks. 
In this case, there must exist a least $v$ and a
least $v'\geq v$ such that some block $b$ with $\mathtt{v}(b)=v$
receives stage 1, 2 and 3 QCs, and some incompatible block $b'$ with
$\mathtt{v}(b')=v'$ also receives stage 1, 2 and 3 QCs. 
Crucially, because $\rho^* < 2/3$, the production of any QC requires
the participation of at least one honest player.
Let $v''$ be
the least view $\geq v$ such that some block $b''$ that is
incompatible with $b$ (with $\mathtt{v}(b'')=v''$) receives a stage 1
QC, $Q_1$ say, that is seen by some honest player $p_1$ before they
enter epoch $e+1$. Note that $v''\leq v'$. Let $Q_2$ be a stage 2 QC
for~$b$ that is seen by an honest player $p_2$ before sending a stage
3 vote during view~$v$. Since we will be able to argue (essentially
just as in Section \ref{Treview}) that $Q_1$ and $Q_2$ constitute an
appropriate certificate of guilt, it remains to show that these QCs
will be seen by all active honest players within time $2\Delta^*$ of
any honest player entering epoch $e+1$.

%  To this end, note that $4v\Delta +2\Delta$ is earlier than any honest player can possibly enter epoch $e+1$ (each view now takes time $4\Delta$).  Since some honest player sees $Q_2$ by time $4v\Delta +2\Delta$, $Q_2$ must be seen by all active honest players by time $4v\Delta +2\Delta +\Delta^*$. If no honest player enters epoch $e+1$, then there is nothing left to show because all honest players must receive $Q_1$ within time $\Delta^*$ of $Q_1$ being received by $p$.  If the first honest player to enter epoch $e+1$ does so at timeslot $t$, then all active honest players enter epoch $e+1$ by  $t+\Delta^*$. This means $p$ must enter epoch $e+1$ by $t+\Delta^*$ and, since $p$ sees $Q_1$ before $t+\Delta^*$, $Q_1$ must be seen by all active honest players by time $t+2\Delta^*$. All active honest players therefore see $Q_1$ and $Q_2$ within time $2\Delta^*$ of any honest player entering epoch $e+1$. 

%To this end, note that (by our choice of $v$ and $v'$) $4v\Delta +2\Delta<t$.    Since some honest player sees $Q_2$ by time $4v\Delta +2\Delta$, $Q_2$ must be seen by all active honest players by time $4v\Delta +2\Delta +\Delta^*$. 
 If  some first honest player enters epoch $e+1$  at timeslot $t$, then all active honest players do so by  $t+\Delta^*$. This means $p_i$ (for $i\in \{ 1,2 \}$, as specified above) must enter epoch $e+1$  by $t+\Delta^*$ and, since $p_i$ sees $Q_i$ before $t+\Delta^*$, $Q_i$ must be seen by all active honest players by time $t+2\Delta^*$. All active honest players therefore see $Q_1$ and $Q_2$ by time  $t+ 2\Delta^*$, as required.  

\vspace{0.3cm} 
\noindent \textbf{Overall summary of the PosT protocol}. Before giving a formal protocol specification, we informally review the different phases of the PosT protocol. 

\vspace{0.1cm} 
\noindent \emph{Before the recovery procedure is triggered.} During
``normal'' operation (i.e., while not under attack), PosT acts
essentially like a version of
Tendermint, but with three phases of voting in each view, and with the
set of players that are responsible for proposing and voting on blocks
(i.e., the `validators') changing with each epoch. To act as a validator,
players must place stake in-escrow.  Removal of stake from escrow is
subject to a delay of one epoch, with the duration of each epoch
greater than~$2\Delta^*$. Since any consistency violation during epoch
$e$ will be seen by honest players before epoch $e+1$ is completed,
this suffices to ensure that Byzantine players contributing to a
consistency violation can be `slashed' before their stake is removed
from escrow. Since PosT acts essentially like Tendermint in all other
respects, in the case that the adversary is $\rho$-bounded for
$\rho<1/3$, the protocol achieves liveness with respect to a parameter
$T_l$ that depends on~$\Delta$ only (i.e., 
$T_l$ is independent of $\Delta^*$) and also satisfies consistency.

\vspace{0.1cm} 
\noindent \emph{Once the recovery procedure is triggered.} If a
consistency violation occurs in some least epoch $e$, then active
honest players will see a corresponding certificate of guilt before
epoch $e+1$ completes, and will trigger the recovery procedure. Once
the recovery procedure is triggered, the validators for epoch $e$ then
carry out repeated instances of the Dolev-Strong protocol to determine
an updated genesis block. This updated genesis block will contain all
transactions confirmed prior to epoch $e$, and also certificates of
guilt for at least 1/3 of the validating stake for epoch $e$, and will
be signed by the honest majority of validating stake that remains
after slashing (using that $\rho^*<2/3$). 
If desired, the main protocol can resume operation from this
new genesis block.

\subsection{The formal specification of the main protocol (prior to recovery)} \label{formal}
We give a specification that is aimed at simplicity of presentation, and do not concern ourselves with issues of efficiency, such as communication complexity and latency. It is
convenient to assume that whenever honest $p$ disseminates a message
at some timeslot $t$, $p$ regards that dissemination as having been
received (by $p$) at the next timeslot. We assume that all messages are signed by the player disseminating the message.

\vspace{0.2cm} 
\noindent \textbf{Transaction gossiping and the variable
  $\mathtt{T}^{\ast}$}. We assume that, whenever any honest player
first receives a transaction $\mathtt{tr}$, they disseminate
$\mathtt{tr}$ at the same timeslot. Each honest player $p$ maintains a
variable $\mathtt{T}^{\ast}$, which is the set of all transactions received by $p$ thus far.

\vspace{0.2cm} 
\noindent \textbf{The stake allocation function}.  We take as given an
initial distribution $\mathtt{S}^*$. The stake allocation function
$\mathtt{S}$ should be thought of as specifying the amount of stake a
player has in escrow and which has not been earmarked for removal from
escrow (after one epoch), i.e.\ a player's validating stake. For the
sake of simplicity, we suppose that the total number of units of
currency is fixed at some value $N\in \mathbb{N}_{>0} $ and that, for
some determined $x^*\in \mathbb{N}_{>0}$, each identifier can hold
either 0 or $N/x^*$ units of currency in escrow (this still allows a
player to put large amounts of stake in escrow by using multiple
identifiers).  Stake is added or removed from escrow via special
transactions: an \emph{add-to-escow} transaction corresponding to
identifier $id$ is used to put stake in escrow, while a
\emph{remove-from-escrow} transaction corresponding to $id$ is used to
remove stake from escrow. To specify how the stake may be updated, we
stipulate that players may create and disseminate `epoch' transactions
that will be used to mark the end of each epoch. `Certificates of
guilt' will also be formally defined in what follows as a special form
of transaction. For any given certificate of guilt~$\mathtt{G}$ and
any identifier $id$, $\mathtt{G}$ may or may not \emph{implicate}
the identifier $id$.  We require that $\mathtt{S}$ satisfies the following
conditions:\footnote{Throughout this paper, we use `$\ast$' to denote
  concatenation.}

\begin{enumerate} 

\item If $\mathtt{T}_2$ does not contain any epoch  transactions, then $\mathtt{S}(\mathtt{S}^*,\mathtt{T}_1\ast \mathtt{T}_2,id)= \mathtt{S}(\mathtt{S}^*,\mathtt{T}_1,id)$, i.e.\ \emph{$\mathtt{S}$ only updates at the end of an epoch}. 
%\item If $\mathtt{T}$ contains  $\mathtt{G}$ that is a certificate of guilt implicating $id$, then $\mathtt{S}_2(\mathtt{S}_2^*,\mathtt{T},id)=0$, i.e.\ \emph{slashed stake cannot be used to validate}. 
%\item If $\mathtt{T}_2$ does not contain any epoch transactions or any certificate of guilt implicating $id$, then  $\mathtt{S}_2(\mathtt{S}_2^*,\mathtt{T}_1\ast \mathtt{T}_2,id)= \mathtt{S}_2(\mathtt{S}_2^*,\mathtt{T}_1,id)$, i.e.\ \emph{in rounds $>0$, $\mathtt{S}_2$ only updates by slashing stake}. 
% \item If  $\mathtt{S}_2(\mathtt{S}_2^*,\mathtt{T},id)>0$ then $\mathtt{S}_2(\mathtt{S}_2^*,\mathtt{T},id)=N/x^*$, i.e.\ \emph{each validator $id$ must put some fixed (non-zero) amount of stake in escrow}. 

\item Suppose $\mathtt{T}$ ends with an epoch transaction.
If $\mathtt{T}$ contains an add-to-escrow transaction corresponding to
$id$ that is not followed (in $\mathtt{T}$) by a remove-from-escrow  transaction corresponding to $id$, and also $\mathtt{T}$
  contains no certificate of guilt implicating $id$,
then $\mathtt{S}(\mathtt{S}^*,\mathtt{T},id)=N/x^*$.
Otherwise,
  $\mathtt{S}(\mathtt{S}^*,\mathtt{T},id)=0$.\footnote{While a
    remove-from-escrow transaction impacts a player's
    $\mathtt{S}$-balance immediately at the end of an epoch, the stake
    should be considered as remaining in escrow for one further epoch;
    this will be reflected in the definitions of the investment
    and valuation functions in Section \ref{instant}.}
% \item  If $\mathtt{G}$ is a certificate of guilt implicating $id$ and $\mathtt{T}_2$ contains an epoch transaction, then $\mathtt{S}(\mathtt{S}^*,\mathtt{T}_1\ast \mathtt{G} \ast \mathtt{T}_2,id)=0$. 

\end{enumerate}

\vspace{0.2cm} 
\noindent \textbf{The length of an epoch}. Choose $x>2x^*$ (where $x^*$ is as specified above) and such that $x\Delta>\Delta^*$. We think of epoch $e$ as being responsible for confirming blocks of heights in $(ex,(e+1)x]$ (block height will be formally defined below). We will later specify a value $\mathtt{e}(b)$ corresponding to each block $b$, which records the epoch of the block. We will say a block $b$ is \emph{epoch $e$ ending} if $\mathtt{e}(b)=e$ and $b$ is of height $(e+1)x$. 

% \vspace{0.2cm} 
% \noindent \textbf{The length of an epoch}. Choose $x$  such that $x\Delta>\Delta^*$. We think of epoch $e$ as being responsible for confirming blocks of heights in $(ex,(e+1)x]$ (block height will be formally below) and say a block is \emph{epoch $e$ ending} if it is of height $(e+1)x$. 

\vspace{0.2cm} 
\noindent \textbf{The genesis block}. We consider a fixed
\emph{genesis block} $b_g$.
% (one for each iteration of the protocol)
%and a sequence of initially confirmed transactions
%$\mathtt{Tr}(b_g)$. 
We set
$\mathtt{h}(b_g)=\mathtt{e}(b_g)=\mathtt{v}(b_g)=0$. These values
indicate the \emph{height} of the block, and the epoch and view
numbers corresponding to the block respectively. We consider the empty
set, denoted $\emptyset$, a stage 1 QC for $b_g$, and set
$\mathtt{b}(\emptyset)=b_g$,
$\mathtt{e}(\emptyset)=\mathtt{v}(\emptyset)=0$. No set of messages is
a stage 2 QC for $b_g$. The block $b_g$ has no parent, has only itself
as an ancestor and is $M$-\emph{valid} for any set of messages $M$.

\vspace{0.2cm} 
\noindent \textbf{The variable} $M$. Each player $p$ maintains a
variable $M$, which is the set of all messages received by $p$ thus
far:  Intially $M$ is set to be $\{ b_g \}$.
% and all transactions in $\mathtt{Tr}(b_g)$.

\vspace{0.2cm} 
\noindent \textbf{Blocks}. A block $b$ other than the genesis block is entirely specified by the following values: 

\vspace{0.1cm} 
\noindent $\mathtt{h}(b)$: The height of $b$.

\vspace{0.1cm} 
\noindent $\mathtt{v}(b)$: The view corresponding to $b$. 

\vspace{0.1cm} 
\noindent $\mathtt{e}(b)$: The epoch corresponding to $b$.

\vspace{0.1cm} 
\noindent $\mathtt{par}(b)$: The parent of $b$. We also call $b$ a \emph{child} of $\mathtt{par}(b)$. The ancestors of $b$ are $b$ and all ancestors of $\mathtt{par}(b)$. The \emph{descendants} of any block $b$ are $b$ and all the descendants of its children.  Two blocks are \emph{incompatible} if neither is an ancestor of the other. For $b$ to be $M$-valid it must hold that:
\begin{itemize}
\item $b\in M$;
    \item $\mathtt{par}(b)$ is $M$-valid;
    \item The ancestors of $b$  include $b_g$;
    \item  $h(b)=h(\mathtt{par}(b))+1$;
    \item If $\mathtt{par}(b)$ is epoch
      $e=\mathtt{e}(\mathtt{par}(b))$ ending, then  $\mathtt{e}(b)$ is
      $e$ or $e+1$;\footnote{Thus, a block whose parent is epoch $e$ ending can still belong to the same epoch. This is due to the considerations outlined in Section \ref{furtherc}, i.e.\ further blocks may need to be built in the same epoch so as to confirm the block that is epoch $e$ ending. Once that block $b$ is confirmed, the first blocks of the next epoch will be children of $b$.} 
    \item If $\mathtt{par}(b)$ is not epoch $e=\mathtt{e}(\mathtt{par}(b))$ ending, then $\mathtt{e}(b)=e$.
\end{itemize} 

\vspace{0.1cm} 
\noindent $\mathtt{QCprev}(b)$: This value plays a similar role as in Section \ref{Treview}. For $b$ to be $M$-valid: 
\begin{itemize} 
\item $M$ must contain $\mathtt{QCprev}(b)$;
\item $\mathtt{QCprev}(b)$ must be a stage 1 QC for $\mathtt{par}(b)$. 
\end{itemize}

\vspace{0.1cm} 
\noindent $\mathtt{T}(b)$: The sequence of transactions in $b$. For
$b$ to be $M$-valid, we require that $\mathtt{T}(b)$ ends with an
epoch transaction iff $b$ is epoch $\mathtt{e}(b)$
ending.\footnote{This condition is important because epoch
  transactions inform $\mathtt{S}$  of the end of
  an epoch.}

\vspace{0.1cm} 
\noindent $\mathtt{Tr}(b)$: The sequence of transactions in all ancestors of $b$. For $b$ to be $M$-valid, we require $\mathtt{Tr}(b)= \mathtt{Tr}(\mathtt{par}(b)) \ast \mathtt{T}(b)$.

\vspace{0.1cm} 
\noindent $\mathtt{Tval}(b)$: The sequence of transactions used to
determine who should vote on $b$. We set 
%$\mathtt{Tval}(b_g)=\mathtt{Tr}(b_g)$.
$\mathtt{Tval}(b_g)=\emptyset$.
For $b$ to be $M$-valid, we require: 
\begin{itemize}
\item If $\mathtt{e}(b)=\mathtt{e}(\mathtt{par}(b))$ then $\mathtt{Tval}(b)=\mathtt{Tval}(\mathtt{par}(b))$;
\item If $\mathtt{e}(b)=\mathtt{e}(\mathtt{par}(b))+1$ then $\mathtt{Tval}(b)=\mathtt{Tr}(\mathtt{par}(b))$. 
\end{itemize} 

%\vspace{0.1cm}
%\noindent \emph{The conditions above assume $\mathtt{r}(b)=0$ and specify that $\mathtt{Tval}(b)$ updates at the end of each epoch.}
%
%\vspace{0.1cm} 
%\noindent  For $b$ to be $M$-valid, we also require: 
%\begin{itemize} 
%\item If  $\mathtt{r}(b)\geq 1$, then $\mathtt{Tval}(b)=\mathtt{Tval}(\mathtt{par}(b)) \ast \mathtt{T}(b)$. 
%\end{itemize} 

% \begin{itemize} 
% \item If $\mathtt{e}(b)=\mathtt{e}(\mathtt{par}(b))$ then $\mathtt{Tprev}(b)=\mathtt{Tprev}(\mathtt{par}(b))$;
% \item If $\mathtt{e}(b)<2$ then $\mathtt{Tprev}(b)=\mathtt{Tr}(b_g)$;
% \item If $\mathtt{e}(b)\geq 2$, $\mathtt{r}(b)=0$ and $\mathtt{e}(b)=\mathtt{e}(\mathtt{par}(b))+1$, then $\mathtt{Tprev}(b)=\mathtt{Tval}(\mathtt{par}(b))$;
% \item If $\mathtt{r}(\mathtt{par}(b))=0$, $\mathtt{r}(b)=1$ and $\mathtt{er}(b)=\mathtt{e}(b)-1$, then $\mathtt{Tval}(b)=\mathtt{Tprev}(b)\ast \mathtt{T}(b)$; 
% \item If  $\mathtt{r}(\mathtt{par}(b))=0$, $\mathtt{r}(b)=1$ and $\mathtt{er}(b)=\mathtt{e}(b)$, then $\mathtt{Tval}(b)=\mathtt{Tval}(\mathtt{par}(b)) \ast \mathtt{T}(b)$; 
% \item If  $\mathtt{r}(b)>1$, then $\mathtt{Tval}(b)=\mathtt{Tval}(\mathtt{par}(b)) \ast \mathtt{T}(b)$. 
% \end{itemize} 
 
% \vspace{0.1cm} 
%\noindent \emph{This conditions ensures that, once the recovery procedure is triggered,  any certificate of guilt included in a block is already used to determine who should vote on the block.}
  
\vspace{0.2cm} 
\noindent \textbf{Votes}.  A vote $V$ is entirely specified by the following values: 

 \vspace{0.1cm} 
\noindent $\mathtt{b}(V)$: The block for which  $V$ is a vote. 

 \vspace{0.1cm} 
\noindent $\mathtt{c}(V)$: The amount of stake corresponding to the vote. 

 \vspace{0.1cm} 
\noindent $\mathtt{s}(V)$: The stage of the vote (1, 2 or 3). 

 \vspace{0.1cm} 
\noindent $\mathtt{id}(V)$: For the vote to be \emph{valid}, it must be signed by $\mathtt{id}(V)$.

 \vspace{0.1cm} 
\noindent $\mathtt{vprev}(V)$:  For the vote to be valid, this must equal $\mathtt{v}(\mathtt{QCprev}(\mathtt{b}(V)))$.

\vspace{0.2cm} 
\noindent \textbf{Quorum certificates}. Let $\mathtt{T}:= \mathtt{Tval}(b)$ and set: 
\[ \mathtt{N}(b):= \sum_{p\in \mathcal{P}} \sum_{id\in \mathtt{id}(p)}
  \mathtt{S}(\mathtt{S}^*,\mathtt{T},id). \]
For any block $b$ other than $b_g$, and for $s\in \{ 1, 2, 3 \}$, we
say a set of valid votes $Q$ is a stage $s$ QC for~$b$ if the
following conditions are all satisfied:

\begin{enumerate}

\item[(i)] For each $V\in Q$ we have $\mathtt{b}(V)=b$, and
  $\mathtt{s}(V)=s$;

\item[(ii)] For each
  $V\in Q$ it holds that $\mathtt{S}(\mathtt{S}^*,\mathtt{T},\mathtt{id}(V)) =\mathtt{c}(V)$;
  
  \item[(iii)] There do not exist $V\neq V'$ in $Q$ with $\mathtt{id}(V)=\mathtt{id}(V')$; 

\item[(iv)]   $\sum_{V\in Q} \mathtt{c}(V)\geq  \frac{2}{3}\mathtt{N}(b)$.

%\item[(v)]  If $\mathtt{r}(b)>0$ then $\sum_{V\in Q} \mathtt{c}(V)\geq  3(1-\rho^*)\mathtt{N}(b)/2$.    

\end{enumerate} 

\noindent If $Q$ is a stage
$s$ QC for $b$, we define $\mathtt{b}(Q):=b$, $\mathtt{e}(Q):=\mathtt{e}(b)$, $\mathtt{v}(Q):=\mathtt{v}(b)$,  $\mathtt{vprev}(Q)=\mathtt{v}(\mathtt{QCprev}(b))$,  and
  $\mathtt{s}(Q):=s$.

 \vspace{0.1cm} 
 \noindent In an abuse of notation, we'll say $Q\in M$ when every element of $Q$ is in $M$. 

 \vspace{0.1cm} 
 \noindent \emph{Recall that $\mathtt{Tval}(b)$ is the sequence of transactions used to determine who should vote on $b$. So, this sequence determines the `validating stake'. The conditions above stipulate that votes from 2/3 of the validating stake are required to form a QC.}

\vspace{0.2cm} 
\noindent \textbf{Defining the confirmation rule}. A block $b$ is $M$-confirmed if there exists a descendant $b'$ of~$b$ with $\mathtt{e}(b')=\mathtt{e}(b)$,  such that $b'$ is $M$-valid with stage 1, 2 and 3 QCs in $M$, and if it also holds that $b$ is of height $\leq (\mathtt{e}(b)+1)x$. Let $b$ be of greatest height amongst the $M$-confirmed blocks. The sequence of $M$-confirmed transactions is $\mathtt{Tr}(b)$.\footnote{Once an honest player $p$ sees that the recovery procedure has been triggered and has terminated, outputting a new genesis block $b_g'$, it is technically convenient for $p$ to regard $b_g'$ as the only confirmed block. If a consistency violation occurs in some least epoch $e$, then $b_g'$ will contain all transactions confirmed by the end of epoch $e-1$.}  

For any block $b$ other than $b_g$ to be $M$-valid we require that: 
\begin{itemize}
\item If $\mathtt{e}(b)>\mathtt{e}(\mathtt{par}(b))$, then $\mathtt{par}(b)$ is $M$-confirmed. 
\end{itemize}

\vspace{0.2cm} 
\noindent \textbf{$M$-validity}. A block is $M$-valid if it satisfies all of the conditions for $M$-validity listed above (in this section). 

\vspace{0.2cm} 
\noindent \textbf{Defining certificates of guilt}. A certificate of guilt for epoch $e$ (considered a special form of transaction) is a pair of QCs $(Q,Q')$ such that: 
\begin{itemize} 
\item  $Q$ is a stage 2 QC for some $b$ with $\mathtt{e}(b)=e$ and $\mathtt{v}(b)=v$ (say);  
\item  $Q'$ is a stage 1 QC for some $b'$ with $\mathtt{e}(b')=e$ and $\mathtt{v}(b')=v'$ (say);
\item $b'$ is incompatible with $b$;
\item $v'\geq v$ and $\mathtt{vprev}(Q')<v$.  
\end{itemize} 
\noindent A certificate of guilt $(Q,Q')$ \emph{implicates} $id$ if there exist votes $V\in Q$, $V'\in Q'$ with $\mathtt{id}(V)=\mathtt{id}(V')=id$. 
%If $\mathtt{r}(b)>\mathtt{r}(\mathtt{par}(b))$ then, for $b$ to be $M$-valid, we require that: 
%\begin{itemize}  
%\item $\mathtt{T}(b)$ contains a certificate of guilt for epoch $\mathtt{e}(b)$ and round $\mathtt{r}(\mathtt{par}(b))$.
%\end{itemize}

 \vspace{0.2cm} 
\noindent \textbf{The local variable $Q^+$}. This plays a similar role as in Section \ref{Treview}. Initially $Q^+:=\emptyset$, i.e.\ $Q^+$ is set to be a stage 1 QC for $b_g$.

\vspace{0.2cm} 
\noindent \textbf{The local variable $\mathtt{e}$}. The value $\mathtt{e}$  records the current epoch  for $p$. Initially, $\mathtt{e}=0$. 

\vspace{0.2cm} 
\noindent \textbf{The local value $\mathtt{rec}$}. This is initially 0, and is set to 1 to indicate that $p$
should start executing instructions for the recovery procedure.

\vspace{0.2cm} 
\noindent \textbf{The $\mathtt{eupdate}$ procedure}.  The following procedure called $\mathtt{eupdate}$ is responsible for updating $\mathtt{e}$: 
\begin{enumerate} 
\item If $\mathtt{rec}=0$, then let $e$ be greatest such that there exists a block $b$ which is $M$-confirmed and epoch $e$ ending, or, if there exists no such $e$, let $e=-1$. Set $\mathtt{e}:=e+1$. If this is the first timeslot at which $\mathtt{e}=e+1$, then we say \emph{epoch $e+1$ begins at $t$ for $p$}. 
  
\item If $\mathtt{rec}=1$ and $M$ contains a certificate of guilt for epoch $\mathtt{e}-1$, set $\mathtt{e}:=\mathtt{e}-1$. 

\end{enumerate}

\vspace{0.1cm} 
\noindent  \emph{(1) above specifies the present epoch for $p$ before the recovery procedure is triggered. This is defined in the obvious way: $p$ looks to see which epoch it has seen completed and begins work on the next epoch. (2) updates the epoch when the recovery procedure is triggered because of a consistency violation in epoch $\mathtt{e}-1$. If the recovery procedure is triggered because of a consistency violation in epoch $\mathtt{e}$, there is no need to update $\mathtt{e}$.}

\vspace{0.2cm} 
\noindent \textbf{When $p$ is ready to enter the recovery procedure}. We say $p$ is ready to enter the recovery procedure if either: 
\begin{itemize}
\item   $M$ contains a certificate of guilt for epoch $\mathtt{e}-1$, or; 
\item Epoch $\mathtt{e}$ began for $p$ at a timeslot $t'<t-2\Delta^*$, and $M$ contains a certificate of guilt for epoch~$\mathtt{e}$.
\end{itemize} 

\vspace{0.1cm} 
\noindent \emph{The conditions above specify when the recovery procedure is triggered,  and stipulate that one must wait time $2\Delta^*$ after entering epoch $e$ before triggering the recovery procedure because of a consistency violation in epoch $e$.  This is to give sufficient time for a consistency violation in the previous epoch to be revealed, and to ensure that the recovery procedure is only triggered because of a consistency violation in a single epoch.}

%\vspace{0.2cm} 
%\noindent \textbf{When $p$ is ready to enter round $\mathtt{r}+1$ for $\mathtt{r}>0$}. When $\mathtt{r}>0$, we say $p$ is ready to enter round $\mathtt{r}+1$ if there exists an $M$-confirmed block $b$ with $\mathtt{r}(b)=\mathtt{r}$. 

\vspace{0.2cm} 
\noindent \textbf{The function $\mathtt{Qset}$}. This function is used  to update $Q^+$ after entering a new epoch, and is defined as follows. Amongst the $M$-confirmed and epoch $\mathtt{e}-1$ ending blocks, choose $b$ such that $\mathtt{v}(b)$ takes the greatest value.  Let $Q$ be some stage 1 QC for $b$ and define $\mathtt{Qset}(M,\mathtt{e}):=Q$.

\vspace{0.2cm} 
\noindent \textbf{Defining} $\mathtt{lead}(M,\mathtt{e},v)$. Player $p$'s belief as to who should be the leader for view $v$ depends on their values $M$ and $\mathtt{e}$. If there does not exist a unique $b\in M$ which is $M$-confirmed and epoch $\mathtt{e}-1$ ending, then $\mathtt{lead}(M,\mathtt{e},v)$ is undefined.\footnote{Generally, we write $x\uparrow$ to indicate that the variable $x$ is undefined, and we write $x\downarrow$ to indicate that $x$ is defined.} Otherwise, let $b$ be such a block and set $\mathtt{T}:=\mathtt{Tr}(b)$ and $v:=\mathtt{v}(b)$. Recall our assumption that, for some determined $x^*\in \mathbb{N}_{>0}$, if  $\mathtt{S}(\mathtt{S}^*,\mathtt{T},id)>0$ then $\mathtt{S}(\mathtt{S}^*,\mathtt{T},id)=N/x^*$. Let $id_0,\dots,id_{k-1}$ be an enumeration of the identifiers $id$ such that  $\mathtt{S}(\mathtt{S}^*,\mathtt{T},id)>0$. For $i\in \mathbb{N}_{>0}$, we define $\mathtt{lead}(M,\mathtt{e},v+i)$ to be the identifier $id_j$, where $j= i \text{ mod } k$.

\vspace{0.2cm} 
\noindent \textbf{Admissible blocks}. When $p$ determines whether to vote on a block $b$ in view $v$, it will only do so if the block is \emph{admissible}. At any point,  $p$ regards $b$ as admissible for view $v$ if all of the following conditions are satisfied: 
\begin{itemize} 
\item $b$ is $M$-valid; 
\item $\mathtt{v}(b)=v$ and $b$ is signed by $\mathtt{lead}(M,\mathtt{e},v)$; 
\item $\mathtt{v}(Q^+)\leq \mathtt{v}(\mathtt{QCprev}(b))$;
\item $\mathtt{e}(b)=\mathtt{e}$;
%\item $\mathtt{Tr}(b)$ contains all transactions seen by $p$ before time $t-2\Delta$.
\end{itemize}

\vspace{0.2cm} 
\noindent \textbf{The $\mathtt{disseminate\  new \ block}$ procedure}. If $\mathtt{lead}(M,\mathtt{e},v)$ (as locally defined for $p$) is an identifier $id\in \mathtt{id}(p)$ then, at the beginning of view $v$, $p$ proceeds as follows in order to specify and disseminate a new block: 
\begin{enumerate} 
\item Amongst the $M$-valid $b'$ with a stage 1 QC in $M$ such that $\mathtt{e}(b')=\mathtt{e}$, choose $b'$ such that $\mathtt{v}(b')$ is maximal and set $Q$ to be a stage 1 QC for $b'$ in $M$. Or, if there exists no such block in $M$, set $Q:=\mathtt{Qset}(M,\mathtt{e})$ and let $b':=\mathtt{b}(Q)$. 
\item Let $\mathtt{T}$ be the set of transactions in
  $\mathtt{T}^*-\mathtt{Tr}(b')$. Let $b$ be a new block with
  $\mathtt{e}(b)=\mathtt{e}$, $\mathtt{v}(b)=v$, $\mathtt{par}(b)=b'$,
  $\mathtt{QCprev}(b)=Q$, $\mathtt{T}(b)$ the transactions of
  $\mathtt{T}$ (in some arbitrary order and, if appropriate, concluding
  with a transaction indicating the end of epoch~$\mathtt{e}$),
 and which satisfies all conditions for $M$-validity
  (note that these conditions suffice to specify $\mathtt{Tval}(b)$ and $\mathtt{h}(b)$). 

\item Disseminate $b$, signed by $id$. 
\end{enumerate}

\vspace{0.2cm} 
\noindent \textbf{The pseudocode}. The pseudocode is described in Algorithm 1.

\subsection{The formal specification of the recovery procedure} \label{rp} 

Upon setting $\mathtt{rec}:=1$, indicating that the recovery procedure
is triggered, player $p$ will run the $\mathtt{eupdate}$ procedure,
which will set $\mathtt{e}$ to be the least epoch in which a
consistency violation occurs. The aim of the recovery procedure is
then to reach consensus on an updated genesis block $b_g'$ 
at which the slashing instructions have been carried out.
%for the next iteration of the protocol. 
To specify $b_g'$, one need only specify
the transactions in $b_g'$, although honest players will also produce
votes for $b_g'$, ensuring that it is uniquely determined.

\vspace{0.2cm} 
\noindent \textbf{$M$-admissible genesis proposals}.  Let $M$ and $\mathtt{e}$ be as locally defined for $p$ during the recovery procedure.  Let $b$ be an epoch $\mathtt{e}-1$ ending block that is $M$-confirmed, and set $\mathtt{T}:=\mathtt{Tr}(b)$. A genesis proposal $(b_g',i)$ is $M$-admissible if $\mathtt{T}(b_g')=\mathtt{T}\ast \mathtt{G}\ast \mathtt{tr}$, where $\mathtt{G}$ is a  certificate of guilt for epoch $\mathtt{e}$ and $\mathtt{tr}$ is an epoch transaction. The second coordinate $i$ in the genesis proposal $(b_g',i)$ indicates that the proposal is made during the $i^{\text{th}}$ instance of the Dolev-Strong protocol. 

\vspace{0.2cm} 
\noindent \textbf{Defining} $\mathtt{reclead}(M,i)$ \textbf{and} $\mathtt{signed}(M,i,j)$. We use a similar notation for signed messages as in Section \ref{DSreview}.  For any message $m$, and for distinct identifiers $id_1,\dots,id_{j}$, we let $m_{id_1,\dots,id_j}$ be $m$ signed by $id_1,\dots,id_j$ in order.
As above, let  $M$ and $\mathtt{e}$ be as locally defined for $p$ during the recovery procedure.  Let $b$ be an epoch $\mathtt{e}-1$ ending block that is $M$-confirmed, and set $\mathtt{T}:=\mathtt{Tr}(b)$. Let  $id_0,\dots,id_{k-1}$ be an enumeration of the identifiers $id$ such that  $\mathtt{S}(\mathtt{S}^*,\mathtt{T},id)>0$ and set $\mathtt{Id}(M):= \{ id_0,\dots,id_{k-1} \}$. For $i\in \mathbb{N}_{\geq 0}$, we define $\mathtt{reclead}(M,i)$ to be the identifier $id_j$, where $j= i \text{ mod } k$. This value specifies the leader for the $i^{\text{th}}$ instance of the Dolev-Strong protocol.

For  $i\in \mathbb{N}_{\geq 0}$ and $j\in \mathbb{N}_{\geq 1}$, we define $\mathtt{signed}(M,i,j)$ to be the set of all messages in $M$ which are of the form $m_{id_1',\dots, id_j'}$, such that $m$ is an $M$-admissible genesis proposal $(b_g',i)$ and $id_1',\dots,id_j'$ are all distinct members of $\mathtt{Id}(M)$, with $id_1'=\mathtt{reclead}(M,i)$.

\vspace{0.2cm} 
\noindent \textbf{The variable $O_p$}. This plays the same role as in Section \ref{DSreview}.

\vspace{0.2cm} 
\noindent \textbf{Votes for the final output}. Honest players will
disseminate \emph{output votes} for a value $b_g'$ that may be used as
an updated genesis block.  These output
votes $V$ are entirely specified by three values $\mathtt{b}(V)$,
$\mathtt{c}(V)$ and $\mathtt{id}(V)$, and must be signed by
$\mathtt{id}(V)$ to be valid.

\vspace{0.2cm} 
\noindent \textbf{The pseudocode}. The pseudocode is described in Algorithm 2.

\begin{algorithm} 
\caption{The instructions for player $p$ before the recovery procedure is triggered.}
\begin{algorithmic}[1]

            \State \textbf{At timeslot $1$:}

                 \State \hspace{0.3cm} Set $\mathtt{end}:=0$,  $\mathtt{rec}:=0$;   

    \State

            \State \textbf{At every timeslot $t$ if $\mathtt{end}=0$:}

         \State \hspace{0.3cm}  $\mathtt{eupdate}$;       \Comment{Defined in Section \ref{formal}, responsible for updating $\mathtt{e}$}

               \State \hspace{0.3cm} \textbf{If} epoch $\mathtt{e}$ begins at $t$ for $p$ \textbf{then} set $Q^+:=\mathtt{Qset}(M,\mathtt{e})$;    \Comment{Set $\mathtt{Q}^+$ at start of epoch}

         \State \hspace{0.3cm} \textbf{If} $p$ is ready to enter the recovery procedure \textbf{then}:  \Comment Condition defined in Section \ref{formal}
         
         \State \hspace{0.6cm}  Set $\mathtt{rec}:=1$ and perform $\mathtt{eupdate}$;            \Comment{Start recovery procedure}
            \State \hspace{0.6cm} Set $\mathtt{end}:=1$.

            \State

            \State \textbf{At timeslot $4v\Delta$ for $v\in \mathbb{N}_{>0}$ if $\mathtt{end}=0$}: 
            
             \State \hspace{0.3cm} \textbf{If} $\mathtt{lead}(M,\mathtt{e},v)\in \mathtt{id}(p)$ \textbf{then} $\mathtt{disseminate\ new\ block}$;     \Comment Disseminate a new block

          \State 

            \State \textbf{At timeslot $4v\Delta + \Delta$ for $v\in \mathbb{N}_{>0}$ if $\mathtt{end}=0$}:

             \State \hspace{0.3cm} Set $\mathtt{b}^*$ to be undefined; 

            \State \hspace{0.3cm} \textbf{If} there exists a first $b$ enumerated into $M$ which is admissible for view $v$ \textbf{then}:

            \State \hspace{0.6cm} Set $\mathtt{b}^*:=b$ and $\mathtt{T}:=\mathtt{Tval}(b)$, $Q:=\mathtt{QCprev}(b)$;

              \State  \hspace{0.6cm}  For each $id\in \mathtt{id}(p)$ such that $\mathtt{S}(\mathtt{S}^*,\mathtt{T},id)>0$: \Comment{Disseminate stage 1 votes}
        
       \State  \hspace{0.9cm}         Let $c:= \mathtt{S}(\mathtt{S}^*,\mathtt{T},id)$;  
        
              \State  \hspace{0.9cm}   Disseminate $V$ with $\mathtt{b}(V)=\mathtt{b}^*$, $\mathtt{c}(V)=c$, $\mathtt{s}(V)=1$, $\mathtt{id}(V)=id$, $\mathtt{vprev}(V)=\mathtt{v}(Q)$;

       \State 

            \State \textbf{At timeslot $4v\Delta + 2\Delta$ for $v\in \mathbb{N}_{>0}$ if $\mathtt{end}=0$}:

             \State \hspace{0.3cm}  \textbf{If} $\mathtt{b}^*\downarrow $ and $M$ contains a stage 1 QC for $\mathtt{b}^*$  \textbf{then}:

                \State \hspace{0.6cm} Set $Q^+$ to be a stage 1 QC for $\mathtt{b}^*$ in $M$; \Comment Set new lock

              \State  \hspace{0.6cm}  For each $id\in \mathtt{id}(p)$ such that $\mathtt{S}(\mathtt{S}^*,\mathtt{T},id)>0$: \Comment{Disseminate stage 2 votes}
        
       \State  \hspace{0.9cm}         Let $c:= \mathtt{S}(\mathtt{S}^*,\mathtt{T},id)$;  
        
              \State  \hspace{0.9cm}   Disseminate $V$ with $\mathtt{b}(V)=\mathtt{b}^*$, $\mathtt{c}(V)=c$, $\mathtt{s}(V)=2$, $\mathtt{id}(V)=id$, $\mathtt{vprev}(V)=\mathtt{v}(Q)$;   
            
       \State 

            \State \textbf{At timeslot $4v\Delta + 3\Delta$ for $v\in \mathbb{N}_{>0}$ if $\mathtt{end}=0$}:

             \State \hspace{0.3cm}  \textbf{If} $\mathtt{b}^*\downarrow $ and $M$ contains a stage 2 QC for $\mathtt{b}^*$  \textbf{then}:

              \State  \hspace{0.6cm}  For each $id\in \mathtt{id}(p)$ such that $\mathtt{S}(\mathtt{S}^*,\mathtt{T},id)>0$: \Comment{Disseminate stage 3 votes}
        
       \State  \hspace{0.9cm}         Let $c:= \mathtt{S}(\mathtt{S}^*,\mathtt{T},id)$;  
        
              \State  \hspace{0.9cm}   Disseminate $V$ with $\mathtt{b}(V)=\mathtt{b}^*$, $\mathtt{c}(V)=c$, $\mathtt{s}(V)=3$, $\mathtt{id}(V)=id$, $\mathtt{vprev}(V)=\mathtt{v}(Q)$;

\end{algorithmic}
\end{algorithm}

\begin{algorithm} 
\caption{Recovery procedure instructions for $p$: to be carried out when $\mathtt{rec}=1$.}
\begin{algorithmic}[1]

   \State \textbf{At timeslot $4\Delta((\mathtt{e}+2)x+1)$}:     \Comment Initialization
   
   \State \hspace{0.3cm}  Set $\mathtt{recend}:=0$;

   \State \hspace{0.3cm} Let $b$ be epoch $\mathtt{e}-1$ ending and $M$-confirmed; Set $\mathtt{T}:=\mathtt{Tr}(b)$;        
   
      \State  \hspace{0.3cm} Set $k=|\{ id:\ \mathtt{S}(\mathtt{S}^*,\mathtt{T},id)>0 \}|$; \Comment Number of participants
      
         \State  \hspace{0.3cm}   Set $t_i:=4\Delta((\mathtt{e}+2)x+1) +(k+1)i\Delta^*$, $i\in \mathbb{N}_{\geq 0}$;   \Comment First instance of Dolev-Strong begins at $t_0$
   
   \State

  \State \textbf{At timeslot $t_i$ for $i\in \mathbb{N}_{\geq 0}$ if $\mathtt{recend}=0$}:     \Comment  Begin $i^{\text{th}}$ instance of Dolev-Strong
  
  \State \hspace{0.3cm} Set $O_{p}=\emptyset$; 
  
  \State \hspace{0.3cm} \textbf{If} $\mathtt{reclead}(M,i)\in \mathtt{id}(p)$ \textbf{then}:  \Comment Propose updated genesis block
  
  \State \hspace{0.6cm}  Disseminate a genesis proposal $(b_g',i)$ signed by $\mathtt{reclead}(M,i)$ which is $M$-admissible;   
  
\State 

  \State \textbf{At timeslot $t_i +j\Delta^*$ for $i\in \mathbb{N}_{\geq 0}$ and $j\in [1,k]$  if $\mathtt{recend}=0$}:   
  
\State \hspace{0.3cm} For each message $m=y_{id_1',\dots ,id_j'}$ in $\mathtt{signed}(M,i,j)$:  \Comment  Multi-signed $M$-admissible proposals

\State \hspace{0.6cm}  \textbf{If} $y\notin O_p$ \textbf{and} $j<k$: 

\State \hspace{0.9cm} For each  $id\in \mathtt{id}(p)$ such that $\mathtt{S}(\mathtt{S}^*,\mathtt{T},id)>0$,  disseminate $m_{id}$;

\State \hspace{0.6cm}  \textbf{If} $y\notin O_p$, enumerate $y$ into $O_p$;

    \State 
    
      \State \textbf{At timeslot $t_i +k\Delta^*$ if $\mathtt{recend}=0$}:    \Comment End of $i^{\text{th}}$ instance of Dolev-Strong

    \State  \hspace{0.3cm}   \textbf{If} $O_p$ contains a single value $(b_g',i)$ \textbf{then}:      \Comment Send votes for updated genesis block

              \State  \hspace{0.6cm}  For each $id\in \mathtt{id}(p)$ such that $\mathtt{S}(S^*,\mathtt{T},id)>0$: 
        
       \State  \hspace{0.9cm}         Let $c:= \mathtt{S}(\mathtt{S}^*,\mathtt{T},id)$;  
        
              \State  \hspace{0.9cm}   Disseminate output vote $V$ with $\mathtt{b}(V)=b_g'$, $\mathtt{c}(V)=c$,  $\mathtt{id}(V)=id$;

\State  \hspace{0.6cm}  Set $\mathtt{recend}:=1$.         \Comment{Terminate}

\end{algorithmic}
\end{algorithm}

\subsection{Certificates of guilt and consistency} \label{posproof1}

In this section, we show that the protocol is consistent for $\rho$-bounded adversaries when $\rho<1/3$.  In the next section, we show liveness. 

\vspace{0.2cm}
\noindent \textbf{Further terminology}. We let $\mathcal{M}^*$ be the set of all messages disseminated during an execution. If $id\in \mathtt{id}(p)$ and $p$ is Byzantine, then we say $id$ is Byzantine. 

\begin{lemma} \label{lem1}
If a certificate of guilt in $\mathcal{M}^*$ implicates $id$, then $id$ is Byzantine. 
\end{lemma}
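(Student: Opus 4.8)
The plan is to argue by contradiction: suppose the certificate of guilt $(Q,Q')$ implicates an \emph{honest} identifier $id$, and derive that the honest player controlling $id$ would have violated the protocol. Unpacking the definition of implication, there is a stage~2 vote $V\in Q$ and a stage~1 vote $V'\in Q'$, both signed by $id$, where $Q$ is a stage~2 QC for some $b$ with $\mathtt{v}(b)=v$ and $Q'$ is a stage~1 QC for an incompatible $b'$ with $\mathtt{v}(b')=v'$, and where $v'\geq v$ and $\mathtt{vprev}(Q')<v$. Recalling that $\mathtt{vprev}(Q')=\mathtt{v}(\mathtt{QCprev}(b'))$, the goal is to show that no single honest player can cast both of these votes.

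The first key step is to pin down \emph{when} an honest player casts each vote and what this forces about its local lock $Q^+$. By Algorithm~1, an honest player disseminates a stage~$s$ vote for a block of view $w$ only during view $w$ itself (at timeslots $4w\Delta+\Delta$, $4w\Delta+2\Delta$, $4w\Delta+3\Delta$ for $s=1,2,3$). Hence $V$ is cast at $4v\Delta+2\Delta$ and $V'$ at $4v'\Delta+\Delta$. Moreover, since honest players direct all votes in a view at a single admissible block $\mathtt{b}^*$, we must have $\mathtt{b}^*=b$ when casting $V$, and the pseudocode sets $Q^+$ to a stage~1 QC for $b$ at that very timeslot, giving $\mathtt{v}(Q^+)=v$. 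On the other hand, to cast the stage~1 vote $V'$ the honest player must have regarded $b'$ as \emph{admissible} for view $v'$, which by definition requires $\mathtt{v}(Q^+)\leq \mathtt{v}(\mathtt{QCprev}(b'))=\mathtt{vprev}(Q')<v$ at timeslot $4v'\Delta+\Delta$.

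The second key step is a monotonicity observation. Admissibility of both $b$ and $b'$ forces $\mathtt{e}(b)=\mathtt{e}(b')$ to equal the player's local epoch, and since (pre-recovery) the local epoch is nondecreasing, the honest player is in a single epoch $e$ when casting both votes; in particular the epoch-start reassignment $Q^+:=\mathtt{Qset}(M,\mathtt{e})$ is not invoked between them. Every in-epoch update of $Q^+$ occurs during some view $w$ and sets $\mathtt{v}(Q^+)=\mathtt{v}(\mathtt{b}^*)=w$, so $\mathtt{v}(Q^+)$ is nondecreasing over time within the epoch. I then split on $v$ versus $v'$. If $v'>v$, then $4v'\Delta+\Delta>4v\Delta+2\Delta$, so by the time $V'$ is cast the lock already satisfies $\mathtt{v}(Q^+)\geq v$ (it was set to $v$ in view $v$ and cannot have decreased), contradicting the admissibility requirement $\mathtt{v}(Q^+)<v$. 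If $v'=v$, then both votes are cast in the same view, in which all of the honest player's stage~1, 2, and 3 votes target the same $\mathtt{b}^*$; hence $b=\mathtt{b}^*=b'$, contradicting the incompatibility (and distinctness) of $b$ and $b'$. Either way we reach a contradiction, so $id$ is Byzantine.

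I expect the main obstacle to be the bookkeeping around the lock $Q^+$: establishing rigorously that $\mathtt{v}(Q^+)$ cannot decrease between the two votes (which relies on both votes lying within one epoch, so that no $\mathtt{Qset}$ reset intervenes) and aligning the voting timeslots so that the stage~2 vote in view $v$ strictly precedes the stage~1 vote in view $v'$ when $v'>v$. This is essentially the locking/quorum-intersection argument used for Tendermint consistency in Section~\ref{Treview}, repackaged here to identify a specific misbehaving identifier rather than merely to derive a global contradiction.
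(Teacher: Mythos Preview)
Your proof is correct and follows essentially the same approach as the paper's: a case split on $v'=v$ versus $v'>v$, using the one-block-per-view rule in the first case and the locking mechanism in the second. Your version is more careful than the paper's in explicitly arguing that $\mathtt{v}(Q^+)$ cannot decrease between the two votes (via the same-epoch observation ruling out a $\mathtt{Qset}$ reset), whereas the paper leaves this implicit.
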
 
\begin{proof}
Let $(Q,Q')$ be a certificate of guilt for epoch $e$, and let $b,b',v,v'$ be as specified in the definition of certificates of guilt in Section \ref{formal}. Suppose  there exist votes $V\in Q$ and $V'\in Q'$ with $\mathtt{id}(V)=\mathtt{id}(V')=id$ and let $p$ be such that $id\in \mathtt{id}(p)$. If $v=v'$, then the claim is immediate because honest players do not vote for two different blocks in the same view. If $v'>v$, then the claim follows because, if $p$ is honest, then it must set $Q^+:=v$ upon disseminating $V$ while in view~$v$, and then could not disseminate $V'\in Q'$ because $\mathtt{vprev}(Q')<v$, i.e. $p$ would not regard $\mathtt{b}(Q')$ as admissible when voting in view $v'$.  
\end{proof}

Next, recall that $\rho^*<2/3$. In particular, every quorum certificate must
contain at least one vote from an honest player. Thus, for $s=2,3$, no
stage $s$ QC can be formed for a block~$b$ without some honest player
previously seeing a stage $s-1$ QC for~$b$. Similarly, no stage 1 QC
can be formed for a block~$b$ without some honest player seeing a
stage 1 QC for $b$'s parent.

\begin{lemma} \label{lem2} 
Suppose two incompatible blocks for epoch $e$  are both $\mathcal{M}^*$-confirmed. Then $\mathcal{M}^*$ contains a certificate of guilt for epoch $e$. Let $t$ be the first timeslot at which any honest player enters epoch $e+1$ or sets their local value $\mathtt{rec}:=1$ (i.e. starts the recovery procedure).  If message delay prior to GST is bounded by $\Delta^*$, then all active honest players receive a certificate of guilt for epoch $e$ by time $t+2\Delta^*$. 
\end{lemma}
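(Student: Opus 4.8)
The plan is to follow the consistency argument for permissioned Tendermint from Section~\ref{Treview}, but rather than deriving a contradiction from an assumed double-confirmation, to \emph{extract} from it an explicit certificate of guilt and then bound the time taken for that certificate to reach every active honest player. I would split the proof into two parts, matching the two sentences of the statement. Throughout I use that $\rho^*<2/3$ forces at least one honest vote into every QC, so no stage-$s$ QC ($s=2,3$) can appear without some honest player having previously seen a stage-$(s-1)$ QC for the same block, and no stage-$1$ QC can appear without an honest player seeing a stage-$1$ QC for the parent.

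For the existence of a certificate of guilt, I would first reduce to the case of two incompatible epoch-$e$ blocks each carrying stage~1, 2 and 3 QCs: each $\mathcal{M}^*$-confirmed epoch-$e$ block has, by the confirmation rule, an epoch-$e$ descendant bearing all three QCs, and since the ancestors of any block form a chain, these two confirming descendants are themselves incompatible (otherwise both original blocks would lie on a single ancestor chain and be comparable). I would then let $v$ be least such that some epoch-$e$ block $b$ with $\mathtt{v}(b)=v$ and all three QCs has an incompatible epoch-$e$ partner with all three QCs and view $\ge v$, and let $v''$ be the least view $\ge v$ of an epoch-$e$ block $b''$ incompatible with $b$ whose stage~1 QC $Q_1$ is seen by some honest player before that player advances past epoch~$e$ (i.e.\ enters epoch $e+1$ or sets $\mathtt{rec}:=1$). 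The incompatible partner of $b$ witnesses that this set is non-empty, so $v''$ exists and $v''$ is at most that partner's view. The stage~3 QC for $b$ yields an honest player $p_2$ who saw a stage~2 QC $Q_2$ for $b$ during view~$v$, before advancing past epoch~$e$. I would then verify that $(Q_2,Q_1)$ satisfies every clause of the definition of a certificate of guilt for epoch~$e$.

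The crux is the clause $\mathtt{vprev}(Q_1)<v$, and this is where I expect the real work to lie. I would argue by minimality of $v''$: if $\mathtt{vprev}(Q_1)=w\ge v$, then $\mathtt{QCprev}(b'')$ is a stage~1 QC for $\mathtt{par}(b'')$ at view $w<v''$ (using that a block's view strictly exceeds its parent's, a consequence of validity together with the honest participation guaranteed by $\rho^*<2/3$); this QC was seen, before advancing past epoch~$e$, by the honest player who cast a stage~1 vote for $b''$; and $w\ge v$ forces $\mathtt{par}(b'')$ to lie in epoch~$e$ and to be incompatible with $b$ (if $\mathtt{par}(b'')$ were a proper ancestor of, or equal to, $b$ its view would be $<v$, while if $b$ were an ancestor of $\mathtt{par}(b'')$ then $b$ would be an ancestor of $b''$, contradicting their incompatibility). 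This exhibits an epoch-$e$ block incompatible with $b$, with a stage~1 QC seen by an honest player before advancing past epoch~$e$, at view $w<v''$ with $w\ge v$, contradicting the minimality of $v''$. Hence $\mathtt{vprev}(Q_1)<v$, completing the certificate.

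For prompt dissemination, let $t$ be the first timeslot at which any honest player advances past epoch~$e$. That player does so having received, by time $t$, a set of messages (confirming QCs, or a certificate of guilt) each of which, by causality, was disseminated by time~$t$; since message delays are at most $\Delta^*$ before GST, every active honest player receives those same messages, and therefore also advances past epoch~$e$, by time $t+\Delta^*$. In particular the players $p_1$ and $p_2$ identified above (both active honest validators for epoch~$e$, hence active by the quasi-permissionless assumption) each advance past epoch~$e$ by $t+\Delta^*$, and each saw its QC ($Q_1$, resp.\ $Q_2$) strictly before doing so; thus $Q_1$ and $Q_2$ were both disseminated by time $t+\Delta^*$, and one further application of the $\Delta^*$ delay bound shows that every active honest player receives both QCs, and hence the certificate of guilt $(Q_2,Q_1)$, by time $t+2\Delta^*$. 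The only points demanding care here are the bookkeeping that ``received by $t$'' implies ``disseminated by $t$'' and the consistent use of ``advances past epoch~$e$'' in place of the informal ``enters epoch $e+1$'' used in the overview of Section~\ref{furtherc}.
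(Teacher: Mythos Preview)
Your proposal is correct and follows essentially the same approach as the paper's proof: choose the least view~$v$ with a fully-certified epoch-$e$ block~$b$ having an incompatible fully-certified partner, define~$v''$ as the least view~$\ge v$ of an incompatible epoch-$e$ block whose stage~1 QC is seen by an honest player before advancing past epoch~$e$, take~$Q_2$ from the honest stage-3 voter for~$b$, and use the $\Delta^*$ bound twice for dissemination. Your treatment is in fact more explicit than the paper's on two points the paper leaves terse---the reduction from ``confirmed'' to ``has all three QCs'' via the confirming descendants, and the minimality argument establishing $\mathtt{vprev}(Q_1)<v$ (which the paper compresses to ``by our choice of $v''$'')---but the underlying argument is the same.
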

\begin{proof} 
 In the argument that follows, we restrict attention to blocks $b$ with $\mathtt{e}(b)=e$.  Given the conditions in the statement of the lemma,  there must exist a least $v$, a least $v'\geq v$, and incompatible blocks $b$ and $b'$ which both receive stage 1, 2 and 3 QCs in $\mathcal{M}^*$,  such that $\mathtt{v}(b)=v$ and $\mathtt{v}(b')=v'$.
% \begin{itemize} 
% \item There exists $b$ with $\mathtt{e}(b)=e$, $\mathtt{r}(b)=r$ and $\mathtt{v}(b)=v$ which receives stage 1, 2 and 3 QCs in $\mathcal{M}^*$;
% \item There exists  $b'$  with $\mathtt{e}(b')=e$, $\mathtt{r}(b')=r$ and $\mathtt{v}(b')=v'$ which receives stage 1, 2 and 3 QCs in $\mathcal{M}^*$;
% \item $b'$ is incompatible with $b$.
% \end{itemize} 
 Let $v''$ be the least view $\geq v$ such that some block $b''$ that
 is incompatible with $b$ (with $\mathtt{v}(b'')=v''$) receives a
 stage 1 QC, $Q_1$ say,  that is seen by some honest player $p_1$
 before they enter epoch $e+1$ or start the recovery procedure. 
As~$b'$ is one such block, $v''\leq v'$. Let $Q_2$ be a stage 2 QC for $b$ that is seen by an honest player $p_2$ before disseminating a stage 3 vote during view $v$. Then $(Q_2,Q_1)$ is a certificate of guilt for epoch $e$ since, by our choice of $v''$, $\mathtt{vprev}(Q_1)<v$. 

It remains to show that $Q_1$ and $Q_2$ will be seen by all active honest players by time $t+2\Delta^*$.
%To this end, note that (by our choice of $v$ and $v'$) $4v\Delta +2\Delta<t$.    Since some honest player sees $Q_2$ by time $4v\Delta +2\Delta$, $Q_2$ must be seen by all active honest players by time $4v\Delta +2\Delta +\Delta^*$. 
 If  some first honest player enters epoch $e+1$ or starts the
 recovery procedure at timeslot $t$, then all active honest players do
 so by  $t+\Delta^*$. This means $p_i$ (for $i\in \{ 1,2 \}$, as
 specified above) must enter epoch $e+1$ or start the recovery
 procedure by $t+\Delta^*$.
Player $p_i$ must therefore have seen $Q_i$ before $t+\Delta^*$, and so $Q_i$ must be seen by all active honest players by time $t+2\Delta^*$. All active honest players therefore see $Q_1$ and $Q_2$ by time  $t+ 2\Delta^*$, as required.  
\end{proof} 

\vspace{0.2cm} 
\noindent \textbf{The validating stake}. Suppose there exists a unique $\mathcal{M}^*$-confirmed block $b$ that is epoch $e$ ending. By the \emph{validating stake} for epoch $e+1$, we mean the set of identifiers $id$ such that $\mathtt{S}(\mathtt{S}^*,\mathtt{Tr}(b),id)>0$. 

\begin{lemma} \label{lem3} 
PosT is consistent for $\rho$-bounded adversaries when $\rho<1/3$.
\end{lemma}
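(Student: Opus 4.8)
The plan is to reduce consistency to the single statement that no two incompatible blocks are ever $\mathcal{M}^*$-confirmed, and to prove \emph{that} by an epoch-by-epoch induction whose inductive step is a weighted, three-stage version of the Tendermint safety argument reviewed in Section~\ref{Treview}. The fact that drives everything is quorum intersection: since a stage-$s$ QC for a block $b$ requires votes carrying at least $\tfrac23\mathtt{N}(b)$ of the epoch's validating stake, any two QCs over the same validating-stake set share at least $\tfrac23\mathtt{N}(b)+\tfrac23\mathtt{N}(b)-\mathtt{N}(b)=\tfrac13\mathtt{N}(b)$ of that stake. Because Theorem~\ref{posres} is stated for the quasi-permissionless setting, every honest identifier with validating stake is active (Section~\ref{hierarchy}), and because the adversary is $\rho$-bounded with $\rho<1/3$ relative to the confirmed transactions defining $\mathtt{Tval}(b)$ (Section~\ref{protdef}), the Byzantine validating stake is strictly below $\tfrac13\mathtt{N}(b)$. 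Hence the two QCs must share a vote from an \emph{honest} identifier. I would state this as a preliminary claim and reuse it throughout.

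I would then prove the key intra-epoch safety claim: fixing an epoch $e$ whose validating stake has Byzantine fraction $<1/3$, if some epoch-$e$ block $b$ with $\mathtt{v}(b)=v$ has a stage~2 QC, then every epoch-$e$ block $b'$ carrying a stage~1 QC with $\mathtt{v}(b')\ge v$ is a descendant of $b$. The proof is strong induction on $\mathtt{v}(b')=v'$. For $v'=v$, quorum intersection forces an honest identifier to stage-1-vote for both $b$ and $b'$ in view $v$; since each honest player stage-1-votes for at most one block per view and the leader is unique, $b'=b$. For $v'>v$, quorum intersection between the stage~2 QC for $b$ and the stage~1 QC for $b'$ yields an honest $p$ who sent a stage~2 vote for $b$ in view $v$ (thereby setting $Q^+$ with $\mathtt{v}(Q^+)=v$) and a stage~1 vote for $b'$ in view $v'$. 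Using that honest $Q^+$ is non-decreasing in view within an epoch (it is raised only to the view of the block just locked, and admissibility guarantees the previous value was strictly smaller), admissibility of $b'$ for $p$ gives $\mathtt{v}(\mathtt{QCprev}(b'))\ge v$. As $\mathtt{QCprev}(b')$ is a stage~1 QC for $\mathtt{par}(b')$ at a view in $[v,v')$, the inductive hypothesis makes $\mathtt{par}(b')$, hence $b'$, a descendant of $b$. Applying this claim to the stage-1/stage-2 QCs guaranteed by the confirmation rule shows that any two $\mathcal{M}^*$-confirmed epoch-$e$ blocks are compatible.

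Next I would run the cross-epoch induction. Base case: the epoch-$0$ validators are fixed by the initial distribution, with Byzantine fraction $<1/3$ by the $\rho$-bound, so intra-epoch safety makes all confirmed epoch-$0$ blocks compatible; since the epoch-ending ones share height $x$ and compatibility at equal height forces equality, there is a unique confirmed epoch-$0$-ending block $b_0$. For the step, assume a unique confirmed epoch-$(e-1)$-ending block $b_{e-1}$ whose transactions $\mathtt{Tr}(b_{e-1})$ define the epoch-$e$ validating stake with Byzantine fraction $<1/3$. By the $M$-validity rules of Section~\ref{formal}, any epoch-$e$ block either has an epoch-$e$ parent or has an $\mathcal{M}^*$-confirmed epoch-$(e-1)$-ending parent, which by uniqueness must be $b_{e-1}$; walking up parent chains, every epoch-$e$ block descends from $b_{e-1}$ and carries $\mathtt{Tval}=\mathtt{Tr}(b_{e-1})$, so intra-epoch safety applies and yields a unique confirmed epoch-$e$-ending block $b_e$. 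Since $b_e$ is confirmed (hence its transactions are confirmed for honest players), the quasi-permissionless $\rho$-bound again gives Byzantine fraction $<1/3$ for the epoch-$(e+1)$ validating stake, closing the induction. As confirmed blocks of epoch $e$ occupy heights in $(ex,(e+1)x]$ and descend from $b_{e-1}$, the $b_e$ form a single chain and \emph{all} $\mathcal{M}^*$-confirmed blocks are totally ordered by ancestry. Consequently $\mathtt{Tr}(\cdot)$ of the greatest-height confirmed block is prefix-monotone across players and across time, which is exactly consistency (and, since no certificate of guilt can form under the safety claim, the recovery procedure is never triggered, consistent with the footnoted confirmation convention).

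The hard part will be the cross-epoch bookkeeping rather than the voting argument. Specifically, I must show the epoch-$e$ validator set is \emph{well-defined and identical} for every honest player—i.e.\ that the confirmed epoch-$(e-1)$-ending block is unique—and that the ``$<\tfrac13$ Byzantine stake'' invariant genuinely propagates from one epoch to the next; this is where the quasi-permissionless assumption and the precise phrasing of $\rho$-boundedness (over active players, relative to confirmed transactions) are essential, and where the active/inactive distinction for the denominator $\mathtt{N}(b)$ must be handled with care. A secondary subtlety is verifying $Q^+$-monotonicity across the epoch boundary given the $\mathtt{Qset}$ reset, and checking that the extra blocks of height $>(e+1)x$ built during epoch $e$ (Section~\ref{furtherc}) never enter the confirmed chain, so that the height-interval argument cleanly linearises all confirmed blocks.
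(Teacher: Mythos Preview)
Your proof is correct, but it takes a genuinely different route from the paper's. The paper's argument is a three-line contradiction that leverages Lemmas~\ref{lem1} and~\ref{lem2}: if some least epoch~$e$ saw incompatible confirmed blocks, Lemma~\ref{lem2} produces a certificate of guilt for epoch~$e$; quorum arithmetic forces that certificate to implicate at least $1/3$ of the epoch-$e$ validating stake; Lemma~\ref{lem1} says all implicated identifiers are Byzantine; and this contradicts $\rho<1/3$. In other words, the accountability machinery already built for the EAAC argument is recycled to prove consistency almost for free.

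You instead unpack the Tendermint safety argument directly---quorum intersection plus lock monotonicity, epoch by epoch---without ever mentioning certificates of guilt. This is perfectly valid (and indeed the content of your intra-epoch claim is essentially the contrapositive of ``a violation yields a certificate of guilt,'' while your honest-intersector argument is exactly the honest case of Lemma~\ref{lem1}'s proof). What the paper's packaging buys is brevity and modularity: Lemmas~\ref{lem1} and~\ref{lem2} are needed anyway for Theorem~\ref{posres}, so consistency becomes a corollary. What your approach buys is self-containment: a reader can see exactly why safety holds without first digesting the certificate-of-guilt abstraction, and your explicit cross-epoch induction makes the well-definedness of each epoch's validator set (and the propagation of the $<1/3$ invariant) visible rather than implicit. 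The caveats you flag---that $\rho$-boundedness is phrased over \emph{active} players while $\mathtt{N}(b)$ sums over all validating stake, and that this gap is closed by the quasi-permissionless assumption plus the fact that voters are necessarily active---are real and worth spelling out carefully, since the paper's short proof glosses over them.
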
 
\begin{proof} 
This follows almost directly from Lemmas \ref{lem1} and
\ref{lem2}. Suppose the adversary is $\rho$-bounded for $\rho<1/3$ and
that there exists a least epoch $e$ that sees a consistency violation,
i.e. such that incompatible blocks $b$ and $b'$ with
$\mathtt{e}(b)=\mathtt{e}(b')=e$  are both $\mathcal{M}^*$-confirmed.
Because QCs for blocks in epoch $e$ require votes from at least 2/3 of
the validating stake for epoch $e$, any certificate of guilt
implicates at least 1/3 of the validating stake for epoch $e$. By
Lemma~\ref{lem2}, $\mathcal{M}^*$ contains a certificate of guilt for
epoch~$e$, which implicates at least 1/3 of the validating stake for
epoch $e$. By Lemma 1, all the implicated identifiers must be
Byzantine.   This gives the required contradiction.
\end{proof}

\subsection{Establishing liveness} \label{posproof2}

Suppose the adversary is $\rho$-bounded for $\rho<1/3$. Lemma
\ref{lem3} establishes consistency in this case, which means that, if
two honest players are in the same epoch $e$, then their local values
$\mathtt{lead}(M,e,v)$ agree for all $v$.  We may therefore refer
unambiguously to the value $\mathtt{lead}(e,v)$ for each $e,v$. Let
$k^*$ take the minimum value $\geq 1$ such that, for any $e$ and $v$,
at least one $v' \in (v,v+k^*]$ must satisfy the condition that
$\mathtt{lead}(e,v')$ is honest. Note that $k^*$ is a function of
$x^*$ (the maximum number of validators), but is independent of
$\Delta^*$.

\begin{lemma} \label{lem5} Suppose the  adversary is $\rho$-bounded
  for $\rho<1/3$. Then the protocol is live with respect to a liveness
  parameter $T_l$ that is $O(k^* \Delta)$. 
\end{lemma}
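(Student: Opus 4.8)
The plan is to lift the Tendermint liveness argument of Section~\ref{Treview} to three stages of voting, while accounting for the epoch structure. Throughout I assume $\rho<1/3$, so that Lemma~\ref{lem3} gives consistency; in particular, all active honest players in a common epoch $e$ agree on the validating stake and hence on $\mathtt{lead}(e,\cdot)$, every QC must contain at least one honest vote, and, since Byzantine validating stake is below $1/3$ and honest stake above $2/3$, the honest players alone can assemble a stage-$s$ QC for any block on which they all vote. Recall also that each view occupies four consecutive timeslots ($4v\Delta$, $4v\Delta+\Delta$, $4v\Delta+2\Delta$, $4v\Delta+3\Delta$), so a view lasts time $4\Delta$ and views are synchronized to wall-clock time.

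First I would establish an epoch-synchronization fact. After GST, confirmations propagate within $\Delta$, so the local epoch value $\mathtt{e}$ of any two active honest players differs by at most one and any discrepancy is resolved within $\Delta$. Because advancing from a confirmed block of height $ex$ to one of height $(e+1)x$ requires $x$ successive confirmed blocks, and (by consistency together with the one-proposal-per-view rule) at most one new block is confirmed per view, each epoch spans at least $x>k^*$ views. Consequently, after GST there is always a window of at least $k^*$ consecutive views throughout which all active honest players share a single epoch $e$ and agree on $\mathtt{lead}(e,\cdot)$.

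Next I would prove a single-view confirmation lemma. Fix such a window beginning at time $\ge\text{GST}+\Delta$; by the definition of $k^*$ it contains a view $v$ with $\mathtt{lead}(e,v)$ honest. As in Section~\ref{Treview}, by the start of view $v$ the honest leader has received every stage-1 QC (for epoch $e$) that any honest player has used to set its lock $Q^+$, so the block $b$ it disseminates via the $\mathtt{disseminate\ new\ block}$ procedure satisfies $\mathtt{v}(Q^+)\le\mathtt{v}(\mathtt{QCprev}(b))$ for every honest player and is $M$-valid with $\mathtt{e}(b)=e$ and $\mathtt{v}(b)=v$; hence $b$ is admissible for all active honest players. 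They therefore disseminate stage 1, 2 and 3 votes at $4v\Delta+\Delta$, $4v\Delta+2\Delta$ and $4v\Delta+3\Delta$ respectively, each stage's votes arriving within $\Delta$ and forming the corresponding QC, so that $b$ acquires stage 1, 2 and 3 QCs and is confirmed for all honest players by time $4(v+1)\Delta$. Moreover, the procedure places into $b$ every transaction in the leader's $\mathtt{T}^*$ that is not already on the confirmed chain.

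Finally I would assemble the bound. A transaction delivered to an honest player at timeslot $t$ is, by transaction gossiping, in the $\mathtt{T}^*$ of every active honest player by $\max\{t,\text{GST}\}+\Delta$. Within a further $O(\Delta)$ the honest epochs synchronize, and within at most $k^*$ additional views an honest leader proposes a block that, by the previous paragraph, is confirmed one view later and contains the transaction (unless it was confirmed even earlier). Summing a constant number of $O(k^*\Delta)$ delays yields a liveness parameter $T_l=O(k^*\Delta)$, which depends only on $x^*$ and $\Delta$ and is in particular independent of $\Delta^*$ and of the epoch length $x$. I expect the main obstacle to be the interaction between the locking mechanism and epoch boundaries: one must ensure that the chosen honest-leader view lies strictly inside a window in which all active honest players already share the epoch $e$ (so that $Q^+$ has been correctly reset by $\mathtt{Qset}$ at the epoch's start and no honest player holds a stale lock inherited from a different epoch), since it is precisely the admissibility condition $\mathtt{v}(Q^+)\le\mathtt{v}(\mathtt{QCprev}(b))$ on which the argument relies.
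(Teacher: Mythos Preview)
Your approach mirrors the paper's, but there is a gap at the epoch boundary, and it is not the locking issue you anticipate. Recall from the confirmation rule that a block $b$ with $\mathtt{e}(b)=e$ is $M$-confirmed only if $\mathtt{h}(b)\le (e+1)x$; when $\mathtt{h}(b)>(e+1)x$, even once $b$ obtains stage 1, 2 and 3 QCs, only its ancestors of height $\le (e+1)x$ become confirmed, and the transaction $\mathtt{tr}$ you placed in $b$ is \emph{not}. This situation can arise inside your window: prior to GST, blocks of height $>(e+1)x$ in epoch $e$ may already carry stage-1 QCs (this is precisely the ``extra blocks'' mechanism of Section~\ref{furtherc}), so your honest leader, building on the highest-view stage-1 QC via $\mathtt{disseminate\ new\ block}$, may well propose at height $>(e+1)x$. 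Your single-view lemma then produces a block with three QCs but not a confirmed transaction, and your assembly (``confirmed one view later and contains the transaction'') breaks at exactly this point. Your claim that epochs span at least $x>k^*$ views does not rescue the argument, because the relevant window need not begin at the start of an epoch.

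The paper handles this by allowing for a single epoch transition. It first seeks an honest leader $v_1\le v_0+k^*$ in epoch $e$. If either some honest player enters epoch $e+1$ before view $v_1$ completes, or the block produced at $v_1$ has height $>(e+1)x$, then in either case all honest players are in epoch $e+1$ by timeslot $4(v_1+1)\Delta$, and the paper seeks a second honest leader $v_2\le v_1+1+k^*$ there. Because at most $k^*<x$ views of epoch $e+1$ have elapsed by $v_2$, the block proposed at $v_2$ necessarily has height $<(e+2)x$ and is genuinely confirmed together with $\mathtt{tr}$. The bound remains $O(k^*\Delta)$, but the two-pass case split is what makes the argument go through.
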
 

\begin{proof}
The proof shows (essentially) that each honest leader after GST
confirms a new block of transactions, but is complicated slightly by
transitions at the end of an epoch. 
Recall that, by the assumptions of
the quasi-permissionless setting, honest validators are always active.

Suppose the adversary is $\rho$-bounded for $\rho<1/3$.  As noted
above,
%Lemma \ref{lem3} establishes consistency in this case, which
%means that, if two honest players are in the same epoch $e$, then
%their local values $\mathtt{lead}(M,e,v)$ agree for all $v$. We may
%therefore
we can refer unambiguously to the value $\mathtt{lead}(e,v)$ for each
$e,v$.  Suppose that the environment sends a transaction $\mathtt{tr}$
to an honest player at $t$ and let
$t_0=\text{max} \{ t, \text{GST} \} +\Delta$. Let $e$ be the greatest
epoch that any honest player is in at $t_0$, and suppose that all
active honest players are in view $v_0$ at $t_0$. Let $v_1>v_0$ be the
least such that $\mathtt{lead}(e,v_1)$ is honest (noting that
$v_1\leq v_0+k^*$). There are now various possibilities to consider. 
If no honest player enters epoch $e+1$ by timeslot $4v_1\Delta +3\Delta$,
then let $Q^+$ be the lock amongst honest players at time
$4v_1\Delta-\Delta$ such that $\mathtt{v}(Q^+)$ is greatest.  The
leader for view $v_1$ will receive $Q^+$ and $\mathtt{tr}$ by time
$4v_1\Delta$.  Assuming $\mathtt{tr}$ has not already been confirmed,
this leader will therefore propose a block containing $\mathtt{tr}$
that is admissible for all honest players (when they judge
admissibility at timeslot $4v_1\Delta +\Delta$).  All honest players
will then produce stage 1, 2 and 3 votes for that block $b$, and $b$
 will be confirmed by the end of view $v_1$.
 
  If $\mathtt{h}(b)\leq (e+1)x$ then $\mathtt{tr}$ will also be confirmed by the end of view $v_1$, and so the argument is complete in that case. If either $\mathtt{h}(b)> (e+1)x$ or  some honest player enters epoch $e+1$ before
timeslot $4v_1\Delta +3\Delta$,  all honest players will
be in epoch $e+1$ by timeslot $4(v_1+1)\Delta$.
 In this case, let $v_2>v_1+1$ be
the least such that $\mathtt{lead}(e+1,v_2)$ is honest (noting that
$v_2\leq v_1+1+k^*$). Then all honest players will be in epoch $e+1$
for the entirety of view $v_2$.  Let $Q^+$ be the lock amongst honest
players at time $4v_2\Delta-\Delta$ such that $\mathtt{v}(Q^+)$ is
greatest.  The leader for view $v_2$ will receive $Q^+$ and
$\mathtt{tr}$ by time $4v_2\Delta$.  Assuming $\mathtt{tr}$ has not
already been confirmed, this leader will therefore propose a block
containing $\mathtt{tr}$ that is admissible for all honest players at
timeslot $4v_2\Delta +\Delta$.  All honest players will then produce
stage 1, 2 and 3 votes for that block, and the block (and
$\mathtt{tr}$) will be confirmed by the end of view $v_2$.
%
%Towards a contradiction, suppose that only finitely blocks are $\mathcal{M}^*$-confirmed. Then there exists some greatest epoch $e$ such that an epoch $e$ ending block $b$ is $\mathcal{M}^*$-confirmed. All active players regard $b$ as confirmed by some timeslot $t_0$, and agree on the leader for each view $v$ such that $4v\Delta >t_0$ at all timeslots $\geq t_0$. We may therefore consider each such view to have a single leader, i.e.\ the value $\mathtt{lead}(M,\mathtt{e},v)$ for all honest players. Infinitely many views have honest leaders, so let $v$ be a view with honest leader that begins at time $t_1>\text{max} \{ t_0, \text{GST}+\Delta \}$ and after the last block has been confirmed. Let $Q^+$ be the lock amongst honest players at time $4v\Delta-\Delta$ such that $\mathtt{v}(Q^+)$ is greatest. 
%The leader for view $v$ will receive $Q^+$ by time $4v\Delta$, as well as all transactions seen by honest players prior to $4v\Delta-\Delta$, and will therefore propose a block that is admissible for all honest players at timeslot $4v\Delta +\Delta$. 
%All honest players will then produce stage 1, 2 and 3 votes for that block, giving the required contradiction. 
\end{proof}

\subsection{Proving that the recovery procedure carries out slashing}  \label{verrp}

\begin{lemma} \label{lem6} 
Suppose the adversary is $\rho^*$-bounded for $\rho^*<2/3$ and that
all message delays prior to GST are at most $\Delta^*$. If a
consistency violation occurs in some least epoch $e$,  then the
recovery procedure produces a unique updated genesis block $b_g'$ such that:
\begin{itemize} 

\item $\mathtt{Tr}(b_g')$ contains a certificate of guilt that
  implicates at least 1/3 of the validating stake for epoch~$e$.

\item $b_g'$ receives output votes from more than half the validating stake for epoch $e$ that is not implicated by certificates of guilt in  $\mathtt{Tr}(b_g')$. 

\end{itemize}
\end{lemma}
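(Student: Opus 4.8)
The plan is to bootstrap agreement on the post-slashing state out of the repeated Dolev-Strong instances, exploiting the fact that---although consistency may fail once $\rho^*<2/3$---everything through the end of epoch $e-1$ is still agreed upon because $e$ is the \emph{least} epoch with a consistency violation. First I would pin down the data shared by all honest players. Since epoch $e-1$ sees no consistency violation, there is a unique $\mathcal{M}^*$-confirmed epoch $e-1$ ending block $b$, and every honest player reaching recovery agrees on $b$, on $\mathtt{T}:=\mathtt{Tr}(b)$, and hence on the validating stake for epoch $e$, its size $k$, and the rotating leader schedule $\mathtt{reclead}(\cdot,i)$ (recall each validating identifier carries equal weight $N/x^*$). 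By Lemma~\ref{lem2}, the violation in epoch $e$ forces a certificate of guilt for epoch $e$ into $\mathcal{M}^*$, received by every active honest player within $2\Delta^*$ of the first honest player entering epoch $e+1$ or starting recovery. Combined with the triggering conditions and the $2\Delta^*$ waiting rule (which ensures recovery is triggered for the single least violating epoch), this shows every honest player sets $\mathtt{rec}:=1$ and $\mathtt{e}=e$ before the common recovery start time $t_0=4\Delta((e+2)x+1)$; the choice of $x$ (with $x\Delta>\Delta^*$ and epoch length exceeding $2\Delta^*$) is what makes $t_0$ late enough. Finally, since the adversary is $\rho^*$-bounded with $\rho^*<2/3$, honest players hold strictly more than $\tfrac13$ of the epoch $e$ validating stake, so at least one validating identifier is honest.

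With this common setup, each instance $i$ is exactly the Dolev-Strong protocol of Section~\ref{DSreview} run over the $k$ validating identifiers, with round length $\Delta^*$ (legitimate since message delays before GST are at most $\Delta^*$ by hypothesis) and with the modification that honest players only enumerate \emph{$M$-admissible} genesis proposals into $O_p$. Because honest players agree on $\mathtt{e}=e$ and on $\mathtt{T}$, the notion of $M$-admissibility is uniform across them (a proposal is admissible iff $\mathtt{T}(b_g')=\mathtt{T}\ast\mathtt{G}\ast\mathtt{tr}$ for a certificate of guilt $\mathtt{G}$ for epoch $e$, and cert-of-guilt validity is an objective property of the attached QCs), so the standard two-case relaying argument still yields agreement: all honest players compute the same $O_p$, hence the same output (a common value or $\bot$), in every instance, with tolerance $f=k-1$ matching the $k$ rounds and the at-most-$(k-1)$ Byzantine identifiers. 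Termination I would get from the first honest-leader instance: the leaders cycle through all $k$ identifiers, so some index $i^\dagger\le k-1$ has an honest leader, which can propose an $M$-admissible block because it already holds a certificate of guilt for epoch $e$ (Lemma~\ref{lem2}); Dolev-Strong with an honest leader makes every honest player output precisely that value. Since honest players terminate at the \emph{first} instance producing a singleton $O_p$ and they compute identical $O_p$'s, they terminate at a common instance on a common, necessarily $M$-admissible, block $b_g'$, giving uniqueness.

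The first bullet is then immediate: being $M$-admissible, $\mathtt{Tr}(b_g')$ contains a certificate of guilt $(Q,Q')$ for epoch $e$, and---exactly as in the proof of Lemma~\ref{lem3}---the stage $2$ QC $Q$ and the stage $1$ QC $Q'$ each carry votes from at least $\tfrac23$ of the epoch $e$ validating stake, so their overlapping identifiers, all Byzantine by Lemma~\ref{lem1}, account for at least $\tfrac13$ of that stake. For the second bullet I would run the counting in stake terms: writing $\mathtt{N}$ for the total epoch $e$ validating stake, $h$ for its honest part, and $I$ for the stake implicated by certificates of guilt in $\mathtt{Tr}(b_g')$, every honest validating identifier disseminates an output vote for $b_g'$ (all are active by quasi-permissionlessness and all see the same singleton $O_p$), so $b_g'$ collects at least $h$. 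Since $I$ is entirely Byzantine (Lemma~\ref{lem1}) it is disjoint from the honest stake, the non-implicated stake is $\mathtt{N}-I$, and the target $h>\tfrac12(\mathtt{N}-I)$ is equivalent to $2h+I>\mathtt{N}$, which holds because $\rho^*<2/3$ gives $h>\tfrac13\mathtt{N}$ (so $2h>\tfrac23\mathtt{N}$) while the first bullet gives $I\ge\tfrac13\mathtt{N}$.

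The main obstacle I anticipate is the first paragraph rather than the Dolev-Strong machinery: because consistency is unavailable beyond epoch $e-1$ under a $\rho^*<2/3$ adversary, I must argue purely from ``$e$ is least'' and from Lemma~\ref{lem2} that all honest players enter recovery synchronously, agree on the \emph{same} epoch $e$ and the \emph{same} validating set, and do so before $t_0$---without this shared player set the instances would not even be comparable across players. The quantitative heart, by contrast, is the clean interlocking of the three constants in the final count, where the adversary bound $\rho^*<2/3$, the $\tfrac13$-implication guarantee, and the $\tfrac12$ threshold combine to give $2h+I>\mathtt{N}$ with strict slack.
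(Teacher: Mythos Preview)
Your overall strategy matches the paper's: use uniqueness of the epoch $e{-}1$ ending block to fix the shared validator set, run the Dolev--Strong agreement argument instance by instance, terminate at an honest leader, and verify the two bullets via QC overlap and the stake inequality $2h+I>\mathtt{N}$. The second and third paragraphs are essentially what the paper does, and your counting for the second bullet is clean.

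The gap is precisely where you anticipated. Your claim that every honest validator sets $\mathtt{rec}{:=}1$ before $t_0=4\Delta((e{+}2)x{+}1)$ is not justified and need not hold. Since liveness can fail once $\rho^*\ge\tfrac13$, epoch $e$ may occupy arbitrarily many views; the incompatible epoch-$e$ blocks may carry view numbers far beyond $(e{+}2)x$, and the adversary may delay completing and revealing the relevant QCs until long after $t_0$. Lemma~\ref{lem2} bounds receipt of a certificate only \emph{relative} to the first time some honest player enters epoch $e{+}1$ or triggers recovery---it gives no absolute bound tying that moment to $t_0$, so ``the choice of $x$ makes $t_0$ late enough'' is not an argument here.

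The paper sidesteps this by explicitly \emph{not} assuming synchronous entry into recovery (it writes ``the recovery procedure might not be triggered by time $t_i$'') and instead argues per instance: if the honest leader for instance $i$ has not yet received a certificate of guilt, that instance is a no-op for all honest players; if the honest leader \emph{has} received one, then---since certificates of guilt are gossiped transactions---every honest $p'\in P$ receives it and triggers recovery by $t_i+\Delta^*$, in time to participate in instance $i$. Termination thus comes from the first instance whose honest leader already holds a certificate (and has not previously terminated), not merely the first honest-leader instance. Replace your absolute timing claim with this relative, per-instance argument and the rest of your proof goes through.
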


\begin{proof} 
The argument resembles that in
%This follows essentially from the correctness of the Dolev-Strong
%protocol, which was already established in
Section~\ref{DSreview}. %Suppose given 
We again use the fact, implied by the assumptions
of the quasi-permissionless setting, that honest validators are always
active.

Assume the conditions in the statement
of the lemma, and let $b$ be an epoch $e-1$ ending block that is
$\mathcal{M}^*$-confirmed. Because $e$ is the least epoch in which a
consistency violation occurs, $b$ is uniquely determined. Set
$\mathtt{T}:=\mathtt{Tr}(b)$. Let $id_0,\dots,id_{k-1}$ be the
enumeration of the identifiers $id$ such that
$\mathtt{S}(\mathtt{S}^*,\mathtt{T},id)>0$ as specified in Section
\ref{rp}, and set $\mathtt{Id}:= \{ id_0,\dots,id_{k-1} \}$. Let $P$
be the set of players with identifiers in $\mathtt{Id}$.  For
$i\in \mathbb{N}_{\geq 0}$, define $\mathtt{reclead}(i)$ to be the
identifier $id_j$, where $j= i \text{ mod } k$. Note that
$\mathtt{reclead}(i)=\mathtt{reclead}(M,i)$ however $M$ is locally
defined during the recovery procedure for honest $p\in P$.

For each $i\in \mathbb{N}_{\geq 0}$, let $t_i$ be as defined in
Algorithm 2, so that the $i^{\text{th}}$ instance of Dolev-Strong
begins at time $t_i$. Note that the recovery procedure might not be
triggered by time $t_i$. If $p=\mathtt{reclead}(i)$ is honest, and if
$p$ has not received a certificate of guilt for epoch $e$ by $t_i$,
then $O_{p'}$ will remain empty for every honest $p'\in P$ during the
$i^{\text{th}}$ instance of Dolev-Strong, and no honest player in $P$
will output a vote in this instance. If $p=\mathtt{reclead}(i)$ is
honest, and if $p$ has received a certificate of guilt for epoch $e$
by $t_i$ and has not terminated the recovery procedure by time $t_i$,
then $p$ will disseminate a genesis proposal $(b_g',i)$ which is
$M$-admissible, where $M$ is as locally defined for any honest
$p'\in P$ during the recovery procedure.  $\mathtt{Tr}(b_g')$ will
contain a certificate of guilt $\mathtt{G}$ implicating at least 1/3
of the validating stake for epoch $e$. Every honest player $p'\in P$
will trigger the recovery procedure by time $t_i+\Delta^*$ and will
enumerate this single value $(b_g',i)$ into $O_{p'}$ during this
instance of Dolev-Strong. All honest players in $P$ will then produce
output votes for $b_g'$ in this instance. Since the honest players
control a majority of the validating stake for epoch $e$ that is not
implicated by $\mathtt{G}$ (using that $\rho^* < 2/3$), $b_g'$
receives output votes from more than half the validating stake for
epoch $e$ that is not implicated by certificates of guilt in
$\mathtt{Tr}(b_g')$.
 
If $p=\mathtt{reclead}(i)$ is Byzantine and if $p'\in P$ is honest and
enumerates some value into $O_{p'}$ during the $i^{\text{th}}$
instance of Dolev-Strong, it remains to show that every honest
$p''\in P$ will do the same. This suffices to show that either all
honest players in $P$ will vote for a common updated genesis block and
end the recovery procedure, or none will.
%terminate during this instance and vote for
%the same value $b_g'$, or else no honest players in $P$ will terminate
%during this instance. 
The argument is the same as in Section~\ref{DSreview}.  There are two
cases to consider:
\begin{itemize} 
\item \textbf{Case 1}. Suppose that some honest $p'\in P$ first
  enumerates $y$ into $O_{p'}$ at time $t_i+ j\Delta^*$ with $j<k$. In
  this case, $p'$ receives a message of the form $m=y_{id_1',\dots
    ,id_j'}\in \mathtt{Signed}(M,i,j)$ at  $t_i+ j\Delta^*$. Player
  $p'$  then adds their signature to form a message with $j+1$
  distinct signatures and disseminates this message. This means every
  honest player $p''\in P$  will enumerate $y$ into~$O_{p''}$   by
  time $(j+1) \Delta^*$. 

\item \textbf{Case 2}. Suppose next that some honest $p'$ first enumerates $y$ into $O_{p'}$ at time $t_i+k\Delta^*$. In this case, $p'$ receives a message of the form $m=y_{id_1',\dots ,id_k'}\in \mathtt{Signed}(M,i,k)$ at time $t_i+k\Delta^*$, which has $k$ distinct signatures attached. At least one of those signatures must be from an honest player $p''$, meaning that Case 1 applies w.r.t.\  $p''$.  
\end{itemize} 
\end{proof} 

\vspace{0.2cm}
\noindent \textbf{A further relaxation of the synchronicity assumption.}
As foreshadowed in footnote~\ref{foot:limited},
bounded message delays are required only during a
  limited period around the time of an attack.
Specifically,
suppose there exists a consistency violation in some
least epoch $e$. Let $t_0^*$ be the least timeslot at which some
honest player enters epoch $e+1$ or triggers the recovery
procedure. While the statements of Lemmas \ref{lem2} and \ref{lem6}
assume for the sake of simplicity that message delays prior to GST are
always bounded by $\Delta^*$, the following weaker assumption suffices
for the proofs:
If $p$ disseminates a message at any timeslot $t_1^*$, and if
honest $p'$ is active at
$t_2^*:= \text{max} \{ t_1^*, t_0^*\} +\Delta^*$, then~$p'$ receives
that message at a timeslot $\leq t_2^*$.

\subsection{Completing the proof of Theorem~\ref{posres}:
  Establishing the EAAC property} \label{instant}

To complete the proof of Theorem \ref{posres}, we must specify $t_f$, as well as the investment and valuation functions. 

\vspace{0.2cm} 
\noindent \textbf{Defining $t_f$}. If there is a consistency violation in some
least epoch $e$, let $t_f$ be any timeslot after which all honest
players active during epoch $e$ have terminated the recovery procedure
and outputted an updated genesis block $b_g'$. By Lemma
\ref{lem6}, such a timeslot $t_f$ exists, and $b_g'$ is uniquely
defined and can be determined from the locally defined $M$ values of
active honest players at  $t_f$.

\vspace{0.2cm} 
\noindent \textbf{The canonical PoS investment function}. Let $C$ be
the cost of each unit of stake. To specify $\mathtt{R}(p,t)$, we
define $\mathtt{R}(id,t)$ for each identifier $id$ and then set
$\mathtt{R}(p,t)=\sum_{id\in \mathtt{id}(p)} \mathtt{R}(id,t)$. 
The rough idea is that a player's investment should correspond to the
value of their validating stake, the value of any of
their stake which is no longer validating but remains in escrow,
and the value of any of their stake which was lost to slashing.
Formally, for a timeslot~$t$, 
denote by $e(t)$ the greatest epoch 
such that, for all $e\leq e(t)$, there exists a unique epoch $e$
ending block that is confirmed for some honest player at $t$. 
(If no such epoch exists, define~$b_1(t)$ and $b_2(t)$ as the genesis
block.)
Let $b_1(t)$ be the unique epoch $e(t)$ ending block
that is confirmed for some honest player at $t$, and
$b_2(t)$ the unique epoch $e(t)-1$ ending block that is confirmed for
some honest player at $t$. (Or if~$e(t)=0$, define~$b_2(t)$ as the
genesis block.)
First, suppose that no honest player has triggered the recovery
procedure by $t$. Then, if either:
\begin{itemize}
\item [(i)]
the identifier is currently a validator (i.e., $\mathtt{S}(\mathtt{S}^*,\mathtt{Tr}(b_1(t)),id)>0$);
\item [(ii)]
the identifier no longer controls a validator but their stake
  has not yet been removed from escrow (i.e., $\mathtt{S}(\mathtt{S}^*,\mathtt{Tr}(b_2(t)),id)>0$); or
\item [(iii)]
the identifier has been slashed (i.e., $\mathtt{Tr}(b_1(t))$
  includes a certificate of guilt   implicating~$id$),
\end{itemize}
we define $\mathtt{R}(id,t)=N/x^*$; otherwise, $\mathtt{R}(id,t)=0$.

Now suppose that some honest player has triggered the recovery
procedure by $t$, due to a consistency violation in
epoch~$e(t)+1$. Here, because the validating stake from epoch $e(t)-1$
need not still be in escrow, we consider only~$b_1(t)$. That is,
we define $\mathtt{R}(id,t)=N/x^*$ if~(i) or~(iii) holds, and
$\mathtt{R}(id,t)=0$ otherwise.

This investment function is $\Gamma$-liquid in the sense of
Section~\ref{ss:eaac}, provided $\Gamma$ is chosen to be larger than
the maximum time needed to complete an epoch after GST (which is
proportional to $\Delta \cdot x > \Delta^*$).

\vspace{0.2cm} 
\noindent \textbf{The canonical valuation function}. If there is no
consistency violation at timeslots $\leq t$, then, for any set of
players $P$, we define the valuation function as the value (at the
market price~$C$) of the investments made (and not already recouped)
by~$P$:
\[ v(P,t,\{ \mathtt{R}(p,t) \}_{p \in P},C, \mathcal{M})):= \sum_{p\in
    P} \mathtt{R}(p,t) \cdot C, \]
where $\mathcal{M}$ specifies the sets of messages received by each
player by timeslot~$t$ in the execution.

If the event of a consistency violation, the valuation function should
discount stake that has been slashed.  Formally, consider a
consistency violation and let $t_f$ and $b_g'$ be as specified above.
Let $\mathtt{Id}$ be the set of identifiers that are not implicated by
any certificates of guilt in $\mathtt{Tr}(b_g')$.  For any set of
players $P$ and $t \ge t_f$, we set:
\[ v(P,t,\{ \mathtt{R}(p,t) \}_{p \in P},C, \mathcal{M})):= \sum_{p\in
    P} \sum_{id\in \mathtt{id}(p) \cap \mathtt{Id}} \mathtt{R}(id,t)
  \cdot C,  \]
where $\mathcal{M}$ specifies the sets of messages received by each
player by timeslot~$t$ in the execution. 
This valuation function is consistency-respecting in the sense of
Section~\ref{ss:eaac}. (It is generally undefined following a
consistency violation, up until the timeslot~$t_f$ at which the recovery
procedure completes.)

%So, roughly, this is the sum
%of all non-slashed stake held in escrow  by each identifier belonging to
%$p\in P$. 

Because Lemma \ref{lem1} establishes that certificates of
guilt can implicate only Byzantine identifiers, condition (a) in
Definition \ref{d:eaac} is satisfied.  By Lemma \ref{lem6}, in the
case of a consistency violation in some least epoch $e$, $\mathtt{Tr}(b_g')$
contains certificates of guilt for at least
1/3 of the validating stake for epoch $e$, so condition (b) in
Definition \ref{d:eaac} is also satisfied and the protocol is
$(\alpha,\rho,\rho^*)$-EAAC for 
$\alpha = \max\{0, (\rho^*-\tfrac{1}{3})/\rho^* \}$.
$\qed$

\vspace{0.2cm} 
\noindent \textbf{Weakening liveness requirements to achieve the EAAC property for larger adversaries}. Thus far, we have supposed that the protocol is required to be live so long as the adversary is $\rho$-bounded for $\rho<1/3$.  If the protocol  is only required to be live for adversaries that are $\rho_l$-bounded (for some $\rho_l<1/3$), then a simple modification suffices to ensure the protocol is EAAC with respect to the canonical investment and valuation functions  when the adversary is $\rho$-bounded for  $\rho<1-\rho_l$. To describe the appropriate modification, consider condition (iv) from the definition of a quorum certificate in Section \ref{formal}, and suppose we require instead that  $\sum_{V\in Q} \mathtt{c}(V)\geq  (1-\rho_l) \mathtt{N}(b)$. For the proof of Lemma \ref{lem2}, we  required that every quorum certificate must
contain at least one vote from an honest player.  With our modified notion of a quorum certificate, these conditions will now hold so long as the adversary is $\rho$-bounded for  $\rho<1-\rho_l$. 
Since any certificate of guilt
now implicates at least a fraction $1-2\rho_l$ of the validating stake for epoch $e$, Lemma \ref{lem3} will now hold under the weaker condition that the adversary is $\rho$-bounded for $\rho<1-2\rho_l$. 
The proof of Lemma \ref{lem5} goes through unchanged, so long as the adversary is $\rho_l$-bounded. The proof of Section \ref{verrp} also goes through unchanged.

\vspace{0.2cm} 
\noindent \textbf{How is the impossibility result in Theorem
  \ref{neg2} avoided?}  As discussed at the end of
Section~\ref{neg2proof}, the proof of Theorem~\ref{neg2} shows that
non-trivial EAAC protocols are possible only if the protocol's
liveness parameter or the investment function's liquidity parameter
scales with the worst-case message delay.  In Theorem~\ref{posres},
crucially, we do assume a finite bound~$\Delta^*$ on the worst-case
message delay.  The liveness parameter~$T_l$ of the protocol scales
only with~$\Delta$ and the maximum number~$x^*$ of validators, and
thus may be arbitrarily smaller than the worst-case message
delay~$\Delta^*$.  The canonical PoS investment function, however,
is~$\Gamma$-liquid only for values~$\Gamma$ that exceed the minimum
length of an epoch, which is, by our choice of the number~$x$ of
blocks per epoch, at least $4\Delta x > 4\Delta^*$ timeslots.  This
delay, combined with our use of three stages of voting (and using the
fact that $\rho^*<2/3$), allows us to ensure that any consistency
violation is seen by honest players before those responsible for it
can cash out of their positions (with honest
players learning of a consistency violation in epoch $e$ within
$2\Delta^*$ timeslots of any honest player completing the epoch).

\subsection*{Acknowledgments}

The research of the third author at Columbia University is supported
in part by NSF awards
CCF-2006737 and CNS-2212745, and research awards from the Briger
Family Digital Finance Lab and the Center
for Digital Finance and Technologies.

\bibliographystyle{plainurl}

\appendix

\section{Modeling PoW in Bitcoin} \label{aa:bitcoin} 

In this section, we show how to model proof-of-work, as used in the Bitcoin
protocol, using the permitter formalism.  The basic idea is that, at
each timeslot, a player should be able to request permission to
publish a certain block, with the permitter response depending on the
player's resource balance (i.e., hashrate).  To integrate properly
with the ``difficulty adjustment'' algorithm used by the Bitcoin
protocol to regulate the average rate of block production (even as the
total hashrate devoted to the protocol fluctuates), we define the
permitter oracle so that its responses can have varying ``quality.''
Each PoW oracle response will be a 256-bit string~$\tau$, and we can
regard the quality of $\tau$ as the number of 0s that it begins with.

\vspace{0.2cm} 
\noindent \textbf{An example}. As an example, suppose that
$\mathtt{R}^O(p,t)=5$. Then $p$ might send a single query
$(5,\sigma)$ at timeslot $t$, and this should be thought of as a
request for a PoW for $\sigma$ (which could represent a block
proposal, for example).  If $p$ does not get the response that it's
looking for, it might,
at a later timeslot $t'$ with $\mathtt{R}^O(p,t')=5$, 
send another query $(5,\sigma)$ to the PoW oracle.

\vspace{0.2cm} 
\noindent \textbf{Formal details}.  We consider a single-use
permitter. All queries to the permitter must be of the form
$(b,\sigma)$, where $\sigma$ is any finite string.  As already
stipulated in Section \ref{mec}, the fact that $O$ is a single-use permitter means
that $p$ can send the queries
$(b_1,\sigma_1),\dots (b_k,\sigma_k)$ at the same timeslot $t$ only if
$\sum_{i=1}^k b_i\leq \mathtt{R}^O(p,t)$. If $p$ submits the query $(b,\sigma)$ to the permitter at $t$, then the permitter independently samples
$b$-many 256-bit strings uniformly at random and lets $\tau$ be the
lexicographically smallest of these sampled strings. Player $p$ then receives the response  $r:=(\sigma,\tau)$ signed by the oracle.

\end{document}